\setlist[description]
{labelindent=3mm,labelwidth=10mm,leftmargin=15mm,itemsep=1mm}
\setlist[itemize]
{labelindent=3mm,labelwidth=10mm,leftmargin=15mm,itemsep=1mm}
\DeclareMathOperator{\noidx}  {noidx}
\DeclareMathOperator{\income} {in}
\DeclareMathOperator{\sign}   {sign}
\DeclareMathOperator{\paths}  {paths}
\DeclareMathOperator{\mult}   {mult}
\DeclareMathOperator{\supp}   {supp}
\DeclareMathOperator{\PRE}    {pre}
\DeclareMathOperator{\POST}   {post}
\DeclareMathOperator{\reach}  {reach}
\DeclareMathOperator{\eff}    {eff}
\DeclareMathOperator{\pred}   {pred}
\newcommand{\STEP}[2]    {\boldsymbol{[}#1\boldsymbol{\rangle}_{\!#2}\,}
\newcommand{\reverse}[1] {\overline{#1}}
\newcommand{\IDX}[1]     {\langle #1\rangle}
\newcommand{\munion}     {\sqcup}
\newcommand{\Idx}        {\textit{idx}}
\newcommand{\rev}        {\mathit{rev}}
\newcommand{\seq}        {\mathit{seq}}
\newcommand{\set}        {\mathit{set}}
\newcommand{\spike}      {\mathit{spike}}
\newcommand{\setrev}     {\mathit{srev}}
\newcommand{\mixrev}     {\mathit{mrev}}
\newcommand{\splitrev}   {\mathit{split}}
\newcommand{\STS}        {\mathit{STS}}
\newcommand{\CRG}        {\mathit{CRG}}
\newcommand{\es}         {\varnothing}
\newcommand{\ralpha}     {\reverse{\alpha}}
\newcommand{\rbeta}      {\reverse{\beta}}
\newcommand{\rT}         {\reverse{T}}
\newcommand{\rSTS}       {\reverse{\STS}}
\newcommand{\wN}         {{\widetilde N}}
\newcommand{\wF}         {\widetilde F}
\newcommand{\wM}         {\widetilde M}
\newcommand{\wP}         {\widetilde P}
\newcommand{\xright}[1]  {\xrightarrow{~#1~}}
\newcommand{\bt}         {\blacktriangleleft}
\newcommand{\proj}[2]    {#1|_{#2}}
\begin{document}

\setcounter{page}{67}
\publyear{2021}
\papernumber{2082}
\volume{183}
\issue{1-2}

  \finalVersionForARXIV

\title{Investigating Reversibility of Steps in Petri Nets}

\author{David~de~Frutos~Escrig
         \\
			 Facultad de Ciencias Matem{\'a}ticas\\
			 Universidad Complutense de Madrid\\
             28040 Madrid, Spain\\
             defrutos@sip.ucm.es
\and
             Maciej Koutny\\
             School of Computing\\
             Newcastle University \\
             Newcastle upon Tyne NE1 7RU, U.K.\\
             maciej.koutny@ncl.ac.uk \vspace*{2mm}\\
\and
             { \L }ukasz Mikulski\thanks{Address for correspondence:  Faculty of Mathematics and Computer Science,
             Nicolaus Copernicus University,  Chopina 12/18, 87-100 Toru\'n, Poland. \newline Also affiliated at: Institute of Computer
             Science, Polish Academy of Sciences, Jana Kazimierza 5, 01-248 Warsaw, Poland}
                    \\
             Faculty of Mathematics and Computer Science\\
             Nicolaus Copernicus University\\
             Chopina 12/18, 87-100 Toru\'n, Poland\\
             lukasz.mikulski@mat.umk.pl
			}

\runninghead{D.~de~Frutos~Escrig et al.}{Investigating Reversibility of Steps in Petri Nets}

\maketitle

\vspace*{-6mm}
\begin{abstract}
In reversible computations one is interested in the development of
mechanisms allowing
to undo the effects of executed actions.
The past research has been concerned mainly with reversing single actions.
In this paper, we consider the problem of reversing the effect of
the execution of groups of actions (steps).

Using Petri nets as a system model,
we introduce concepts related to this new scenario, generalising
notions used in the single action case. We then present
properties arising when reverse
actions are allowed in place/transition nets (\textsc{pt}-nets).
We obtain both positive
and negative results, showing that allowing steps makes
reversibility more problematic than in the interleaving/sequential case.
In particular, we demonstrate that there is a crucial difference
between reversing steps which are sets and those which are
true multisets.
Moreover, in contrast to sequential semantics, splitting reverses does not
lead to a general method for reversing bounded \textsc{pt}-nets.
We then show that a suitable solution can be obtained by
combining split reverses with weighted read arcs.
\end{abstract}
\begin{keywords}
Petri net,
reversible computation,
step semantics,
action splitting,
net synthesis,
direct reversibility,
mixed reversibility,
weighted activator arcs
\end{keywords}

\section{Introduction}

Reversibility of (partial) computations
has been extensively studied during the past years,
looking for mechanisms that allow to (partially) undo some actions executed during
a computational process, that for some reason one needs to cancel.
As a result, the execution can then continue
from a consistent state as if that suppressed action had not been executed at all.
In particular, these mechanisms allow for the correct implementation of
transactions~\cite{DanKri04,DanKri05},
that are partial computations which either are
totally executed or not executed at
all. This includes updating
in databases, so that one never commits
an `incomplete' set of related updates
that might produce an inconsistent state
(in which one could infer contradictory
facts).
Another example would be money transfers
between banks,
or modern
e-commerce platforms, where the payments received
should match the goods distributed~\cite{cohen2013systems}.

\medskip
Within  Formal Methods, reversibility has been
investigated, for instance, in the
framework of process calculi~\cite{PhiUli07,LanMezSte10},
event structures~\cite{PhiUl15}, DNA-computing~\cite{CarLan11},
category theory~\cite{DanKriSob07},
and quantum
computing~\cite{DBLP:books/daglib/0025734}.
In the latter case, it plays a central role due to
the inherent reversibility of the mechanisms on
which quantum computing is based.
This paper is concerned with reversibility in
place/transition nets (\emph{\textsc{pt}-nets}),
which are a fundamental class of \emph{Petri nets},
operating according to the step semantics in
which multisets of actions (\emph{steps})
are executed simultaneously.

In Petri nets, reversibility is usually understood as a global property
resembling cyclicity.
It was also considered in a manner
closer to its process calculi meaning using symmetric nets~\cite{EspNie94}
(symmetric nets have later been used
to study structural symmetries  of state spaces~\cite{Kor11}).
Locally defined reversibility has not yet been extensively studied
within the Petri net framework.
This is rather surprising as the formalisation
of an action by means of a pair of
\emph{pre-places} and \emph{post-places}
provides an immediate way of defining the
\emph{reverse} of the actions  simply by
interchanging these two sets of places.
There are, however, some more recent
works in which reversibility is understood as cyclicity
(i.e., an ability to return to the initial state from any reachable state).
They are usually based on the structure theory of Petri nets~\cite{HujDelKor15},
or an algebraic study by means of invariants~\cite{ApaAyb08}.

From the operational point of view, one can distinguish three essential ways
of reversing computational processes: backtracking, causal reversibility, and
out of causal reversibility.
For concurrent systems, the backtracking mode was
considered, for example, in~\cite{DanKri04}, where the RCCS process algebra
is introduced.
An investigation of causal reversibility in the Petri net context  can
be found, for example, in~\cite{DBLP:conf/coordination/MelgrattiMU19}, where
it was implemented using occurrence nets.
All three ways of reversing computations were studied
in~\cite{DBLP:conf/rc/PhilippouP18}, where biologically motivated
reversing Petri nets were introduced.
In all these works, one needs to enrich the original model by
additional annotations or constructs.
It is the memory of monitored processes for RCCS, the computation stack encoded
through colours for folded occurrence nets, and atoms and bonds together with the
history function for reversing Petri nets.
In our approach, we are interested in studying the possibility of
reversing computations in step semantics emphasizing reversing the effects,
and avoiding the reachability of new states.
The latter ensures that one can reach only states that are reachable
by forward computations,
which differentiates  our approach from the out of causal reversibility discussed
in~\cite{DBLP:conf/rc/PhilippouP18}.
We also do not equip our nets with
additional external monitors which help to ensure causality.
As a result,
it may happen that reverses of
actions that were not yet executed become enabled.
This inconvenience can, however, be easily removed by suitably
augmenting a \textsc{pt}-net
being reversed to yield another net, as described in~\cite{Bologna}.

The approach presented in this paper is closer to inverse nets presented
in~\cite{BouFin97}, and so more \emph{operational}.
It extends the study of reversing (sequential)
transition systems initiated in~\cite{Bologna}, where
it was shown that the apparent simplicity of this
approach is far from trivial, mainly due to the difficulty
of avoiding situations where an added reverse action is
executed in an inconsistent manner.
Further investigation of
this problem can be found in~\cite{LanMik},
while~\cite{csp2016}
considers  \emph{bounded}
\textsc{pt}-nets, distinguishing between the
\emph{strict} reverses and \emph{effect} reverses of actions.
The latter deliver the effect of reversing the original actions,
but possibly with a change in the way action enabling is
carried out.
It was shown that some transition systems
which can be \emph{solved} by bounded nets allow
the reversal of their actions by means of single
reverse actions, while in other cases
the reversal is only possible if
\emph{splitting} of reverses is allowed
(i.e., each action has a set of reverses which collectively
provide means of reversing the original action).

In~\cite{csp2016} only the sequential (\emph{interleaving})
semantics of nets was considered and, in
fact, several of the presented examples were just (finite)
\emph{linear transition systems}, taking advantage of the
results presented in~\cite{ATAED,PSC}, where binary words representable
by Petri net were characterised.
The latter problem and its consequences
for reversibility has been further investigated in~\cite{PN18}.\smallskip

\medskip\noindent
\textbf{About this paper~~}
We consolidate and extend the results of~\cite{pn2019}, where the study of
\emph{step reversing} in \textsc{pt}-nets and (step) transition systems
was initiated.
We assume that the transition systems to be
synthesized include information about
the multisets of  actions (steps) that should be
executed  in parallel. Reversing of the
actions should preserve this step information so
that the simultaneous firing of several reverse
actions should correspond
to the original steps at the system represented by a \textsc{pt}-net.

We introduce several concepts related to this new scenario,
generalising notions used in the single action case.
A number of straightforward definition which worked in the sequential case
are no longer adequate.
When looking for their adequate generalisations,
we  identify two `natural' notions
of step reversibility.
The former (\emph{direct reversibility}) only allows
steps which comprise either
the original actions, or the reverse actions.
The latter (\emph{mixed reversibility})
allows also mixing of the original and reverse actions.
It turns out that these two ways of interpreting step
reversibility are fundamentally different.
Crucially, the direct reversibility
cannot be implemented for steps which are true multisets,
and so in such cases one has to look for mixed reversibility solutions.
In this way, we identified a striking difference
between reversing steps which are sets and those which are
true multisets (when autoconcurrency
of actions in system executions is allowed).
However, there is still a
general positive result which basically applies whenever
sequential reversing is possible and
the original steps can be
be satisfactorily represented.

We also adapt split reverses introduced in~\cite{csp2016}.
Unfortunately, splitting is not enough to deal with all bounded \textsc{pt}-nets
(also adding inhibitor arcs to the \textsc{pt}-net model does not always help).
A general solution we propose uses \emph{weighted read arcs}~\cite{KleKou02}
(the further development of this model is out of the scope of this
paper, and is left as a topic for the future work).

The paper is organised as follows.
Section~\ref{sect-0} recalls
notions and notations used throughout the paper.
Moreover, some basic results concerning the step transition
model are given.
Section~\ref{se-ddd} introduces four different ways of
defining reversibility in step transition systems,
including direct step reversibility and mixed step reversibility,
as well as set reversibility (where a true multiset of actions is reversed
in stages) and split reversibility.
Section~\ref{sect-1} demonstrates that the direct reversibility
cannot be achieved in the presence of autoconcurrency.
Moreover, it characterises cases where mixed reversibility
can be replaced by (more desirable)
direct reversibility or set reversibility.
Section~\ref{sect-2} provides result allowing one
to deal with mixed reversibility and step
reversibility in an effective way, by reducing the
reversibility problem to the net synthesis
problem. This approach is further continued Section~\ref{sect-3}, where
lifting of sequential reversibility to
step reversibility is discussed.
Section~\ref{sect-split}  proposes a general solution to the step
reversibility of bounded \textsc{pt}-nets which relies on
the weighted read arcs.
Finally, Section~\ref{sect-tdt}
contains concluding remarks.

\section{Preliminaries}
\label{sect-0}

\paragraph{Vectors, multisets and actions}

An \emph{$X$-vector} over a set $X$ is a mapping
$\alpha:X\to\mathbb{Z}$, where
$\mathbb{Z}$ is the set of all integers.
For two $X$-vectors, $\alpha$ and $\beta$,
the
\emph{sum} ($\alpha+\beta$),
\emph{difference} ($\alpha-\beta$),
and
\emph{less-than-or-equal} relationship ($\alpha\leq\beta$)
are defined component-wise.
The \emph{support} of an $X$-vector $\alpha$ is the set
$\supp(\alpha)=\{x\in X\mid \alpha(x)\neq 0\}$.
The \emph{empty} $X$-vector
has the empty support and is denoted by $\es_X$ or simply by $\es$, and
$-\alpha$ denotes $\es_X-\alpha$.
The \emph{union} of an $X$-vector $\alpha$ and
a $Y$-vector $\beta$, where $X\cap Y=\es$,
is the $(X\cup Y)$-vector
$\alpha\munion\beta$ such that
$\alpha\munion\beta|_X=\alpha$ and
$\alpha\munion\beta|_Y=\beta$.

\medskip
\emph{Multisets} over $X$ are $X$-vectors returning
non-negative integers in $\mathbb{N}$, the subsets of $X$ can be identified
with multisets returning $0$ or $1$, and
the elements of $X$ with singleton sets.
The set of all multisets over $X$ is denoted by $\mult(X)$.
The \emph{size} of
$\alpha\in\mult(X)$ is given by
$|\alpha|=\sum_{x\in X}\alpha(x)$.
For $x\in X$, we denote $x\in\alpha$ whenever $\alpha(x)\geq 1$.

In what follows, e.g.,
$(xxz)$ denotes a multiset $\alpha$ with the support $\{x,z\}$
satisfying $\alpha(x)=2$
and  $\alpha(z)=1$.
Moreover,
$x^k$ denotes a multiset $\alpha$ with the support $\{x\}$ satisfying
$\alpha(x)=k$.

Throughout the paper, $\mathcal A$ denotes an infinite set  \emph{actions},
including the \emph{reverse actions} and \emph{indexed reverse actions} introduced
in Section~\ref{se-ddd},
used in step transition systems and \textsc{pt}-nets
to model events occurring in concurrent behaviours.
To simplify the presentation, we will treat a vector or
multiset $\alpha$ over $T\subseteq\mathcal A$ as a vector or
multiset over $\mathcal A$, assuming that
$\alpha|_{\mathcal A\setminus T}=\es_{\mathcal A\setminus T}$.

\paragraph{Step transition systems}

A \emph{step transition system} is a tuple
$\STS=(S,T,\to,s_0)$ such that
$S$ is a nonempty set of \emph{states}, $T$ is a finite
set of \emph{actions}, $\to\, \subseteq S \times \mult(T) \times S$
is the set of \emph{transitions},
and $s_0\in S$ is the \emph{initial state}.
The transition labels in $\mult(T)$ represent simultaneous
executions of groups of actions, called \emph{steps}.
Rather than
$(s,\alpha,r)\in\;\to$, we can denote
$s\xright{\alpha}_\STS r$.
Moreover, $s\xright{\alpha}_\STS$
means that there is some $r$
such that $s\xright{\alpha}_\STS r$.
$\STS$ is:
\begin{itemize}
\item
    a \emph{set transition system} if $\alpha$ is a set, for
    every transition $(s,\alpha,r)$; and
\item
    \emph{state-finite} if $S$ is finite,
    \emph{step-finite} if $\{\alpha\mid s\xright{\alpha}_\STS\}$
    is finite, and \emph{finite} if it is both state- and step-finite
   (and so $\to$ is finite).
\end{itemize}
In the diagrams, step transition systems are depicted as labelled
directed graphs. Arcs labelled by the empty multiset are
omitted.

A state $r$ is \emph{reachable} from state $s$ if there are
steps $\alpha_1,\dots,\alpha_k$ ($k\geq 0$) and
states $s_1,\dots, s_{k+1}$ such that
$(s=)s_1\xright{\alpha_1}_\STS s_2\dots s_k\xright{\alpha_k}_\STS s_{k+1}(=r)$.
We denote this by
$s\xright{\alpha_1\cdots\alpha_k}_\STS r$.

The set of all states from which a state $s$ is reachable
is denoted by $\pred_\STS(s)$,
$s$ is a \emph{home state} if $\pred_\STS(s)=S$, and
$R\subseteq S$ is a
\emph{home cover} of $\STS$ if $S=\bigcup_{s\in R}\pred_\STS(s)$.

An \emph{(undirected) path} from
a \emph{source} state $s$ to \emph{target} state $r$
is a sequence $\pi=\tau_1\dots\tau_k$ ($k\geq 0$), where
each $\tau_i$ is a pair
$((s_i,\alpha_i,r_i),\zeta_i)\in(\to\times\{+,-\})$
such that  either $k=0$ and $s=r$,
or $k\geq 1$
and $s=\widehat s_1,\widehat r_1=\widehat s_2,
\dots,\widehat r_{k-1}=\widehat s_k,\widehat r_k=r$,
assuming that
$\widehat s_i=s_i$ and $\widehat r_i=r_i$
if $\zeta_i=+$, and otherwise $\widehat s_i=r_i$ and $\widehat r_i=s_i$,
for every $1\leq i\leq k$.
We denote this by $\pi\in\paths_\STS(s,r)$.
The \emph{signature} of $\pi$ is the
$\mathcal A$-vector  $\sign(\pi)=\es_{\mathcal A} \zeta_1 \alpha_1 \dots \zeta_k\alpha_k$,
where the $\zeta_i$'s are being treated as addition and subtraction operations.
For example,
if $\pi=((s',\alpha,s),-)((s',\beta,s''),+)\in\paths_\STS(s,s'')$, then
$\sign(\pi)=\es_{\mathcal A}-\alpha+\beta=\beta-\alpha$.

Intuitively,
$\sign(\pi)$ records the `net contribution (or effect)'
made by each action along the path $\pi$, with $a\in\alpha_i$
making a `positive' contribution
if the transition $(s_i,\alpha_i,r_i)$ agrees with the direction of the path,
and otherwise making a `negative' contribution.
Note that $r$ is reachable from $s$ iff there is
$\pi\in\paths_\STS(s,r)$ with all the $\zeta_i$'s being equal to $+$.

In this paper, step transition systems are
intended to capture (step) reachability graphs of \textsc{pt}-nets.
We will now introduce a property of step transition systems which is
motivated by the \emph{state equation} which holds,
in particular, for \textsc{pt}-nets.
The basic idea is that the effect of executing
an action is fixed, and so
does not depend on the global state in which this happens
(we will make this more precise later).
Capturing such a constant effect is straightforward for
\textsc{pt}-nets, but not for step transition systems.
One can, however,
approximate the concept of having `the same effect' by considering
as equivalent all
undirected paths with the same source and target states.

\medskip
Let $\bowtie_\STS$ be the least equivalence
relation on the set of all $\mathcal A$-vectors such that:
\textit{(i)}
$\sign(\pi)\bowtie_\STS\sign(\pi')$, for
all $s,r\in S$ and $\pi,\pi'\in\paths_\STS(s,r)$;
and
\textit{(ii)}
$\alpha\bowtie_\STS\beta$ and $\alpha'\bowtie_\STS\beta'$
imply
$\alpha+\alpha'\bowtie_\STS\beta+\beta'$, for all
$\mathcal A$-vectors $\alpha$, $\alpha'$, $\beta$, and $\beta'$.
Intuitively, $\alpha\bowtie_\STS\beta$ means that
executing $\alpha$
has the same effect as
executing $\beta$.
This leads to the following property of a step transition
$\STS$:
\begin{description}
\item[\textit{CE}]
    $\sign(\pi)\bowtie_\STS\sign(\pi')$
    implies $r=r'$, for all $s,r,r'\in S$,
    $\pi\in\paths_\STS(s,r)$, and $\pi'\in\paths_\STS(s,r')$.
    \hfill
    \emph{(constant effect)}
\end{description}
It is the case that $\alpha\bowtie_\STS\beta$ implies
$-\alpha\bowtie_\STS-\beta$ since
$\pi\in\paths_\STS(s,r)$ means that there is
$\pi'\in\paths_\STS(r,s)$ such that $\sign(\pi')=-\sign(\pi)$.
Hence we also have the following
`backward' version of the `forward' constant effect property \textit{CE}:
$\sign(\pi)\bowtie_\STS\sign(\pi')$
implies $s=s'$, for all $s,s',r\in S$,
$\pi\in\paths_\STS(s,r)$, and $\pi'\in\paths_\STS(s',r)$.

\medskip
We are now in a position to introduce a class of step transition systems
used throughout the rest of this paper. A step transition system $\STS=(S,T,\to,s_0)$ is
a \emph{constant effect step transition system} (or \textsc{cest}-system)
if it satisfies \textit{CE} as well as the following three properties,
for every $s\in S$:
\begin{description}
\itemsep=0.9pt
\item[\textit{REA}]
    $s_0\in \pred_\STS(s)$.
    \hfill
    \emph{(reachability)}

\item[\textit{EL}]
    $s\xright{\es}_\STS s$.
    \hfill
    \emph{(empty loops)}

\item[\textit{SEQ}]
    $s\xright{\alpha+\beta}_\STS$ implies
    $s\xright{\alpha\beta}_\STS$.
    \hfill
    \emph{(sequentialisability)}
\end{description}

\medskip
We then obtain two immediate properties of \textsc{cest}-systems.

\begin{proposition}
\label{prop-zt}
    Let $\STS$ be a \textsc{cest}-system.
\begin{enumerate}
\itemsep=0.9pt
\item
    $r=r'$ whenever
    $s\xright{\alpha}_\STS r$ and
    $s\xright{\alpha}_\STS r'$.
\eject
\item
    $s=r$ whenever $s\xright{\es}_\STS r$.
\end{enumerate}
\end{proposition}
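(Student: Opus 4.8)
The plan is to derive both parts directly from the constant effect property \textit{CE}, by exhibiting in each case two one-step paths that share the same signature. Since $\bowtie_\STS$ is an equivalence relation it is in particular reflexive, so two paths whose signatures are literally equal are automatically $\bowtie_\STS$-related; this reflexivity is the only feature of $\bowtie_\STS$ we shall need.

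For part~1, I would assume $s\xright{\alpha}_\STS r$ and $s\xright{\alpha}_\STS r'$, and form the length-one forward paths $\pi=((s,\alpha,r),+)\in\paths_\STS(s,r)$ and $\pi'=((s,\alpha,r'),+)\in\paths_\STS(s,r')$. By the definition of signature, $\sign(\pi)=\es_{\mathcal A}+\alpha=\alpha$ and likewise $\sign(\pi')=\alpha$, so $\sign(\pi)=\sign(\pi')$ and hence $\sign(\pi)\bowtie_\STS\sign(\pi')$. Applying \textit{CE} to $\pi$ and $\pi'$, which have the common source $s$, immediately yields $r=r'$.

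For part~2, I would assume $s\xright{\es}_\STS r$; by \textit{EL} we also have $s\xright{\es}_\STS s$. Instantiating part~1 with $\alpha=\es$, the two transitions $s\xright{\es}_\STS r$ and $s\xright{\es}_\STS s$ force $r=s$. (Alternatively, one can rerun the \textit{CE} argument directly, taking $\pi=((s,\es,r),+)\in\paths_\STS(s,r)$ and $\pi'=((s,\es,s),+)\in\paths_\STS(s,s)$, both of signature $\es_{\mathcal A}$.)

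I do not expect any genuine obstacle here: both claims are immediate specialisations of \textit{CE}, and properties \textit{REA} and \textit{SEQ} play no role. The only point that needs a little care is the bookkeeping in the signature computation — verifying that a single forward transition labelled $\alpha$ contributes exactly $\alpha$ to the signature, so that the two paths have identical signatures and plain reflexivity of $\bowtie_\STS$ suffices, with neither clause~\textit{(ii)} nor the nontrivial identifications of clause~\textit{(i)} being invoked.
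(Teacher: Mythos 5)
Your proof is correct and follows exactly the paper's route: part~1 is an immediate consequence of \textit{CE} (your one-step-path signature computation fills in the detail the paper leaves implicit), and part~2 is obtained from part~1 together with \textit{EL}. No difference in approach, only in the level of detail.
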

\begin{proof}
    Part (1)
    follows from \textit{CE}, and
    part (2)
    follows from part (1) and
    \textit{EL}.
\end{proof}
Proposition~\ref{prop-zt}(1)
captures the property of \emph{forward determinism} (\textit{FD})
which
allows one to unambiguously  denote $s\oplus_\STS\alpha$, or
$s\oplus\alpha$ if $\STS$ is clear from the context,
as the state $r$ satisfying $s\xright{\alpha}_\STS r$
whenever $s\xright{\alpha}_\STS$.

Being a \textsc{cest}-system
still does not mean that it can be
generated by a
\textsc{pt}-net.
A complete characterisation
can be obtained using, e.g.,
theory of
regions~\cite{DBLP:conf/ac/BadouelD96,DBLP:journals/fuin/DarondeauKPY09}.

\begin{proposition}
\label{prop-ffffp}
    Let $s$ be a state of a
    \textsc{cest}-system $\STS$.
    If
    $s\oplus\alpha$ is defined
    and
    $\beta+\gamma\leq\alpha$,
    then
    $s\oplus\beta,s\oplus(\beta+\gamma)$ and $(s\oplus\beta)\oplus\gamma$
    are also defined,
    and $(s\oplus\beta)\oplus\gamma=s\oplus(\beta+\gamma)$.
\end{proposition}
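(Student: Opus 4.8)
The plan is to derive everything from the two structural properties we have available: sequentialisability (\textit{SEQ}) and forward determinism (\textit{FD}, i.e.\ Proposition~\ref{prop-zt}(1)), together with \textit{CE}. The hypothesis is that $s\oplus\alpha$ is defined and $\beta+\gamma\leq\alpha$. My first move is to use \textit{SEQ} to break $\alpha$ into pieces that expose $\beta$ and $\gamma$. Since $\beta+\gamma\leq\alpha$, we may write $\alpha=\beta+\gamma+\delta$ for the nonnegative remainder $\delta=\alpha-(\beta+\gamma)$, which is a genuine multiset. Applying \textit{SEQ} to $s\xright{\alpha}_\STS$ repeatedly, I first split off $\beta+\gamma$ from $\delta$ to get $s\xright{(\beta+\gamma)\,\delta}_\STS$, and hence in particular $s\xright{\beta+\gamma}_\STS$, so that $s\oplus(\beta+\gamma)$ is defined.

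Next I would apply \textit{SEQ} once more, now to the step $\beta+\gamma$ enabled at $s$, to obtain $s\xright{\beta\,\gamma}_\STS$. This immediately gives that $s\oplus\beta$ is defined (it is the intermediate state reached after firing $\beta$), and that from $s\oplus\beta$ the step $\gamma$ is enabled, so $(s\oplus\beta)\oplus\gamma$ is defined as well. At this point all three states named in the conclusion are known to exist.

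The remaining task is the equality $(s\oplus\beta)\oplus\gamma=s\oplus(\beta+\gamma)$. Here I would compare two undirected paths from $s$: the single-transition path realising $s\xright{\beta+\gamma}_\STS s\oplus(\beta+\gamma)$, and the two-transition path $s\xright{\beta}_\STS(s\oplus\beta)\xright{\gamma}_\STS(s\oplus\beta)\oplus\gamma$. By the definition of $\sign$, the first path has signature $\beta+\gamma$ and the second has signature $\beta+\gamma$ as well, so their signatures are literally equal and hence $\bowtie_\STS$-related. Invoking \textit{CE} with common source $s$ then forces the two target states to coincide, which is exactly the desired equality.

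I do not expect a serious obstacle here, since each step is a direct appeal to a property already established; the only point requiring a little care is being explicit that \textit{SEQ} must be applied twice (once to separate $\beta+\gamma$ from the remainder $\delta$, and once to separate $\beta$ from $\gamma$), and that the second application is legitimate because \textit{SEQ} applies to \emph{any} state enabling a sum, in particular to $s$ enabling $\beta+\gamma$. The use of \emph{CE} rather than merely \textit{FD} for the final equality is worth flagging: the two paths being compared are not the same decorated transition, so it is the constant-effect property acting on equal signatures that delivers the merge, after which \textit{FD} would also suffice since both are forward paths from $s$.
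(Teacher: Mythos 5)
Your argument is correct and takes essentially the same route as the paper's own proof: both use \textit{SEQ} to obtain $s\xright{\beta}_\STS s\oplus\beta\xright{\gamma}_\STS (s\oplus\beta)\oplus\gamma$ and $s\xright{\beta+\gamma}_\STS s\oplus(\beta+\gamma)$, and then apply \textit{CE} to the two forward paths from $s$, whose signatures are both literally $\beta+\gamma$, to merge the target states (your extra explicitness about the two applications of \textit{SEQ}, splitting off the remainder $\delta=\alpha-(\beta+\gamma)$ first, is just a spelled-out version of what the paper compresses). One minor blemish: your closing remark that \textit{FD} ``would also suffice'' for the final equality is inaccurate, since Proposition~\ref{prop-zt}(1) only identifies the targets of two transitions carrying the \emph{same} step label from the same state, whereas here the labels differ ($\beta$ followed by $\gamma$ versus the single step $\beta+\gamma$), so \textit{CE} is genuinely required --- but this side remark is not load-bearing and does not affect the correctness of your proof.
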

\begin{proof}
    By $s\xright{\alpha}_\STS $ as well as  \textit{SEQ} and \textit{CE},
    we have
    $s\xright{\beta}_\STS s\oplus\beta\xright{\gamma}_\STS (s\oplus\beta)\oplus\gamma$
    as well as
    $s\xright{\beta+\gamma}_\STS s\oplus(\beta+\gamma)$.
    We therefore have
    $\pi=((s,\beta,s\oplus\beta),+)((s\oplus\beta,\gamma,(s\oplus\beta)\oplus\gamma),+)
     \in\paths_\STS(s,(s\oplus\beta)\oplus\gamma)$
    and
    $\pi'=((s,\beta+\gamma,s\oplus(\beta+\gamma)),+)
     \in\paths_\STS(s,s\oplus(\beta+\gamma))$.
    Moreover, $\sign(\pi)=\beta+\gamma=\sign(\pi')$.
    Hence, by \textit{CE},
    $(s\oplus\beta)\oplus\gamma=s\oplus(\beta+\gamma)$.
\end{proof}

We use different ways of removing transitions from
a step transition system $\STS=(S,T,\to,s_0)$:
\[
\begin{array}{lcl@{\,}ll}
    \STS^\seq
    & =
    & (S,T,
    & \{(s,\alpha,r)\in \;\to
      ~\mid~  |\alpha|\leq 1\}, s_0)
\\
    \STS^\set
    & =
    & (S,T,
    & \{(s,\alpha,r)\in \;\to
      ~\mid~  \supp(\alpha)=\alpha\},  s_0)
\\
    \STS^\spike
    & =
    & (S,T,
    & \{(s,\alpha,r)\in \;\to
     ~\mid~  |\supp(\alpha)|\leq 1\}, s_0)
\\
    \STS|_{T'}
    & =
    & (S,T',
    & \{(s,\alpha,r)\in \;\to
      ~\mid~ \alpha\in\mult(T')\}, s_0)    ~~~~~~ (\textit{for}~~T'\subseteq T)\;.
\end{array}
\]
That is, $\STS^\seq$ is obtained by only
retaining singleton steps and $\es$-labelled steps,
$\STS^\set$ by only retaining steps which
are sets, and
$\STS^\spike$ by removing all steps which
use more than one action.
Moreover, $\STS$ is a \emph{sequential} / \emph{set}
/ \emph{spiking} step transition system if
respectively $\STS=\STS^\seq$ /
$\STS=\STS^\set$ /
$\STS=\STS^\spike$.\footnote
{
    If $\STS$ is a \textsc{cest}-system,
    then $\STS^\seq$, $\STS^\set$, and $\STS^\spike$
    satisfy \textit{REA} since $\STS$ satisfies \textit{REA} and \textit{SEQ}.
}

\medskip
For step transition systems satisfying
\textit{SEQ}, checking the satisfaction of the constant
effect property can be done by restricting oneself to the
sequential steps.

\begin{proposition}
\label{prop-x1}
    Let $\STS$ be a step transition system satisfying \textit{SEQ}.
    Then
    $\STS$ satisfies \textit{CE} if and only if
    $\STS^\seq$ satisfies \textit{CE}.
\end{proposition}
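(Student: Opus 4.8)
The plan is to treat the two implications separately, since one is essentially free and the other carries all the weight.

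For the forward implication ($\STS\models\textit{CE}$ implies $\STS^\seq\models\textit{CE}$) I would exploit the inclusion $\to^\seq\subseteq\,\to$: every undirected path of $\STS^\seq$ is also an undirected path of $\STS$ with the same source, target and signature. Hence the clause-\textit{(i)} generators of $\bowtie_{\STS^\seq}$ are among those of $\bowtie_\STS$, and since both are the least equivalences closed under the same additive rule \textit{(ii)}, this gives $\bowtie_{\STS^\seq}\subseteq\bowtie_\STS$. Now if $\pi\in\paths_{\STS^\seq}(s,r)$ and $\pi'\in\paths_{\STS^\seq}(s,r')$ satisfy $\sign(\pi)\bowtie_{\STS^\seq}\sign(\pi')$, then $\sign(\pi)\bowtie_\STS\sign(\pi')$ while $\pi,\pi'$ are $\STS$-paths, so \textit{CE} of $\STS$ forces $r=r'$. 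This settles the easy direction.

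The substance is the converse. Assuming $\STS^\seq\models\textit{CE}$, the core idea is to replace every multi-action step by an equivalent \emph{sequential} detour with the same two endpoints. Concretely, I would first establish the key lemma: \emph{for every transition $s\xright{\alpha}_\STS r$ there is a directed path $\rho\in\paths_{\STS^\seq}(s,r)$ of singleton steps with $\sign(\rho)=\alpha$}. Granting this, each transition of an arbitrary $\pi\in\paths_\STS(s,r)$ can be substituted, respecting its $+/-$ orientation, by its singleton detour, yielding $\hat\pi\in\paths_{\STS^\seq}(s,r)$ with $\sign(\hat\pi)=\sign(\pi)$. Applying the same surgery to cycles shows that the cycle signatures of $\STS$ and of $\STS^\seq$ generate the same subgroup of $\mathbb{Z}^{\mathcal A}$, whence $\bowtie_\STS=\bowtie_{\STS^\seq}$. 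Then, given $\pi\in\paths_\STS(s,r)$ and $\pi'\in\paths_\STS(s,r')$ with $\sign(\pi)\bowtie_\STS\sign(\pi')$, the sequentialised $\hat\pi,\hat\pi'$ are $\STS^\seq$-paths from $s$ to $r$ and $r'$ with $\sign(\hat\pi)\bowtie_{\STS^\seq}\sign(\hat\pi')$, and \textit{CE} of $\STS^\seq$ delivers $r=r'$.

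The key lemma is where I expect the real difficulty, and I would prove it by induction on $|\alpha|$. The base case $|\alpha|\le 1$ is immediate, since the transition already belongs to $\STS^\seq$. For $|\alpha|\ge 2$ I would pick $a\in\supp(\alpha)$ and use \textit{SEQ} to split the step as $s\xright{a}_\STS u\xright{\alpha-a}_\STS w$; the second factor has strictly smaller size, so the induction hypothesis supplies an $\STS^\seq$-detour from $u$ to $w$ of signature $\alpha-a$, which prefixed by $s\xright{a}_\STS u$ produces an $\STS^\seq$-path from $s$ to $w$ of signature $\alpha$. The delicate point is that \textit{SEQ} only guarantees the existence of \emph{some} such splitting, with no direct control over the reached state $w$; proving $w=r$ is exactly where \textit{CE} of $\STS^\seq$ must enter, by comparing the detours obtained from different choices of $a$ (and from peeling at the front versus the back) to force a common, well-defined endpoint and to identify it with $r$. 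This endpoint identification is the main obstacle; the remainder is routine bookkeeping about signatures and the congruence closure defining $\bowtie$.
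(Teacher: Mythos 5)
Your forward direction is correct, and in fact slightly cleaner than the paper's own: you only use the inclusion $\bowtie_{\STS^\seq}\subseteq\bowtie_\STS$ (immediate from $\paths_{\STS^\seq}(s,r)\subseteq\paths_\STS(s,r)$), whereas the paper invokes the full equality of the two congruences. The converse is where the genuine gap lies, and the difficulty you flag is not a deferrable technicality --- it is fatal to the plan as proposed. Your key lemma (a directed sequential path from $s$ to $r$ with signature $\alpha$ whenever $s\xright{\alpha}_\STS r$) is exactly the paper's claim \textit{(*)}, which the paper asserts without proof. Under the literal definition, \textit{SEQ} only guarantees $s\xright{\alpha\beta}_\STS$, i.e.\ the existence of \emph{some} sequentialisation with an unconstrained endpoint; and \textit{CE} for $\STS^\seq$ can only identify the endpoints of different sequential detours \emph{with each other} --- it says nothing that ties any of them to the target $r$ of the non-sequential transition, which need not even lie in the same connected component of $\STS^\seq$. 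Concretely: take states $s,u,v,w,r$, transitions $(s,(ab),r)$, $(s,a,u)$, $(u,b,w)$, $(s,b,v)$, $(v,a,w)$, and an $\es$-loop at every state. \textit{SEQ} holds (both orderings of $(ab)$ are executable, via $u$ and via $v$, both ending at $w$). $\STS^\seq$ drops only $(s,(ab),r)$; every cycle of $\STS^\seq$ has signature $\es$ (the essential one, $s,u,w,v,s$, gives $a+b-a-b=\es$), so $\bowtie_{\STS^\seq}$ is the identity, and from every state the reachable states have pairwise distinct signatures (from $s$: $\es$, $a$, $b$, $a+b$), so $\STS^\seq$ satisfies \textit{CE}. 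Yet there is \emph{no} sequential path from $s$ to $r$ at all ($r$ is isolated in $\STS^\seq$), so your lemma fails; worse, in $\STS$ the co-initial paths $(s,(ab),r)$ and $s\xright{a}u\xright{b}w$ have equal signatures $a+b$ and distinct targets, so $\STS$ violates \textit{CE}. Thus under the literal reading of \textit{SEQ} the right-to-left implication itself is false, and no proof strategy can close your gap.

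It is worth understanding that you got stuck at precisely the point the paper's proof glosses over: its observation \textit{(*)} --- that \textit{SEQ} alone yields, for each $\pi\in\paths_\STS(s,r)$, some $\pi'\in\paths_{\STS^\seq}(s,r)$ with $\sign(\pi')=\sign(\pi)$ --- is your key lemma in path form, and it fails on the same example. What both arguments actually need is the endpoint-preserving strengthening of \textit{SEQ}: $s\xright{\alpha+\beta}_\STS r$ implies $s\xright{\alpha}_\STS s'\xright{\beta}_\STS r$ for some $s'$. That version does hold in \textsc{cest}-systems (cf.\ Proposition~\ref{prop-ffffp}), but deriving it requires \textit{CE} for $\STS$ itself, which is circular in this direction of the proposition; so it must be taken as a hypothesis (or built into the definition of \textit{SEQ}). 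Under that stronger reading, your induction on $|\alpha|$ closes immediately --- peel off any $a\in\alpha$ and apply the induction hypothesis to a transition with the \emph{same} target $r$ --- and the remainder of your plan (substituting detours transition-by-transition into arbitrary undirected paths, concluding $\bowtie_\STS=\bowtie_{\STS^\seq}$, then applying \textit{CE} for $\STS^\seq$) is sound and coincides with the paper's route. Your specific proposed repair --- comparing detours obtained from different peelings and invoking \textit{CE} of $\STS^\seq$ --- cannot succeed, for the reason made vivid by the counterexample above.
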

\begin{proof}
    We first observe that from
    \textit{SEQ} for $\STS$ it follows that,
    for every $\pi\in\paths_\STS(s,r)$, there is
    $\pi'\in\paths_{\STS^\seq}(s,r)$ such that $\sign(\pi')=\sign(\pi)$
    \textit{(*)}.
    Hence, we also have  $\bowtie_\STS\,=\,\bowtie_{\STS^\seq}$
    \textit{(**)}.

\medskip
($\Longrightarrow$)
    Follows from \textit{(**)} and
    $\pi\in\paths_{\STS^\seq}(s,r)\subseteq\pi\in\paths_\STS(s,r)$.

($\Longleftarrow$)
    Follows from \textit{(*)} and \textit{(**)}.
\end{proof}

The essence of the next result is that adding reverses of some
transitions labelled by the same action
in a sequential step transition system preserves
the constant effect property.

\begin{proposition}
\label{prop-x2}
    Let $\STS=(S,T,\to,s_0)$ be a sequential
    step transition system satisfying \textit{CE} and
    $\STS'=(S,T\cup\{\widetilde{a}\},\to\cup\to',s_0)$,
    where $\to'\subseteq\{(r,\widetilde{a},s)\mid (s,a,r)\in\to\}$
    for some $a\in T$ and $\widetilde{a}\notin T$.
    Then $\STS'$ satisfies \textit{CE}.
\end{proposition}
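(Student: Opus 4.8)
The plan is to reduce \textit{CE} for $\STS'$ to \textit{CE} for $\STS$ by systematically eliminating every occurrence of the new action $\widetilde a$ from paths and their signatures. The key structural fact I would exploit is that, by hypothesis, every new transition $(r,\widetilde a,s)\in\;\to'$ arises as the reverse of an existing transition $(s,a,r)\in\;\to$. This lets me define, for any path $\pi\in\paths_{\STS'}(s,r)$, a path $\pi^*$ in $\STS$ obtained by replacing each step $((r,\widetilde a,s),\zeta)$ that uses a $\widetilde a$-transition with the step $((s,a,r),-\zeta)$ that traverses the corresponding $a$-transition in the opposite orientation, leaving all other steps untouched. Since flipping both the underlying transition and its orientation preserves the effective source and target of the step, $\pi^*$ visits exactly the same states as $\pi$; in particular $\pi^*\in\paths_\STS(s,r)$, so the substitution preserves endpoints.

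Next I would track what this substitution does to signatures. Replacing $((r,\widetilde a,s),\zeta)$ by $((s,a,r),-\zeta)$ converts a contribution $\zeta$ to the $\widetilde a$-coordinate into the opposite contribution $-\zeta$ to the $a$-coordinate. Summing over the whole path, this shows $\sign(\pi^*)=\phi(\sign(\pi))$, where $\phi$ is the map on $\mathcal A$-vectors defined by $\phi(\alpha)(a)=\alpha(a)-\alpha(\widetilde a)$, $\phi(\alpha)(\widetilde a)=0$, and $\phi(\alpha)(x)=\alpha(x)$ for every other $x$. Two features of $\phi$ matter: it is additive, $\phi(\alpha+\alpha')=\phi(\alpha)+\phi(\alpha')$, and it carries the signature of any $\STS'$-path to the signature of an $\STS$-path with the same endpoints.

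The heart of the argument is then to show that $\phi$ is a homomorphism from $\bowtie_{\STS'}$ to $\bowtie_\STS$, i.e. that $\alpha\bowtie_{\STS'}\beta$ implies $\phi(\alpha)\bowtie_\STS\phi(\beta)$. I would establish this using the minimality of $\bowtie_{\STS'}$: consider the pulled-back relation $R=\{(\alpha,\beta)\mid\phi(\alpha)\bowtie_\STS\phi(\beta)\}$, which is automatically an equivalence relation because $\bowtie_\STS$ is one. It satisfies clause \textit{(i)} of the definition of $\bowtie_{\STS'}$, since for $\pi,\pi'\in\paths_{\STS'}(s,r)$ the images $\sign(\pi^*),\sign(\pi'^*)$ are signatures of $\STS$-paths sharing the endpoints $s,r$ and are therefore $\bowtie_\STS$-related; and it is closed under clause \textit{(ii)} by the additivity of $\phi$ combined with clause \textit{(ii)} for $\bowtie_\STS$. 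As $\bowtie_{\STS'}$ is the least such equivalence, $\bowtie_{\STS'}\subseteq R$, which is precisely the desired implication.

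Finally, \textit{CE} for $\STS'$ follows immediately: given $\pi\in\paths_{\STS'}(s,r)$ and $\pi'\in\paths_{\STS'}(s,r')$ with $\sign(\pi)\bowtie_{\STS'}\sign(\pi')$, the homomorphism yields $\sign(\pi^*)\bowtie_\STS\sign(\pi'^*)$ where $\pi^*\in\paths_\STS(s,r)$ and $\pi'^*\in\paths_\STS(s,r')$, so \textit{CE} for $\STS$ forces $r=r'$. I expect the routine bookkeeping (sign tracking and endpoint preservation under the substitution) to be straightforward; the step needing the most care, and the main obstacle, is transferring the congruence $\bowtie$ between the two systems, which is why I would phrase it through the minimality/pullback of $R$ rather than attempting a direct induction on the construction of $\bowtie_{\STS'}$.
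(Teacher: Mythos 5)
Your proof is correct and follows essentially the same route as the paper's: your map $\phi$ is exactly the paper's vector transformation $\widehat{\alpha}$, your path substitution $\pi\mapsto\pi^*$ is the paper's observation \textit{(*)}, and your transfer of $\bowtie_{\STS'}$ to $\bowtie_\STS$ is the forward direction of the paper's claim \textit{(**)}. Your minimality/pullback argument merely makes explicit a step the paper leaves implicit (and by proving only the implication actually needed, rather than the paper's stated ``iff'', you also avoid the paper's case split on $\to'=\es$).
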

\begin{proof}
    The result clearly holds when $\to'$ is empty.
    Otherwise, we have $a\bowtie_{\STS'}-\widetilde{a}$.
    For every $\mathcal A$-vector $\alpha$, let
    $\widehat{\alpha}$ be the $\mathcal A$-vector
    such that
    $\widehat{\alpha}|_{\mathcal A\setminus\{a,\widetilde{a}\}}
     =
     \alpha|_{\mathcal A\setminus\{a,\widetilde{a}\}}$,
    $\widehat{\alpha}(a)=\alpha(a)-\alpha(\widetilde{a})$, and
    $\widehat{\alpha}(\widetilde{a})=0$.

    We observe that, for all $s,r\in S$ and $\pi\in\paths_{\STS'}(s,r)$, there is
    $\pi'\in\paths_\STS(s,r)$
    such that $\sign(\pi')=\widehat{\sign(\pi)}$ \textit{(*)}.
    Hence, we also have that $\alpha\bowtie_{\STS'}\beta$
    iff $\widehat{\alpha}\bowtie_\STS\widehat{\beta}$,
    for all $\mathcal A$-vectors $\alpha$ and $\beta$
    \textit{(**)}.
    The result then follows from \textit{CE} for $\STS$ together with
    \textit{(*)} and \textit{(**)}.
\end{proof}

Let $\STS=(S,T,\to,s_0)$ and $\STS'=(S',T',\to',s'_0)$ be
two step transition systems such that $T\subseteq T'$.
Then $\STS$ is \emph{included} in $\STS'$
if there is a bijection $\psi \colon S \to S'$
such that $\psi(s_0)=s'_0$ and
$\{(\psi(s),\alpha,\psi(s'))\mid s\xright{\alpha}_\STS s'\}
 \;\subseteq\;\to'$.\footnote
{
    If $\STS$ and $\STS'$ are \textsc{cest}-systems, then
    $\psi$ is unique due to
    \textit{REA} and \textit{FD}.
}
This is denoted by
$\STS\lhd_\psi \STS'$ or
$\STS\lhd \STS'$, and if $\psi$ is the identity
on $S$, we denote $\STS\bt \STS'$.
Also,  $\STS$ is \emph{isomorphic} with $\STS'$
if there is $\psi$ such that $\STS\lhd_\psi \STS'$
and
$\STS'\lhd_{\psi^{-1}} \STS$.
This is denoted by
$\STS\simeq_\psi \STS'$ or $\STS\simeq \STS'$.

\paragraph{\textsc{pt}-nets}

A \emph{\textsc{pt}-net} (short for place/transition net
\cite{DBLP:books/daglib/0032298}) is
a tuple $N=(P,T,F,M_0)$, where $P$ is a finite set of \emph{places},
$T\subseteq \mathcal A$ is a disjoint finite set of \emph{actions},\footnote
{
    We use the term `actions' rather than `transitions' when
    referring to the elements of $T$,
    in order to avoid confusion with the triples $(s,\alpha,r)$
    used in the definition of step transition systems.
}
$F$ is the \emph{flow function} $F\colon(P\times T)\cup(T\times P)\to\mathbb{N}$
specifying the arc weights between places and actions,
and $M_0$ is the \emph{initial marking}
(\emph{markings} are multisets over $P$ representing global states).
It is assumed that, for every $a\in T$, there is $p\in P$
such that $F(p,a)>0$.

The triple $(P,T,F)$ is an \emph{unmarked} \textsc{pt}-net, and
$N|_{T'}=(P,T',F|_{(P\times T')\cup(T'\times P)}, M_0)$
is the  \emph{subnet} of $N$ induced by $T'\subseteq T$.

In the diagrams,  \textsc{pt}-nets are depicted as labelled
directed graphs, with circles  representing places and
boxes to representing actions. Markings are represented by
black tokens or numbers drawn inside the circles,
the arc weight of 1 is omitted, and the 0-weight arcs are not drawn.

\medskip
Multisets over $T$,
again called  \emph{steps},
represent executions of groups of actions.
The \emph{effect} of a step $\alpha\in\mult(T)$
(and, in general, a $T$-vector $\alpha$)
is the $P$-vector
$\eff_N(\alpha)=\POST_N(\alpha)-\PRE_N(\alpha)$,
where
$\PRE_N(\alpha)$ and $\POST_N(\alpha)$
are multisets of places
such that, for every $p\in P$:
\[
    \PRE_N(\alpha)(p) =\sum_{a\in T}\alpha(a)\cdot F(p,a)
    ~~~~~~\textrm{and}~~~~~~
    \POST_N(\alpha)(p)=\sum_{a\in T}\alpha(a)\cdot F(a,p)\;.
\]

A step $\alpha$ is \emph{enabled} at a~marking $M$ if $\PRE_N(\alpha)\leq M$, and
the \emph{firing} of such a step leads to the marking
$M'=M+\eff_N(\alpha)$.\footnote
{
    $M'$ is a multiset due to $\PRE_N(\alpha)\leq M$.
}
This is respectively denoted by $M\STEP{\alpha}{N}$ and
$M\STEP{\alpha}{N} M'$.
Note that it is always the case that $M\STEP{\es}{N}M$, and that
$M\STEP{\alpha+\beta}{N}$ implies
$M\STEP{\alpha}{N} M'\STEP{\beta}{N}$, where $M'=M+\eff_N(\alpha)$.
These two facts motivated the inclusion of
\textit{EL} and \textit{SEQ} in the definition of \textsc{cest}-systems.

\medskip
The \emph{reachable} markings of $N$ are
the smallest set of markings $\reach_N$
such that $M_0\in\reach_N$ and if $M\in\reach_N$ and
$M\STEP{\alpha}{N}$,
then $M+\eff_N(\alpha)\in\reach_N$.
$N$ is \emph{bounded} if the set $\reach_N$ of
all the reachable markings is finite.

The overall behaviour of $N$
can be captured by its
\emph{concurrent reachability graph} which is the
step transition system
$\CRG_N =(\reach_N,T,\{(M,\alpha,M')\mid M\in\reach_N\wedge
M\STEP{\alpha}{N} M'\}, M_0)$.
In what follows,
$M\xright{\alpha}_N M'$  denotes
$M\xright{\alpha}_{\CRG_N } M'$.
Note that the
concurrent reachability graphs of bounded \textsc{pt}-nets are finite.

\medskip
The concept of \emph{marking equation} can be explained in the following way.
Suppose that a marking $M'$ can be reached from marking $M$
by firing a sequence of steps, e.g., $M\xright{\alpha_1\cdots\alpha_n}_{\CRG_N} M'$.
Then
\begin{equation}
\label{eq-marking}
    M'=M+\eff_N(\alpha)
    ~~~~ ~~~~~~~~
    M=M'-\eff_N(\alpha)
    ~~~~ ~~~~~~~~
    \eff_N(\alpha)=M'-M\;,
\end{equation}
where $\alpha=\alpha_1+\dots+\alpha_n$.
This means that the \emph{effect} of
executing a multiset of actions $\alpha$
is constant, as it does not
depend on the starting marking nor the ending marking nor
any particular way
in which the actions making up $\alpha$ were fired.
Moreover, the effect of actions fired along any path
from $M$ to $M'$ is constant.
This motivated the inclusion of \textit{CE}
in the definition of \textsc{cest}-systems.

\medskip
It is straightforward to see that $\CRG_N$ is a \textsc{cest}-system.
In particular, by Eq.\eqref{eq-marking}, we have
$\eff_N(\sign(\pi))=M'-M$,
for every $\pi\in\paths_{\CRG_N}(M,M')$. Hence, in particular,
$\alpha\bowtie_{\CRG_N}\beta$ implies $\eff_N(\alpha)=\eff_N(\beta)$.
As a result, \textit{CE} holds.

\paragraph{Solving step transition systems}

A step transition system $\STS$ is \emph{solvable} if there is
a \textsc{pt}-net $N$ such that $\STS\simeq\CRG_N$.
This is the standard definition used in several
works concerned with the synthesis of Petri nets from
transition systems.
In this paper, we will also use a more general notion of solvability,
defined for step transition systems with multiple initial states.

\medskip
A \emph{step transition system with multiple initial states} is a tuple
$\STS=(S,T,\to,S_0)$ such that the first three components are as in the
definition of a step transition system, and $S_0\subseteq S$ is a nonempty
set of initial states. Moreover, for every $r\in S_0$,
$\STS_r=(S_r,T,\to_r,r)$ is the step transition system such that
$S_r=\{s\in S\mid r\in\pred_\STS(s)\}$ and
$\to_r\,=\,\to\cap\,(S_r\times\mult(T)\times S_r)$.
That is, $\STS_r$ is $\STS$ restricted to those states
which are reachable from $r$.

A step transition system with multiple initial states $\STS$ is
\emph{solvable} if there is
an unmarked \textsc{pt}-net $(P,T,F)$ and a
mapping $\psi:S\to\mult(P)$ such that
$\STS_r\simeq_{\psi|_{S_r}}\CRG_{(P,T,F,\psi(r))}$,
for every
$r\in S_0$.
That is, a solution in this case is an unmarked \textsc{pt}-net
which can be `started' in different initial markings,
each such initial marking solving one of the
step transition systems which make up $\STS$.

\begin{example}
    Let us consider $\STS = (\{q_1,\ldots,q_6\},\{a,b,c\},\to,\{q_1,q_2\})$,
    a step transition system with multiple initial states
    depicted in Figure~\ref{fig:(a|b)cab}($a$)
    (for simplicity, all nonempty steps are singletons).

\medskip
    The step transition system $\STS_{q_2}$,
    depicted on Figure~\ref{fig:(a|b)cab}($b$),
    is obtained from $\STS$ by removing all the states which are not
    reachable from $q_2$. $\STS_{q_1}$ is constructed
    in similar way.
    The \textsc{pt}-net $N=(P,T,F,(p_1p_4))$ solving $\STS_{q_1}$
    is depicted on Figure~\ref{fig:(a|b)cab}($c$).
    As $N=(P,T,F,p_2^4+p_4)$
    is a solution for $\STS_{q_2}$,
    it follows that $\STS$ is solvable.
\hfill$\diamondsuit$
\end{example}

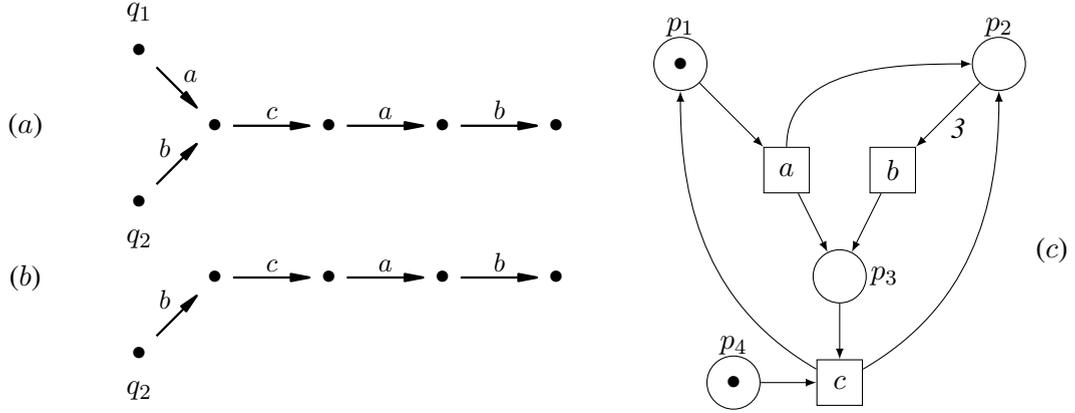
\begin{figure}[h]
\vspace*{-1mm}
\begin{center}
\begin{tikzpicture}[node distance=1.3cm,>=arrow30,%
     line width=0.3mm,scale=1.0,bend angle=45]
     \tikzstyle{box}=[draw,regular polygon,thick,%
     regular polygon sides=4,minimum size=22mm, inner sep = -3pt]
\node (l1) at (-0.5,2.5) {($a$)};

\node (q1) [label=above:$q_1$] at (1,3.5) {$\bullet$};
\node (q2) [label=below:$q_2$] at (1,1.5) {$\bullet$};
\node (q3)   at (2,2.5) {$\bullet$};
\node (q4)   at (3.5,2.5) {$\bullet$};
\node (q5)   at (5,2.5) {$\bullet$};
\node (q6)   at (6.5,2.5) {$\bullet$};

\draw[-arrow30] (q1) to node [auto,inner sep=2pt] {\small $a$} (q3);
\draw[-arrow30] (q2) to node [auto,inner sep=2pt] {\small $b$} (q3);
\draw[-arrow30] (q3) to node [auto,inner sep=2pt] {\small $c$} (q4);
\draw[-arrow30] (q4) to node [auto,inner sep=2pt] {\small $a$} (q5);
\draw[-arrow30] (q5) to node [auto,inner sep=2pt] {\small $b$} (q6);

\node (l2) at (-0.5,0.5) {($b$)};

\node (p2) [label=below:$q_2$] at (1,-0.5) {$\bullet$};
\node (p3)   at (2,0.5) {$\bullet$};
\node (p4)   at (3.5,0.5) {$\bullet$};
\node (p5)   at (5,0.5) {$\bullet$};
\node (p6)   at (6.5,0.5) {$\bullet$};

\draw[-arrow30] (p2) to node [auto,inner sep=2pt] {\small $b$} (p3);
\draw[-arrow30] (p3) to node [auto,inner sep=2pt] {\small $c$} (p4);
\draw[-arrow30] (p4) to node [auto,inner sep=2pt] {\small $a$} (p5);
\draw[-arrow30] (p5) to node [auto,inner sep=2pt] {\small $b$} (p6);
\end{tikzpicture}
~~~~~~~~~
\begin{tikzpicture}[scale=0.7]
\node (l3) at (8,2.5) {($c$)};
\node[circle,draw,minimum size=0.7cm] (pa) at (1,6) {$\bullet$ };
\node (lpa) at (1,6.7) {$p_1$};
\node[circle,draw,minimum size=0.7cm] (pc) at (2,0) {$\bullet$ };
\node (lpc) at (2,.7) {$p_4$};
\node[circle,draw,minimum size=0.7cm] (pb) at (7,6) { };
\node (lpb) at (7,6.7) {$p_2$};
\node[circle,draw,minimum size=0.7cm] (ps) at (4,2) { };
\node (lpb) at (4.7,2) {~~$p_3$};

\node[draw,minimum size=0.6cm] (a) at (3,4){$a$};
\node[draw,minimum size=0.6cm] (b) at (5,4){$b$};
\node[draw,minimum size=0.6cm] (c) at (4,0){$c$};

\draw[-latex] (pa) to node [auto,inner sep=1pt] {} (a);
\draw[-latex] (pb) to node [auto,inner sep=1pt] {\textit{3}} (b);
\draw[-latex] (pc) to node [auto,inner sep=1pt] {} (c);

\draw[-latex] (a) to node [auto,inner sep=1pt] {} (ps);
\draw[-latex] (b) to node [auto,inner sep=1pt] {} (ps);
\draw[-latex] (ps) to node [auto,inner sep=1pt] {} (c);

\draw[-latex,out=150,in=-90] (c) to node [auto,inner sep=1pt] {} (pa);
\draw[-latex,out=30,in=-90] (c) to node [auto,inner sep=1pt] {} (pb);
\draw[-latex,out=90,in=180] (a) to node [auto,inner sep=1pt] {} (pb);
\end{tikzpicture} 
\end{center}\vspace*{-4mm}
\caption{A step transition system with
    multiple initial states $\STS$ ($a$);
    step transition system $\STS_{q_2}$ ($b$);
    and Petri net solving $\STS_{q_1}$ ($c$).
\label{fig:(a|b)cab}}\vspace*{-5mm}
\end{figure}

\section{Reversing steps}
\label{se-ddd}

The reverse action of an action $a$ in a step transition
system $\STS$ or  a \textsc{pt}-net $N$
will be denoted by~$\reverse{a}$.
Intuitively, $\reverse{a}$  cancels the effect of $a$ which
corresponds to   $a+\reverse{a}\bowtie_\STS \es$
and $\eff_N(a)+\eff_N(\reverse{a})=0$, respectively.

\medskip
We consider four ways of
modifying step transition systems to capture the effect of
reversing actions.
In the first three, each action  $a$
has a unique \emph{reverse action} $\reverse{a}$.
Moreover, the reverse $\reverse{\alpha}$
of a multiset $\alpha$ of actions
is obtained by replacing each action
occurrence in $\alpha$ by its reverse.
In the fourth one,
an action $a$ has possibly multiple unique
\emph{indexed reverse actions} $\reverse{a}_{\IDX{\Idx}}$.
The \emph{index-free} version $\noidx(\alpha)$
of a multiset $\alpha$
is obtained by replacing each $\reverse{a}_{\IDX{\Idx}}$
in $\alpha$ by $\reverse{a}$.
For example,
$\noidx((\,\reverse{a}_{\IDX{7}}
  \,\reverse{b}_{\IDX{s,w}}
  \,\reverse{b}\,\reverse{a}_{\IDX{f}}))
  =(\,\reverse{a}\,\reverse{b}\,\reverse{b}\,\reverse{a})
  =\reverse{(abba)}$.

\medskip
In the domain of step transition systems, reversing
is introduced at the behavioural level.
The \emph{direct\,/\,set\,/\,mixed reverse} of
a \textsc{cest}-system
$\STS=(S,T,\to,s_0)$ is respectively
given by:
\[
\begin{array}{l@{\,}c@{\,}l@{~}l@{~}l@{\,}l@{\,}l@{~}ll}
    \STS^\rev
    &=&
    (S,T\uplus\rT,\to\cup\to_\rev, s_0)
    & \textit{with}\;
    & \to_\rev
    &=&
    \{(s\oplus\alpha,\ralpha,s)\mid s\xright{\alpha}_\STS  \}
\\
    \STS^\setrev
    &=&
    (S,T\uplus\rT,\to\cup\to_\setrev, s_0)
    & \textit{with}
    & \to_\setrev
    &=&
    \{(s\oplus\alpha,\ralpha,s)\mid
              s\xright{\alpha}_\STS  \wedge \supp(\alpha)=\alpha\}
\\
    \STS^\mixrev
    &=&
    (S,T\uplus\rT, \to_\mixrev, s_0)
    & \textit{with}
    & \to_\mixrev
    &=&
    \{(s\oplus\alpha,\ralpha+\beta,
             s\oplus\beta)\mid s\xright{\alpha+\beta}_\STS \} \;.
\end{array}
\]
That is, $\to_\rev$ reverses \emph{all} the (original) \emph{steps},
$\to_\setrev$ \emph{only} reverses the steps that are \emph{sets}, and
$\to_\mixrev$ introduces \emph{partial} reverses with \emph{mixed}
steps, including both the original and reverse  actions.
Figure~\ref{fig-0} illustrates mixed reversing.
Note that $s\oplus\alpha$ and $s\oplus\beta$  are states
in $\STS$ due to \textit{SEQ} and \textit{CE}.

\begin{figure}[h]
\begin{center}
\begin{tikzpicture}[node distance=1.3cm,>=arrow30,%
     line width=0.3mm,scale=1.0,bend angle=45]
     \tikzstyle{box}=[draw,regular polygon,thick,%
     regular polygon sides=4,minimum size=22mm, inner sep = -3pt]

\node (s) [label=left:{$s$}]                     at (0,0) {$\bullet$};
\node (sa) [label=$s\oplus\alpha$]                 at (6,-1.5) {$\bullet$};
\node (sab)[label=right:{$s\oplus(\alpha+\beta)$}] at (7,-3) {$\bullet$};
\node (sb) [label=below:$s\oplus\beta$]            at (3,-1.5) {$\bullet$};

\draw[-arrow30,color=black] (s) to [out=-60,in=180] node
                         [auto,swap,inner sep=1pt] {\small $\alpha+\beta$} (sab);
\draw[-arrow30] (sa) to node [auto,swap,inner sep=1pt]
{\small $\ralpha+\beta\,\,\,\,\,\,\,\,\,\,\,\,\,\,$} (sb);
\draw[-arrow30,color=black] (s) to node [auto,inner sep=0pt] {\small $\beta$} (sb);
\draw[-arrow30,color=black] (s) to node [auto,inner sep=1pt] {\small $\alpha$} (sa);
\draw[-arrow30,color=black] (sa) to node [auto,inner sep=1pt] {\small $\beta$} (sab);
\draw[-arrow30,color=black] (sb) to node [auto,inner sep=1pt] {\small $\alpha$} (sab);
\end{tikzpicture}
\end{center}\vspace{-6mm}
\caption{
    A mixed reverse transition
    $s\oplus\alpha\xright{\ralpha+\beta}_\mixrev s\oplus\beta$
    derived from $s\xright{\alpha+\beta}_\STS $.
\label{fig-0}}
\end{figure}
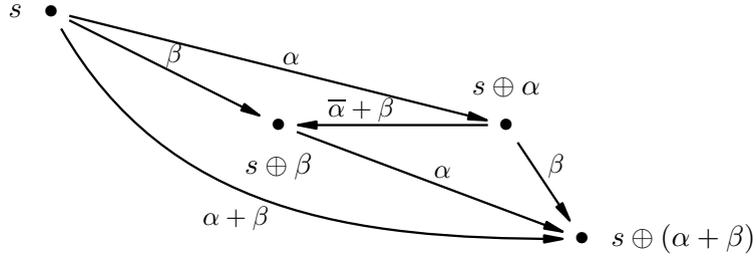

\medskip
In the domain of \textsc{pt}-nets, reversing
is introduced structurally rather than behaviourally,
by adding reverses
at the level of actions:

\medskip
A \emph{split reverse} of
$\STS$ is a step
transition system
$\STS^\splitrev=(S,T\uplus T',\to',s_0)$
satisfying \textit{SEQ} and  such that $T\cap\noidx(T')=\es$ and
$\noidx(\STS^\splitrev)=\STS^\rev$, where
$\noidx(\STS^\splitrev)
 =
 (S,T\cup\noidx(T'),\{(s,\noidx(\alpha),s')
 \mid (s, \alpha ,s')\in\to'\},s_0)$
is the step transition system obtained from $\STS$ by replacing
each occurrence of an indexed reverse action $\reverse{a}_{\IDX{\Idx}}$
by $\reverse{a}$.
That is, $\to'$ introduces split reverses allowing
one or more reverses of a step, possibly using different reverses
of the same action when reversing a step that contains its multiple
copies.
\begin{itemize}
\item
    A \textsc{pt}-net $N$ with \emph{reverses} is such that,
    for each original action $a$, there
    is a reverse action $\reverse{a}$
    such that $\eff_N(\reverse{a})=-\eff_N(a)$.

\item
    A \textsc{pt}-net $N$ with \emph{strict reverses} is such that,
    for each original action~$a$, there is
    a reverse action $\reverse{a}$
    such that $\PRE_N(\reverse{a})=\POST_N(a)$ and $\POST_N(\reverse{a})=\PRE_N(a)$.

\item
    A \textsc{pt}-net $N$ with \emph{split reverses} is such that,
    for each original action $a$, there
    is at least one indexed reverse action $\reverse{a}_{\IDX{\Idx}}$
    such that $\eff_N(\reverse{a}_{\IDX{\Idx}})=-\eff_N(a)$.
\end{itemize}

A key problem which then arises is that of characterising
relationships between
statically defined reversing of \textsc{pt}-nets and
the behavioural reversing of their concurrent reachability graphs.
In the rest of this paper, we will address this problem by
providing both negative and positive results.
First, however, we show basic properties of the reversed step transition systems.
In particular, that all such step transition systems are
\textsc{cest}-systems, and that the solvability of a reversed step transition system
implies the solvability of the original step transition system.

\begin{theorem}
\label{prop-new-iii}
    Let $\STS$ be a \textsc{cest}-system,
    and $\STS^\splitrev$ be any of its split reverses.
\begin{enumerate}
\item
    $\STS\bt \STS^\setrev\bt\STS^\rev\bt \STS^\mixrev$
    and $\STS\bt \STS^\splitrev$.
\item
    $\STS^\mixrev$,
    $\STS^\setrev$, $\STS^\rev$, and
    $\STS^\splitrev$ are \textsc{cest}-systems.

\item
    If any step transition system among
    $\STS^\mixrev$,
    $\STS^\setrev$, $\STS^\rev$, and
    $\STS^\splitrev$ is solvable, then
    $\STS$ is also solvable.
\end{enumerate}
\end{theorem}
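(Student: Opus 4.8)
The plan is to handle the three claims in order, reusing the machinery already established for \textsc{cest}-systems. For part (1) I would simply unfold the definitions, keeping the state set $S$ and the initial state $s_0$ fixed and comparing transition relations. The inclusion $\to\subseteq\to\cup\to_\setrev$ gives $\STS\bt\STS^\setrev$; the fact that set-steps form a subset of all steps gives $\to_\setrev\subseteq\to_\rev$ and hence $\STS^\setrev\bt\STS^\rev$; and putting $\alpha=\es$ (resp.\ $\beta=\es$) in the definition of $\to_\mixrev$ shows $\to\subseteq\to_\mixrev$ and $\to_\rev\subseteq\to_\mixrev$, so $\STS^\rev\bt\STS^\mixrev$. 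For $\STS\bt\STS^\splitrev$ I would argue that any original transition $(s,\alpha,s\oplus\alpha)$ with $\alpha\in\mult(T)$ lies in $\STS^\rev=\noidx(\STS^\splitrev)$, hence is the $\noidx$-image of some transition of $\to'$; since $\alpha$ contains no reverse action, that preimage carries no indexed reverse and therefore equals the original transition itself. The same preimage observation yields, for use in part (3), that restricting each of the four systems to $\mult(T)$-labelled transitions recovers $\STS$ exactly, e.g.\ $(\STS^\rev)|_T=\STS$.

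For part (2), the properties \textit{REA} and \textit{EL} are immediate: all four systems contain $\to$ by part (1), and $\STS$ already satisfies \textit{REA} and \textit{EL}, so directed reachability from $s_0$ and the empty self-loops survive. The property \textit{SEQ} holds for $\STS^\splitrev$ by definition; for the other three the core task is to split one mixed reverse step $\ralpha+\beta$ enabled at $s\oplus\alpha$ into two consecutive mixed steps. Writing each sub-step as a reverse part plus a forward part, say $\reverse{\gamma_r}+\gamma_f$ and $\reverse{\delta_r}+\delta_f$ with $\gamma_r+\delta_r=\alpha$ and $\gamma_f+\delta_f=\beta$, I would use Proposition~\ref{prop-ffffp} repeatedly to realise the intermediate marking as $s\oplus(\delta_r+\gamma_f)$ and to verify that both mixed transitions are genuinely present; for $\STS^\rev$ and $\STS^\setrev$ this specialises (the forward parts vanish, and sub-multisets of a set remain sets, so the split stays inside the respective system). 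For \textit{CE} I would mimic Proposition~\ref{prop-x2}: introduce the linear projection $\widehat{\cdot}$ with $\widehat{\gamma}(a)=\gamma(a)-\gamma(\reverse a)$ and $\widehat{\gamma}(\reverse a)=0$ for $a\in T$, and prove \textit{(*)} that every $\pi\in\paths_{\STS^\mixrev}(s,r)$ admits a counterpart $\pi'\in\paths_\STS(s,r)$ with $\sign(\pi')=\widehat{\sign(\pi)}$, obtained by replacing each mixed transition $(s'\oplus\alpha,\reverse\alpha+\beta,s'\oplus\beta)$ by the undirected $\STS$-path that runs $\alpha$ backwards to $s'$ and then $\beta$ forwards. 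Linearity of $\widehat{\cdot}$ then gives $\alpha\bowtie_{\STS^\mixrev}\beta\Rightarrow\widehat\alpha\bowtie_\STS\widehat\beta$, and \textit{CE} for $\STS^\mixrev$ (hence for $\STS^\rev$ and $\STS^\setrev$, which are $\bt$-included in it) follows from \textit{CE} for $\STS$. For $\STS^\splitrev$ I would instead push paths forward along $\noidx$: since $\noidx$ preserves endpoints and is a $\bowtie$-homomorphism onto $\STS^\rev$, the already-established \textit{CE} of $\STS^\rev$ lifts directly.

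For part (3), suppose $\STS^?\simeq_\psi\CRG_N$ for one of the four constructions and a \textsc{pt}-net $N$ over $T\uplus\rT$ (or $T\uplus T'$). I would restrict $N$ to its original actions, forming the subnet $N|_T$, and claim it solves $\STS$. Because $\psi$ preserves step labels, restricting both sides to $\mult(T)$-labelled transitions yields $\STS=(\STS^?)|_T\simeq_\psi(\CRG_N)|_T$. It then remains to identify $(\CRG_N)|_T$ with $\CRG_{N|_T}$: the transition relations agree since enabling and effect of a $T$-step depend only on the $T$-part of the flow function, so the sole possible discrepancy is the state set. Here \textit{REA} of $\STS$ is decisive—under $\psi^{-1}$ every reachable marking of $N$ corresponds to a state reachable from $s_0$ in $\STS$ using only original actions, whence $\reach_N=\reach_{N|_T}$ and the two reachability graphs coincide. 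Thus $\STS\simeq\CRG_{N|_T}$, and $\STS$ is solvable.

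I expect the genuine obstacle to be \textit{SEQ} for $\STS^\mixrev$: unlike the purely definitional inclusions of part (1) and the clean projection argument for \textit{CE}, it forces a careful decomposition of a mixed step together with several applications of Proposition~\ref{prop-ffffp} to pin down the correct intermediate marking, all while keeping the set-versus-multiset bookkeeping consistent in the $\STS^\setrev$ specialisation.
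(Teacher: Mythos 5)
Your proof is correct, and for most of the statement it follows the paper's own route. Your label-restriction observations in part (1) are exactly the content of the paper's auxiliary Lemma~\ref{lem-ood} (all the reversed systems agree with $\STS$ on $\mult(T)$-labelled transitions, and $\STS^\rev$ and $\STS^\mixrev$ agree on the $\ralpha$-labelled ones); your \textit{SEQ} decomposition of $\ralpha+\beta$ into $(\reverse{\gamma_r}+\gamma_f)+(\reverse{\delta_r}+\delta_f)$ with intermediate state $s\oplus(\delta_r+\gamma_f)$ is literally the paper's argument for $\STS^\mixrev$ (the paper writes $\alpha_1,\alpha_2,\beta_1,\beta_2$ for your $\gamma_r,\delta_r,\gamma_f,\delta_f$ and uses Proposition~\ref{prop-ffffp} in the same way); and in part (3) the paper likewise restricts to $N|_T$, establishing the needed correspondence by a two-case simulation where you instead invoke \textit{REA} to get $\reach_N=\reach_{N|_T}$ --- the same substance, packaged differently. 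Where you genuinely diverge is \textit{CE}. The paper uses the just-proved \textit{SEQ} together with Proposition~\ref{prop-x1} to reduce \textit{CE} of each reversed system to \textit{CE} of its sequential restriction, and then obtains the latter by successively applying Proposition~\ref{prop-x2}, one (indexed) reverse action at a time. You inline the hat-projection (the engine inside Proposition~\ref{prop-x2}) directly at the step level: each mixed transition is replaced by the co-terminal $\STS$-path ``run $\alpha$ backwards, then $\beta$ forwards'', which gives \textit{CE} for $\STS^\mixrev$ straight from \textit{CE} for $\STS$; \textit{CE} then descends to $\STS^\rev$ and $\STS^\setrev$ because $\bt$ keeps the state set fixed, so their paths are $\STS^\mixrev$-paths and their $\bowtie$-relations are contained in $\bowtie_{\STS^\mixrev}$; and $\STS^\splitrev$ is handled by pushing paths along $\noidx$ onto $\STS^\rev$. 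Both routes are sound: the paper's buys reuse of two standalone propositions at the price of the somewhat delicate claim that $(\STS')^\seq$ arises by ``successive application'' of the Proposition~\ref{prop-x2} construction, whereas yours is more self-contained, avoids the detour through sequential restrictions altogether, and your downward-inheritance remark for \textit{CE} (valid precisely because the compared systems share one state set) is a clean observation the paper never needs to make explicit.
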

\begin{proof}
    Let $\STS=(S,T,\to,s_0)$ and
    $\STS'$ be any step transition system
    among $\STS^\mixrev$,
    $\STS^\setrev$, $\STS^\rev$, and
    $\STS^\splitrev$.
    We start with an auxiliary result.

\begin{lemma}
\label{lem-ood}
    Let $\alpha,\beta,\gamma,\delta\in\mult(T)$.
\begin{enumerate}
\item
    $s\xright{\alpha}_{\STS^\mixrev}s'$
    iff
    $s\xright{\alpha}_{\STS^\rev} s'$
    iff
    $s\xright{\alpha}_{\STS^\splitrev} s'$
    iff
    $s\xright{\alpha}_\STS s'$.
\item
    $s\xright{\ralpha}_{\STS^\mixrev}s'$
    iff
    $s\xright{\ralpha}_{\STS^\rev}s'$.
\end{enumerate}
\end{lemma}
\begin{proof}[Lemma~\ref{lem-ood}]
(1)
    The second and third equivalences are obvious, so we only show the
    first one.

($\Longrightarrow$)
    Suppose that
    $s\xright{\alpha}_{\STS^\mixrev}s'$.
    Then, by the definition
    of $\STS^\mixrev$, there is $r\in S$ such that
    $r\xright{\es+\alpha}_\STS$
    and
    $(s=)r\oplus\es\xright{\reverse{\es}+\alpha}_{\STS^\mixrev}r\oplus\alpha(=s')$.
    By
    Proposition~\ref{prop-zt}(2), $s=r$.
    Hence, by Proposition~\ref{prop-zt}(2), $r\oplus \alpha=s\oplus \alpha=s'$.
    As a result,
    $s\xright{\alpha}_\STS s'$.

($\Longleftarrow$)
    Suppose that
    $s\xright{\alpha}_\STS s'$.
    Then $s\xright{\es+\alpha}_\STS$ and so,
    by the definition
    of $\STS^\mixrev$,
    $s\oplus\es\xright{\reverse{\es}+\alpha}_{\STS^\mixrev} s\oplus\alpha$.
    By Proposition~\ref{prop-zt}(1), $s'=s\oplus\alpha$, and, by
    Proposition~\ref{prop-zt}(2), $s=s\oplus\es$.
    Hence
    $s\xright{\alpha}_{\STS^\mixrev} s'$.

(2)
($\Longrightarrow$)
    Suppose that
    $s\xright{\ralpha}_{\STS^\mixrev}s'$.
    Then, by the definition
    of $\STS^\mixrev$, there is $r\in S$ such that
    $(s=)r\oplus\alpha\xright{\ralpha+\es}_{\STS^\mixrev}r\oplus\es(=s')$ and
    $r\xright{\alpha+\es}_\STS $.
    By Proposition~\ref{prop-zt}(2), $s'=r$.
    Hence
    $s'\xright{\alpha}_\STS s$. Thus, by the definition of $\STS^\rev$,
    $s\xright{\ralpha}_{\STS^\rev}s'$.

($\Longleftarrow$)
    Suppose that $s\xright{\ralpha}_{\STS^\rev}s'$.
    Then, by the definition
    of $\STS^\rev$, $s'\xright{\alpha+\es}_\STS s$.
    Hence, by definition of $\STS^\mixrev$,
    $s'\oplus\alpha\xright{\ralpha+\es}_{\STS^\mixrev}s'\oplus\es$.
    By Proposition~\ref{prop-zt}(1), $s=s'\oplus\alpha$, and, by
    Proposition~\ref{prop-zt}(2), $s'=s'\oplus\es$.
    Hence $s\xright{\alpha}_{\STS^\mixrev} s'$.
\end{proof}

(1)
    Follows directly from the definitions and
    Lemma~\ref{lem-ood}(1,2).

(2)
    We discuss in turn the four properties
    defining \textsc{cest}-systems.

(\textit{EL} and \textit{REA})
    Follow directly from part (1) and the fact that $\STS$ satisfies
    \textit{EL} and \textit{REA}.

(\textit{SEQ})
    For $\STS^\setrev$, $\STS^\rev$, and $\STS^\splitrev$,
    \textit{SEQ} holds directly from the definitions.
    To show \textit{SEQ} for $\STS^\mixrev$, suppose that:
\[
    s\xright{\alpha_1+\alpha_2+\beta_1+\beta_2}_\STS
    ~\mbox{and}~
    s\oplus(\alpha_1+\alpha_2)
     \xright{\ralpha_1+
                 \ralpha_2+\beta_1+\beta_2}_{\STS^\mixrev}
     s\oplus(\beta_1+\beta_2)\;.
\]
    Then, by \textit{SEQ} for $\STS$,
    we have
    $s\oplus \alpha_2\xright{\alpha_1+\beta_1}_\STS$
    and
    $s\oplus \beta_1\xright{\alpha_2+\beta_2}_\STS$.
    Hence, by the definition of $\STS^\mixrev$,
\[
\begin{array}{lll}
    (s\oplus\alpha_2)\oplus\alpha_1
    & \xright{\ralpha_1+\beta_1}_{\STS^\mixrev}
    & (s\oplus\alpha_2)\oplus\beta_1
\\
    (s\oplus \beta_1)\oplus\alpha_2
    & \xright{\ralpha_2+\beta_2}_{\STS^\mixrev}
    & (s\oplus \beta_1)\oplus\beta_2\;.
\end{array}
\]
    Moreover, by Proposition~\ref{prop-ffffp}, we have:
\[
\begin{array}{l@{~}c@{~}l@{~}c@{~}l}
    s\oplus(\alpha_2+\alpha_1)
    &=&
    (s\oplus\alpha_2)\oplus\alpha_1
\\
    (s\oplus \beta_1)\oplus\beta_2
    &=&
    s\oplus(\beta_1+\beta_2)
\\
    (s\oplus\alpha_2)\oplus\beta_1
    &=&
    s\oplus(\alpha_2 +\beta_1)
    &=&
    (s\oplus \beta_1)\oplus\alpha_2 \;.
\end{array}
\]
    Hence,
    $
    s\oplus(\alpha_1+\alpha_2)
    \xright{\ralpha_1+\beta_1}_{\STS^\mixrev}
    s\oplus(\alpha_2+\beta_1)
    \xright{\ralpha_2+\beta_2}_{\STS^\mixrev}
    s\oplus (\beta_1+\beta_2)
    $.

\medskip
(\textit{CE})
    We first observe that
    $s\xright{\reverse{a}}_{\STS^\mixrev} s'$ implies
    $s'\xright{\reverse{a}}_{\STS^\mixrev} s$, by Lemma~\ref{lem-ood}
    and the definition of $\STS^\rev$ \textit{(*)}.

    We have already demonstrated that \textit{SEQ}
    holds for $\STS'$.
    Hence, by Propositions~\ref{prop-x1}, it suffices
    to show that \textit{CE} holds for $(\STS')^\seq$.

    By Propositions~\ref{prop-x1}, we have that $\STS^\seq$
    satisfies \textit{CE}.
    Moreover,
    by Lemma~\ref{lem-ood}(1) as well as the  definition
    of $\STS'$ and  \textit{(*)},
    $(\STS')^\seq$ can be derived by a successive application
    of the construction from the formulation of
    Proposition~\ref{prop-x2} (once for each reverse action
    and indexed reverse action).
    Hence, by Propositions~\ref{prop-x2},
    $(\STS')^\seq$ satisfies \textit{CE}.

(3)
    Let
    $N'=(P,T',F,M_0)$ be a \textsc{pt}-net such that
    $\STS' \simeq_\psi\CRG_{N'}$.
    We will show that $\STS\simeq_\psi \CRG_N$, where $N=N'|_T$.
    Note that the enabling and firing of steps
    over $T$ is exactly the same in both $N$ and $N'$
    \textit{(*)}.

    We first observe that $\psi(s_0)=M_0$.
    Suppose then that
    $s\in S$ and $\psi(s)\in\reach_N$.
    To show that the executions of steps are
    preserved by $\psi$ in both directions,
    we consider two cases for $\alpha\in\mult(T)$.

\medskip
\emph{Case 1:}
    $s\xright{\alpha}_\STS s'$.
    Then, by  part (1),
    $s\xright{\alpha}_{\STS'} s'$.
    Hence, by $\STS' \simeq_\psi \CRG_{N'}$,
    we have $\psi(s)\xright{\alpha}_{N'} \psi(s')$.
    Thus, by \textit{(*)},
    $\psi(s)\xright{\alpha}_N \psi(s')$.

\emph{Case 2:}
    $\psi(s)\xright{\alpha}_N M$.
    Then, by \textit{(*)},
    $\psi(s)\xright{\alpha}_{N'} M$.
    Hence, by $\STS' \simeq_\psi \CRG_{N'}$,
    we have
    $M\in\psi(S)$
    and
    $s\xright{\alpha}_{\STS'} \psi^{-1}(M)$.
    Thus, by  Lemma~\ref{lem-ood}(1),
    $s\xright{\alpha}_\STS \psi^{-1}(M)$.
\end{proof}

\section{Multiset and set reversibility}
\label{sect-1}

The investigation of different notions of
step reversibility starts with a
straightforward but important negative result stating
that, in the domain of \textsc{pt}-nets,
the concept of direct reversibility
--- which directly generalises sequential reversibility
and should be considered as the preferred way
of reversing step transition systems ---
cannot handle
steps which are true multisets.

\begin{proposition}
\label{th-1}
    Let $\STS$ be a \textsc{cest}-system
    which is not a set transition system.
    Then $\STS^\rev$ is not solvable.
\end{proposition}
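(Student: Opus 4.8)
The plan is to argue by contradiction, extracting from the hypothesis a small \emph{autoconcurrent} configuration that no \textsc{pt}-net solution can realise. Since $\STS$ is not a set transition system, some transition carries a true-multiset label, so there are a state $s^*$, an action $a$, and a step $\alpha$ with $s^*\xright{\alpha}_\STS$ and $\alpha(a)\geq 2$. Applying Proposition~\ref{prop-ffffp} with $\beta=\gamma=a$ (legitimate since $\alpha(a)\geq 2$), I pass to a cleaner witness: writing $s=s^*$, both $s\xright{a^2}_\STS s\oplus a^2$ and its sequentialisation $s\xright{a}_\STS s\oplus a\xright{a}_\STS s\oplus a^2$ hold. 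Consequently, $\STS^\rev$ contains the original steps (the two copies of $a$ and the step $a^2$) together with the reverse transitions $s\oplus a\xright{\reverse{a}}_{\STS^\rev}s$ and $s\oplus a^2\xright{\reverse{a}\,\reverse{a}}_{\STS^\rev}s$ arising from $\to_\rev$.

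Suppose, for contradiction, that $\STS^\rev\simeq_\psi\CRG_N$ for some \textsc{pt}-net $N=(P,T\uplus\rT,F,M_0)$, and set $M=\psi(s)$, $M_1=\psi(s\oplus a)$, and $M_2=\psi(s\oplus a^2)$, so that $M_1=M+\eff_N(a)$ and $M_2=M+2\eff_N(a)$. The transitions just listed translate into the enabling facts $2\PRE_N(a)\leq M$, $2\PRE_N(\reverse{a})\leq M_2$, and $\PRE_N(a)\leq M_1$ together with $\PRE_N(\reverse{a})\leq M_1$. The decisive observation is that every label occurring in $\STS^\rev$ lies in $\mult(T)$ or in $\mult(\rT)$ but never in both, because $\to$ uses only original actions and $\to_\rev$ only reverse actions. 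Hence the \emph{mixed} step $a+\reverse{a}$ is not a label of $\STS^\rev$, so the isomorphic $\CRG_N$ cannot fire it at any reachable marking; in particular, $a+\reverse{a}$ is \emph{not} enabled at $M_1$.

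I then localise the contradiction at a single place. Since $\PRE_N(a)+\PRE_N(\reverse{a})=\PRE_N(a+\reverse{a})\not\leq M_1$, there is a place $p$ with $\PRE_N(a)(p)+\PRE_N(\reverse{a})(p)>M_1(p)$. Writing $x=\PRE_N(a)(p)$, $y=\PRE_N(\reverse{a})(p)$, $m=M_1(p)$, and $d=\eff_N(a)(p)$, the two double-step enablings read $2x\leq M(p)=m-d$ and $2y\leq M_2(p)=m+d$; adding them cancels $d$ and yields $x+y\leq m$, contradicting $x+y>m$. This completes the argument.

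The conceptual heart, and the step I expect to need most care, is the middle paragraph: justifying that \emph{no} net solution can enable $a+\reverse{a}$ at $M_1$, even though both $a$ and $\reverse{a}$ are individually enabled there. This is precisely where direct reversibility (which forbids mixing originals with reverses) clashes with the \textsc{pt}-net firing rule, whose monotonicity forces the two single steps to combine unless blocked by a shared pre-place; the marking arithmetic then shows that any such blocking place would simultaneously destroy one of the two autoconcurrent double steps $a^2$ or $\reverse{a}\,\reverse{a}$. Everything else (the reduction to $a^2$ via Proposition~\ref{prop-ffffp}, and relating $\psi$ to $\oplus$ and $\eff_N$) is routine; note the argument never uses that $M$, $M_1$, $M_2$ are distinct, so the degenerate case $\eff_N(a)=\es$ is covered automatically.
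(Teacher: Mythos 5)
Your proposal is correct and follows essentially the same route as the paper's proof: you extract the same autoconcurrent witness $s\xright{a^2}_\STS$, observe as the paper does that the mixed step $(a\reverse{a})$ cannot be enabled at the middle marking $M_1=\psi(s\oplus a)$ since no label of $\STS^\rev$ mixes $T$ with $\rT$, and derive the same contradiction from the marking-equation identity $M+M_2=2M_1$ combined with $2\PRE_N(a)\leq M$ and $2\PRE_N(\reverse{a})\leq M_2$. The only cosmetic difference is that you cancel the effect $d$ coordinatewise at a single place, whereas the paper performs the equivalent vectorial computation $\PRE_N(a\reverse{a})+\PRE_N(a\reverse{a})\leq M_v+M_w=M_q+M_q$.
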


\begin{proof}
    [Figure~\ref{fig-ex1}($a$) illustrates the idea of the
    proof.]
    Let $\STS=(S,T,\to,s_0)$. Suppose that
    $\STS^\rev$ is solvable.
    Then there is a
    \textsc{pt}-net $N$ such that $\STS^\rev\simeq_\psi\CRG_N$
    \textit{(*)}.
    As $\STS$ is not a set transition system, there are
    $v\in S$ and $\alpha\in\mult(T)$ such that
    $v\xright{\alpha}_\STS$ and
    $(aa)\leq\alpha$, for some $a\in T$.

\begin{figure}[h]
\begin{center}
($a$)
\begin{tikzpicture}[node distance=1.3cm,>=arrow30,%
     line width=0.3mm,scale=1.0,bend angle=45]
     \tikzstyle{box}=[draw,regular polygon,thick,%
     regular polygon sides=4,minimum size=22mm, inner sep = -3pt]
\node (v) [label=left:$v$] at (0,0) {$\bullet$};
\node (q) [label=$q$] at (2,0) {$\bullet$};
\node (w) [label=right:$w$] at (4,0) {$\bullet$};

\draw[-arrow30] (v) to node [auto,swap,inner sep=2pt] {\small $a$} (q);
\draw[-arrow30] (q) to node [auto,swap,inner sep=2pt] {\small $a$} (w);
\draw[-arrow30] (w) to [out=150,in=30] node [auto,swap,inner sep=1pt]
{\small $\reverse{a}$} (q);
\draw[-arrow30] (q) to [out=150,in=30] node [auto,swap,inner sep=1pt]
{\small $\reverse{a}$} (v);

\draw[-arrow30] (v) to [out=45,in=135] node [auto,inner sep=2pt]
{\small $(aa)$} (w);
\draw[-arrow30] (w) to [out=-120,in=-60] node [auto,inner sep=2pt]
{\small $(\reverse{aa})$} (v);
\draw[-arrow30] (q) to [out=-30,in=-150, looseness=8]
node [auto,inner sep=2pt] {\small $(a\reverse{a})$} (q);
\end{tikzpicture}
~~~~~~~~~~~~~~~
\begin{tikzpicture}[scale=0.9]

\node[circle,draw,minimum size=0.7cm] (p2) at (3,1) {$\bullet$};
\node[circle,draw,minimum size=0.7cm] (p1) at (1,1) { };

\node[draw,minimum size=0.6cm] (a1) at (0,0){$a$};
\node[draw,minimum size=0.6cm] (a2) at (0,2){$\reverse{a}$};
\node[draw,minimum size=0.6cm] (b2) at (2,2){$\reverse{b}$};
\node[draw,minimum size=0.6cm] (b1) at (2,0){$b$};

\draw[-latex] (p1) to node [auto,inner sep=1pt] {} (a1);
\draw[-latex] (a1) to node [auto,inner sep=1pt] {} (p1);
\draw[-latex] (p1) to node [auto,inner sep=1pt] {} (a2);
\draw[-latex] (a2) to node [auto,inner sep=1pt] {} (p1);

\draw[-latex] (p2) to node [auto,inner sep=1pt] {} (b1);
\draw[-latex] (b1) to node [auto,inner sep=1pt] {} (p2);
\draw[-latex] (p2) to node [auto,inner sep=1pt] {} (b2);
\draw[-latex] (b2) to node [auto,inner sep=1pt] {} (p2);

\draw[-latex] (b1) to node [auto,inner sep=1pt] {} (p1);
\draw[-latex] (p1) to node [auto,inner sep=1pt] {} (b2);
\end{tikzpicture}
($b$)
\end{center}\vspace*{-3mm}
\caption{An illustration of the proof of
Proposition~\ref{th-1} ($a$), and \textsc{pt}-net
generating concurrent reachability graph which is not
step-finite ($b$).
\label{fig-ex1}}
\end{figure}
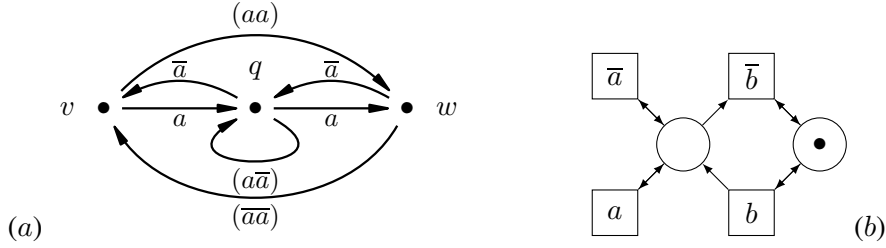

    By \textit{SEQ} for $\STS$ and Theorem~\ref{prop-new-iii}(1),
    there are $w,q\in S$ such that
    $v\xright{(aa)}_{\STS^\rev} w$ and $v\xright{a}_{\STS^\rev} q$
    \textit{(**)}.
    Hence,
    by the definition of $\STS^\rev$,
    $w\xright{(\reverse{a}\reverse{a})}_{\STS^\rev} v$
    \textit{(***)}.

    Let $M_s=\psi (s)$, for $s\in\{v,w,q\}$.
    By the definition of $\STS^\rev$ and \textit{(*)},
    the step $\beta=(a\reverse{a})$ is not enabled at $M_q$.
    Hence, there is a place $p$ of $N$ such that
    $M_q(p)<\PRE_N(\beta)(p)$ \textit{(\dag)}.
    On the other hand, by \textit{(**)} and \textit{(***)},
    we have:
\[
    \PRE_N(aa)
    \leq
    M_v
    ~~~~~~~
    \PRE_N(\reverse{a}\reverse{a})
    \leq
    M_w
    ~~~~~~~
    M_w
    =
    M_v+ \eff_N(aa)
    ~~~~~~~
    M_q
    =
    M_v+\eff_N(a)\;.
\]
    Thus
    $\PRE_N(\beta) +\PRE_N(\beta)
     =\PRE_N(aa\reverse{a}\reverse{a})\leq M_v + M_w
     =  M_v + M_v +  \eff_N(aa)= M_q+M_q$,
    yielding a contradiction with \textit{(\dag)}.
\end{proof}

In view of Proposition~\ref{th-1}, when facing the problem of
implementing a reverse of non-set step transition system $\STS$
using \textsc{pt}-nets, one may consider set reversibility
based on $\STS^\setrev$,
or mixed
reversibility based on $\STS^\mixrev$.\footnote
{
    We will discuss split reversibility separately
    in Section~\ref{sect-split}.
}

Among these two options, one might prefer $\STS^\setrev$ to
$\STS^\mixrev$ as the latter introduces steps containing both the original
and reverse actions.
However, as the next example shows,
it not always possible to `replace' a mixed reversibility
solution by a set reversibility solution.

\begin{example}
\label{ex-iidid}
    Let us consider a \textsc{cest}-system
    $\STS=(\{s_0,s_1,\dots\},\{a,b\},\to,s_0)$
    such that:
\[
    s_i\xright{a^j}_\STS s_i
    ~~~~~~~~\textrm{and}~~~~~~~~
    s_i\xright{b+a^j}_\STS s_{i+1}
    ~~ \textrm{for all}~i\geq 0~\textrm{and}~j\leq i \;.
\]
    It is straightforward to see
    that $\STS^\mixrev$ is solvable by the
    \textsc{pt}-net shown in Figure~\ref{fig-ex1}($b$).
    However,
    $\STS^\setrev$ is \emph{not} solvable by any \textsc{pt}-net.
    If such a \textsc{pt}-net $N$ existed, then
    it would have distinct reachable markings $M_0,M_1,\dots$
    satisfying, for every $i\geq 0$:
\[
    M_i\xright{b}_N M_{i+1} ~\textit{(*)}
    ~~~ ~~~ ~~~~
    M_i\xright{a^i}_N M_i ~ \textit{(**)}
    ~~~ ~~~ ~~~~
    M_i\xright{\reverse{a}}_N M_i ~ \textit{(***)}
    ~~~ ~~~ ~~~~
    \neg M_i\xright{(a\reverse{a})}_N   ~ \textit{(\dag)}\;.
\]
    We now observe that $M_0\leq M_1\leq \cdots$
    due to \textit{(*)}.
    Hence, there is
    a place $p$ such that
    $\PRE_N(a\reverse{a})(p)>M_0(p)=M_1(p)=\cdots$ \textit{($\ddag$)},
    due to
    \textit{($\dag$)} and the finiteness of
    $N$.
    On the other hand,
    $\PRE_N(\reverse{a})(p)\leq M_0(p)=M_1(p)=\cdots$
    due to
    \textit{(***)}, and
    $\PRE_N(a)(p)=0$
    due to \textit{(**)} and \textit{($\ddag$)}.
    As a result, $\PRE_N(a\reverse{a})(p)\leq M_0(p)$,
    yielding a contradiction with \textit{($\ddag$)}.
\hfill$\diamondsuit$
\end{example}

Example~\ref{ex-iidid} demonstrated that there are
step transition systems which can be treated using mixed
reversibility, but not using set reversibility.
What is more, the example worked because the step transition
system considered was not step-finite.
As the next result shows, that was the only reason why
set reversibility failed to hold.

\begin{theorem}
\label{pr-mix2set}
    Let $\STS$ be a \textsc{cest}-system such
    that $\STS^\mixrev$ is solvable.
    Then $\STS^\setrev$ is solvable if and only if
    $\STS$ is step-finite.
\end{theorem}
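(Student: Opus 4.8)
The plan is to prove the two implications separately, invoking the hypothesis that $\STS^\mixrev$ is solvable only for the right-to-left direction. For the forward implication (if $\STS^\setrev$ is solvable then $\STS$ is step-finite) I would argue by contraposition, generalising the counterexample of Example~\ref{ex-iidid}. For the converse I would take a net solving $\STS^\mixrev$ and augment it with auxiliary places that disable exactly the steps of $\STS^\mixrev$ which are absent from $\STS^\setrev\bt\STS^\mixrev$, namely the mixed steps and the reverses of true multisets.

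\textbf{Forward direction.} Assume $\STS$ is not step-finite. Since $T$ is finite, infinitely many distinct steps force some action $a$ to occur with unbounded multiplicity, so there are infinitely many $k\geq 2$, each admitting a state $v_k$ with $v_k\xright{a^k}_\STS$ (applying Proposition~\ref{prop-ffffp} to extract $a^k$ from a larger enabled step). Suppose $\STS^\setrev\simeq_\psi\CRG_N$ and put $q_k=v_k\oplus a$ and $M_{q_k}=\psi(q_k)$. Using Proposition~\ref{prop-ffffp} once more, $a$ is enabled at $q_k$, and $\reverse{a}$ is enabled there via the singleton set-reverse $(v_k\oplus a,\reverse{a},v_k)\in\to_\setrev$, whereas the mixed step $(a\reverse{a})$ --- never a transition of $\STS^\setrev$ --- is not. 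In $N$ this yields $\PRE_N(a)\leq M_{q_k}$ and $\PRE_N(\reverse{a})\leq M_{q_k}$, together with a place $p_k$ satisfying $M_{q_k}(p_k)<\PRE_N(a)(p_k)+\PRE_N(\reverse{a})(p_k)$. By the pigeonhole principle one place $p$ witnesses the disabling for infinitely many $k$; there $M_{q_k}(p)$ stays below the fixed bound $\PRE_N(a)(p)+\PRE_N(\reverse{a})(p)$ yet satisfies $M_{q_k}(p)\geq k\cdot\PRE_N(a)(p)+\eff_N(a)(p)$, which forces $\PRE_N(a)(p)=0$; the bound then reads $M_{q_k}(p)<\PRE_N(\reverse{a})(p)$, contradicting $\PRE_N(\reverse{a})(p)\leq M_{q_k}(p)$. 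Hence $\STS^\setrev$ is unsolvable.

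\textbf{Converse direction.} Let $N'$ solve $\STS^\mixrev$, and for each $a\in T$ let $m_a\geq 1$ bound the multiplicity of $a$ across the (finitely many, by step-finiteness) steps of $\STS$. I would form $N$ by adding to $N'$, for every $a\in T$, a self-loop place $r_a$ with one token and unit arcs to and from $\reverse{a}$, and, for every pair $a,b\in T$, a self-loop place $q_{a,b}$ with $m_a$ tokens, unit arcs to and from $a$, and arcs of weight $m_a$ to and from $\reverse{b}$. All added places have null effect, so they constrain only enabling and leave the reachable markings of $N'$ on the original places unchanged. The place $r_a$ caps each reverse at multiplicity one per step, so the surviving reverse-only steps are exactly the reverses of sets, while $q_{a,b}$ overflows precisely when a step contains both $a$ and $\reverse{b}$, so every mixed step is blocked; conversely every original step and every set-reverse fits all added places. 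Using Lemma~\ref{lem-ood} and the definitions of $\STS^\rev$, $\STS^\setrev$, and $\STS^\mixrev$, the steps enabled in $N$ are then exactly those of $\STS^\setrev$, while every state stays reachable using original steps alone (as $\STS\bt\STS^\setrev$ by Theorem~\ref{prop-new-iii}(1) and $\STS$ satisfies \textit{REA}); hence $\STS^\setrev\simeq\CRG_N$.

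\textbf{Main obstacle.} The delicate point is the blocking construction in the converse direction. A single global place cannot separate a large pure-original step from an original-plus-reverse step, so one is forced to block mixing \emph{pairwise} through the places $q_{a,b}$, and it is exactly here that step-finiteness is indispensable, since it is what makes the weights $m_a$ finite. Verifying that the added places block all and only the unwanted steps --- the mixed steps and the reverses of true multisets --- while preserving every legitimate original step and set-reverse is the crux of the argument.
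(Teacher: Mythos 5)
Your proposal is correct and takes essentially the same route as the paper: the forward direction is the paper's own argument that unbounded autoconcurrency of some $a$ makes the forbidden mixed step $(a\reverse{a})$ unavoidably enabled (the paper fixes a single large marking via an increasing chain $M_1\leq M_2\leq\cdots$ and the threshold $m=\max_p F(p,\reverse{a})$, while you reach the same contradiction by a pigeonhole over places plus a growth argument forcing $\PRE_N(a)(p)=0$ --- an immaterial difference). In the converse direction your places $q_{a,b}$ (together with the redundant $r_a$, which $q_{b,b}$ already subsumes) are precisely the paper's self-loop places $p_{ab}$ with $F(a,p_{ab})=F(p_{ab},a)=1$ and $F(\reverse{b},p_{ab})=F(p_{ab},\reverse{b})=k$, initially marked with the global step-size bound $k$ in place of your per-action bounds $m_a$.
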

\begin{proof}
    Let $\STS=(S,T,\to,s_0)$.

($\Longrightarrow$)
    Suppose that $\STS^\setrev$ is solvable by a \textsc{pt}-net
    $N=(P,T\cup\rT,F,M_0)$, and that $\STS$ is not step-finite.
    By the finiteness of $P$ and $T$ as well as \textit{SEQ} for $\STS$,
    there is $a\in T$ and reachable markings $M_1\leq M_2\leq\dots$
    such that $M_i\xright{a^i}_N$,
    for every $i\geq 1$.
    Hence, by \textit{SEQ} for $\CRG_N$,
    there is a marking $M'_i$
    such that
    $M_i\xright{a}_N M'_i$
    and
    $M'_i\xright{a^{i-1}}_N$ \textit{($*$)},
    for every $i\geq 1$.
    As a result, $M'_i \xright{a}_N$
    and $ M'_i \xright{\reverse{a}}_N$ \textit{($**$)}, for every $i\geq 2$.

    We now observe that $(M=)M'_{m+2} \xright{(a\reverse{a})}_N$,
    where $m= max \{ F(p,\reverse{a})\mid p \in P\}$.
    Indeed, otherwise there is $p\in P$ such that
    $M(p)< F(p,a)+F(p,\reverse{a})\leq F(p,a)+m$ \textit{($\dag$)}.
    On the other hand, by \textit{($**$)},
    $M(p)\geq F(p,a)$ and $M(p)\geq F(p,\reverse{a})$.
    Hence, it must be the case that
    $F(p,a)>0$.
    Thus, by \textit{($*$)},
    $M(p)\geq (m+1)\cdot F(a,p)=m +F(a,p)$, contradicting \textit{($\dag$)}.
    As a result,
    $M\xright{(a\reverse{a})}_N$, yielding a contradiction with our
    initial assumption.

\medskip

($\Longleftarrow$)
    If $\STS$ is step-finite, then there is
    $k\geq 1$ such that
    $|\alpha|\leq k$, whenever $s\xright{\alpha}_\STS$.
    Moreover, since $\STS^\mixrev$ is solvable,
    there exists a \textsc{pt}-net $N=(P,T\cup\rT,F,M_0)$
    such that $\STS^\mixrev\simeq_\psi \CRG_N$.
    We then modify $N$,  by adding to $P$
    a set of fresh places
    $P'=\{p_{ab}\mid a\in T\wedge b\in\rT\}$. Each $p_{ab}$
    is such that $M_0(p_{ab})=k$ and has
    four non-zero connections,
    $F(a,p_{ab})=F(p_{ab},a)=1$
    and
    $F(b,p_{ab})=F(p_{ab},b)=k$.
    For the resulting
    \textsc{pt}-net $N'$, we have
    $\STS^\setrev \simeq_{\psi'} \CRG_{N'}$,
    where
    $\psi'(s)=\psi(s)+\sum_{p\in P'}p^k$,
    for every $s\in S$.
\end{proof}

We have therefore obtained a full characterisation of
step transition systems for which mixed reversibility solutions
can be replaced by set reversibility solutions.
In addition, the second part of
the proof of Theorem~\ref{pr-mix2set} provides
a straightforward construction achieving this.

A direct corollary of the last result is that
for a set step transition  system it is always possible to replace
a mixed reversibility solution
by a set reversibility solution.

\begin{theorem}
\label{cor-ddd}
    Let $\STS$ be a set \textsc{cest}-system.
    If $\STS^\mixrev$ is solvable, then
    $\STS^\rev$ is also solvable.
\end{theorem}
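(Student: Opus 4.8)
The plan is to reduce this corollary directly to Theorem~\ref{pr-mix2set} by means of two simple observations. First I would note that, since $\STS$ is a set transition system, every step $\alpha$ occurring in $\STS$ satisfies $\supp(\alpha)=\alpha$ and is thus a subset of the action set $T$. As $T$ is finite by the definition of a step transition system, there are at most $2^{|T|}$ such subsets, so the set $\{\alpha\mid s\xright{\alpha}_\STS\}$ is finite and hence $\STS$ is step-finite. This is the only point that genuinely uses the hypothesis that $\STS$ is a set system rather than an arbitrary \textsc{cest}-system.

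Second, I would observe that for a set transition system the constructions $\STS^\rev$ and $\STS^\setrev$ coincide. The two systems have the same state set $S$ and the same action set $T\uplus\rT$, and their transition relations $\to\cup\to_\rev$ and $\to\cup\to_\setrev$ differ only in the side condition $\supp(\alpha)=\alpha$ that $\to_\setrev$ imposes on the reversed step $\alpha$. But when $\STS$ is a set system this side condition holds for every step $\alpha$ with $s\xright{\alpha}_\STS$, so $\to_\rev=\to_\setrev$ and therefore $\STS^\rev=\STS^\setrev$ as step transition systems.

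With these two facts in hand the conclusion is immediate. Since $\STS^\mixrev$ is solvable and $\STS$ is step-finite, Theorem~\ref{pr-mix2set} yields the solvability of $\STS^\setrev$; and since $\STS^\setrev=\STS^\rev$, this is precisely the solvability of $\STS^\rev$ that we wish to establish. The only step requiring any care is the step-finiteness argument, and even that is routine once one recalls the standing assumption that the action set $T$ is finite.
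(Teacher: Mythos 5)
Your proof is correct and takes essentially the same approach as the paper's: the paper likewise notes that a set \textsc{cest}-system is step-finite and that $\STS^\rev=\STS^\setrev$, and then invokes Theorem~\ref{pr-mix2set}. You merely spell out the two observations (the $2^{|T|}$ bound on steps and the coincidence of $\to_\rev$ and $\to_\setrev$) that the paper leaves implicit.
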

\begin{proof}
    As a set \textsc{cest}-system, $\STS$ is step-finite and
    $\STS^\rev=\STS^\setrev$.
	Hence the result follows from Theorem~\ref{pr-mix2set}.
\end{proof}

A concluding observation is that all three
versions of reversibility which do not involve splitting are
worthy of investigation.

\section{Mixed reversibility}
\label{sect-2}

In this section, we consider the problem of deciding whether
the mixed reverse $\STS^\mixrev$ of a solvable step transition system
$\STS$ is also solvable.
A specific concern we implicitly address is
the size of $\STS^\mixrev$ which (in the finite case)
can be exponentially larger than that of $\STS$.
The aim is therefore to avoid dealing directly with $\STS^\mixrev$.
As shown below, this is possible as
the checking of feasibility of mixed reversing can  be
replaced by checking the solvability of
the original transition system,
and the solvability of its reverse.

\medskip
Throughout this section we make the following assumptions:
\begin{itemize}
\item
    $\STS=(S,T,\to,s_0)$ is a \textsc{cest}-system and
    $R$ is a home cover of $\STS$.
\item
    $\rSTS=(S,\rT,\{(s',\reverse{\alpha},s)\mid s\xright{\alpha}_\STS s'\},R)$
    is a step transition system
    with multiple initial states.
\item
    $\rSTS_r=(S_r,\rT,\to_r,r)$ is a step transition system such that
    $r\in R$,
    $S_r=\{s\in S\mid r\in\pred_\STS(s)\}$,
    and
    $\to_r\,=\,\to\cap\,(S_r\times\mult(T)\times S_r)$.
\end{itemize}
That is, $\rSTS$ is obtained by reversing each transition of $\STS$,
and considering all the states in the home cover $R$ as the initial states.

\begin{proposition}
\label{prop-dhhd}
    Let $r\in R$.
\begin{enumerate}
\item
    $\rSTS_r$ is  a \textsc{cest}-system.
\item
    $s_0 \in \bigcap_{s\in S_r}\pred_\STS(s)$.
\item
    $S=\bigcup_{r\in R}S_r$.
\end{enumerate}
\end{proposition}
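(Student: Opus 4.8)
The plan is to treat parts (2) and (3) as immediate consequences of \textit{REA} for $\STS$ and of the definition of a home cover, and to spend essentially all the effort on part (1); within part (1), only the constant effect property \textit{CE} is delicate. Throughout I use that $S_r$ is exactly the set of states reachable from $r$ in $\rSTS$, equivalently the set $\pred_\STS(r)$ of states from which $r$ is reachable in $\STS$, and that this set is closed under predecessors: if $s\xright{\alpha}_\STS s'$ and $s'\in S_r$, then $s\in S_r$, since $s$ reaches $r$ through $s'$.

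For part (3) I would simply write $\bigcup_{r\in R}S_r=\bigcup_{r\in R}\pred_\STS(r)$, which equals $S$ by the very definition of $R$ being a home cover. For part (2) I would invoke \textit{REA} for $\STS$, namely $s_0\in\pred_\STS(s)$ for every $s\in S$; since $S_r\subseteq S$, this gives $s_0\in\pred_\STS(s)$ for every $s\in S_r$, which is exactly the assertion $s_0\in\bigcap_{s\in S_r}\pred_\STS(s)$.

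For part (1) I would check \textit{REA}, \textit{EL} and \textit{SEQ} directly. For \textit{REA}, given $s\in S_r$ I take a forward path $s=u_0\xright{~}_\STS\cdots\xright{~}_\STS u_k=r$; predecessor-closure puts every $u_i$ in $S_r$, so reversing this path yields a path from $r$ to $s$ lying entirely in $\rSTS_r$. For \textit{EL}, I reverse the loop $s\xright{\es}_\STS s$, noting $\reverse{\es}=\es$. For \textit{SEQ}, I read a reverse step $s\xright{\reverse{\gamma}}_{\rSTS_r}s'$ as the forward step $s'\xright{\gamma}_\STS s$, split $\gamma=\gamma_1+\gamma_2$ through the intermediate state $s''=s'\oplus\gamma_2$ using \textit{SEQ} for $\STS$ (with $s''\in S_r$ by predecessor-closure), and reverse the two resulting forward steps to obtain $s\xright{\reverse{\gamma_1}}_{\rSTS_r}s''\xright{\reverse{\gamma_2}}_{\rSTS_r}s'$.

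The main obstacle is \textit{CE} for $\rSTS_r$, and the key observation I would exploit is that $\rSTS_r$ is obtained from the forward subsystem $\STS'=(S_r,T,\to\,\cap\,(S_r\times\mult(T)\times S_r),r)$ by two structural operations, each preserving \textit{CE}: reversing the direction of every transition, and renaming every action $a$ to $\reverse{a}$. Indeed a transition $(s,\alpha,s')$ of $\STS'$ becomes $(s',\reverse{\alpha},s)$ in $\rSTS_r$, so the two systems have the same undirected paths up to this relabelling, and their signatures differ only by the renaming followed by negation. Renaming is a bijection on actions and hence preserves $\bowtie$ and \textit{CE}; reversing the transitions interchanges the `forward' and `backward' versions of \textit{CE}, both of which hold for any system satisfying \textit{CE} as already noted in the excerpt; and $\bowtie$ is closed under negation. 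It therefore suffices that $\STS'$ itself satisfies \textit{CE}, which it inherits from $\STS$: restricting the transition relation to $S_r$ can only make $\bowtie$ finer, so $\bowtie_{\STS'}\,\subseteq\,\bowtie_\STS$, and any equivalence of signatures witnessed in $\STS'$ already forces equal target states by \textit{CE} for $\STS$. (Alternatively, one could verify \textit{CE} on $(\rSTS_r)^\seq$ via Propositions~\ref{prop-x1} and~\ref{prop-x2}, adding the reverse actions one at a time, but the direct signature argument seems cleaner.)
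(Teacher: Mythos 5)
Your proof is correct and follows essentially the paper's route: the paper establishes \textit{CE} for $\rSTS_r$ by exhibiting, for each $\pi\in\paths_{\rSTS_r}(s,s')$, a path $\pi'\in\paths_\STS(s,s')$ with $\sign(\pi')=\widehat{\sign(\pi)}$, where $\widehat{\alpha}(a)=-\alpha(\reverse{a})$ --- exactly the composite of your reversal (negation) and renaming steps, with your restriction step absorbed into the observation that such counterpart paths live directly in $\STS$ --- and it likewise dispatches parts (2) and (3) by \textit{REA} and the home-cover definition, just as you do. Your explicit checks of \textit{REA}, \textit{EL}, and \textit{SEQ} (and your reading of $S_r$ as $\pred_\STS(r)$) merely spell out what the paper leaves implicit when it states that ``the only non-trivial property to show is \textit{CE}''.
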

\begin{proof}
(1)
    The only non-trivial property to show is \textit{CE}.
    For every $\mathcal A$-vector $\alpha$ with support $\rT$, let
    $\widehat{\alpha}$ be the $\mathcal A$-vector with support $T$
    such that
    $\widehat{\alpha}(a)=-\alpha(\reverse{a})$, for every $a\in T$.

    We first observe that, for every $\pi\in\paths_{\rSTS_r}(s,s')$, there is
    $\pi'\in\paths_\STS(s,s')$ such that $\sign(\pi')=\widehat{\sign(\pi)}$
    \textit{(*)}.
    Hence, we also have that $\alpha\bowtie_{\rSTS_r}\beta$
    implies $\widehat{\alpha}\bowtie_\STS\widehat{\beta}$,
    for all $\mathcal A$-vectors $\alpha$ and $\beta$ with support $\rT$
    \textit{(**)}.
    Thus, $\rSTS_r$ satisfies \textit{CE} by \textit{(*)} and \textit{(**)}.\smallskip

\noindent (2)
    Follows from the fact that $\STS$ satisfies \textit{REA}.\smallskip

\noindent (3)
    Follows from the fact that $R$ is a home cover.
\end{proof}

\begin{theorem}
\label{th-symsyn}
    $\STS^\mixrev$ is solvable if and only if
    both
    $\STS$ and $\rSTS$
    are  solvable.
\end{theorem}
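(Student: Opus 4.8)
The plan is to prove the two implications separately. For the forward direction, solvability of $\STS$ is immediate from Theorem~\ref{prop-new-iii}(3). To obtain solvability of $\rSTS$, I would take a net $N=(P,T\uplus\rT,F,M_0)$ with $\STS^\mixrev\simeq_\psi\CRG_N$ and pass to the unmarked net $N|_{\rT}=(P,\rT,F')$ together with the same mapping $\psi$. By Lemma~\ref{lem-ood}(2) the $\rT$-labelled transitions of $\STS^\mixrev$ coincide with those of $\STS^\rev$, which are exactly the transitions of $\rSTS$. Restricting the firing of $N$ to $\rT$ changes neither the enabling nor the effect of any reverse action, so for each $r\in R$ the reachable markings of $(P,\rT,F',\psi(r))$ are precisely the $\psi$-images of the states reverse-reachable from $r$, i.e.\ the states of $\rSTS_r$, and the induced bijection is $\psi|_{S_r}$; hence $\rSTS$ is solvable.

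The substantive direction is the converse. Assume $\STS\simeq_\phi\CRG_{N_1}$ with $N_1=(P_1,T,F_1,\phi(s_0))$, and that $\rSTS$ is solved by an unmarked net $(P_2,\rT,F_2)$ via $\chi:S\to\mult(P_2)$, so $\rSTS_r\simeq_{\chi|_{S_r}}\CRG_{(P_2,\rT,F_2,\chi(r))}$ for every $r\in R$. I would build a single net $N$ on the disjoint place set $P_1\uplus P_2$ and action set $T\uplus\rT$ as follows: on $P_1$ take $N_1$ and add, for each $a\in T$, a \emph{strict} reverse $\reverse{a}$ (so $\PRE_N(\reverse{a})|_{P_1}=\POST_{N_1}(a)$ and $\POST_N(\reverse{a})|_{P_1}=\PRE_{N_1}(a)$); symmetrically, on $P_2$ take $N_2=(P_2,\rT,F_2)$ and let each forward $a\in T$ act as the strict reverse of $\reverse{a}$. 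The initial marking is $M_0=\phi(s_0)\munion\chi(s_0)$, and the candidate isomorphism is $\psi(s)=\phi(s)\munion\chi(s)$. Since $\eff_N(\reverse{a})|_{P_1}=-\eff_{N_1}(a)$ and $\eff_N(a)|_{P_2}=-\eff_{N_2}(\reverse{a})$, and since $\phi(s\oplus\alpha)=\phi(s)+\eff_{N_1}(\alpha)$ while $\chi(s\oplus\alpha)=\chi(s)-\eff_{N_2}(\ralpha)$, the effect of any mixed step $\ralpha+\beta$ on both components matches $\psi(s\oplus\beta)-\psi(s\oplus\alpha)$, so effects are correct by construction.

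The crux is to verify that the \emph{enabling} of steps in $N$ at a reachable marking $\psi(t)$ agrees with $\STS^\mixrev$. For a mixed step $\ralpha+\beta$ I would first recover a base state $s$ with $t=s\oplus\alpha$ and $s\xright{\alpha}_\STS$: the $P_2$-enabling of the $\ralpha$-part together with $(P_2,\rT,F_2)$ solving $\rSTS$ supplies such an $s$. The strict-reverse bookkeeping on $P_1$ then collapses the $P_1$-condition $\POST_{N_1}(\alpha)+\PRE_{N_1}(\beta)\leq\phi(t)=\phi(s)+\eff_{N_1}(\alpha)$ to the single inequality $\PRE_{N_1}(\alpha+\beta)\leq\phi(s)$, which is exactly $s\xright{\alpha+\beta}_\STS$ --- precisely the side condition defining the transition $s\oplus\alpha\xright{\ralpha+\beta}_\mixrev s\oplus\beta$. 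Conversely, to show that this side condition makes the step fireable I must check that the symmetric $P_2$-condition $\POST_{N_2}(\reverse{\alpha+\beta})\leq\chi(s)$ never blocks. Here the home cover is essential: since $R$ is a home cover, $s\oplus(\alpha+\beta)\in\pred_\STS(r)$ for some $r\in R$, so the reverse transition $s\oplus(\alpha+\beta)\xright{\reverse{\alpha+\beta}}_{\rSTS} s$ lies in $\rSTS_r$ and is realised by $(P_2,\rT,F_2)$ at $\chi(s\oplus(\alpha+\beta))$, which rearranges to exactly $\POST_{N_2}(\reverse{\alpha+\beta})\leq\chi(s)$.

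I expect this last point --- reconciling the $P_2$-enabling of mixed steps with the $\STS$-side condition through the home cover --- to be the main obstacle, since it is the only place where the multiple-initial-state formulation of $\rSTS$ and the choice of $R$ are genuinely used. The remaining bookkeeping is routine: pure forward steps reduce to $N_1$ on $P_1$ (with the $P_2$-guard non-blocking by the same home-cover argument), pure reverse steps to $N_2$ on $P_2$, injectivity of $\psi$ follows from that of $\phi$, and $\reach_N=\psi(S)$ follows from \textit{REA} for $\STS^\mixrev$ together with the established transition correspondence. Assembling these yields $\CRG_N\simeq_\psi\STS^\mixrev$, completing the converse.
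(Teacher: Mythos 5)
Your proposal is correct and takes essentially the same route as the paper: the forward direction restricts a solution of $\STS^\mixrev$ to $\rT$ via Lemma~\ref{lem-ood}(2) exactly as the paper does, and your combined net on $P_1\uplus P_2$, in which $\reverse{a}$ acts as the strict reverse of $a$ on the $\STS$-side places and $a$ as the strict reverse of $\reverse{a}$ on the $\rSTS$-side places, is precisely the construction of Eq.(\ref{eq-ppp1}) with the same mapping $\psi(s)=\phi(s)\munion\chi(s)$. Your home-cover argument showing that the guard $\POST_{N_2}(\reverse{\alpha+\beta})\leq\chi(s)$ never blocks is the same step the paper performs inside Case~2.1 of Lemma~\ref{lemma-77d}; the only difference is organisational, since the paper first proves $\STS^\rev\simeq_\psi\STS'$ for the restriction of $\CRG_N$ to non-mixed steps and then derives the mixed-step cases from it, whereas you handle mixed steps directly.
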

\begin{proof}
($\Longrightarrow$)
    By Theorem~\ref{prop-new-iii}(3), $\STS$ is solvable.
    To show that $\rSTS$ is solvable,
    suppose that $N=(P,T,F,M_0)$ is a \textsc{pt}-net such that
    $\STS^\mixrev \simeq_\psi\CRG_N$.
    We will show that $\rSTS_r\simeq_{\psi|_{S_r}} \CRG_{N_r}$,
    where, for every $r\in R$, $N_r$ is the \textsc{pt}-net $N|_{\rT}$
    with the initial marking set to $\psi(r)$.
    Note that the enabling and firing of steps
    over $\rT$ is exactly the same in both $N$ and $N_r$
    \textit{(*)}.

    We first observe that the initial states
    of $\rSTS_r$ and $\CRG_{N_r}$ are related by $\psi$.
    Suppose then that
    $s\in S_r$ is such that $\psi(s)\in\reach_{N_r}$.
    To show that the executions of steps are
    preserved by $\psi$ in both directions,
    we consider two cases, where $\alpha\in\mult(T)$.

\medskip
\emph{Case 1.1:}
    $s\xright{\ralpha}_{\rSTS_r} s'$.
    Then $s\xright{\ralpha}_{\STS^\rev}s'$
    and so, by Lemma~\ref{lem-ood}(2),
    $s\xright{\ralpha}_{\STS^\mixrev} s'$.
    Hence, by $\STS^\mixrev \simeq_\psi \CRG_N$, we have
    $\psi(s)\xright{\ralpha}_N \psi(s')$.
    Thus, by \textit{(*)},
    $\psi(s)\xright{\ralpha}_{N_r} \psi(s')$.

\emph{Case 1.2:}
    $\psi(s)\xright{\ralpha}_{N_r} M$.
    Then, by \textit{(*)},
    $\psi(s)\xright{\ralpha}_N M$.
    Hence, by $\STS^\mixrev \simeq_\psi \CRG_N$,
    we have $M\in\psi(S)$ and
    $s\xright{\ralpha}_{\STS^\mixrev} \psi^{-1}(M)$.
    Thus, by Lemma~\ref{lem-ood}(2),
    $s\xright{\ralpha}_{\STS^\rev} \psi^{-1}(M)$.
    Hence
    $s\xright{\ralpha}_{\rSTS_r} \psi^{-1}(M)$.

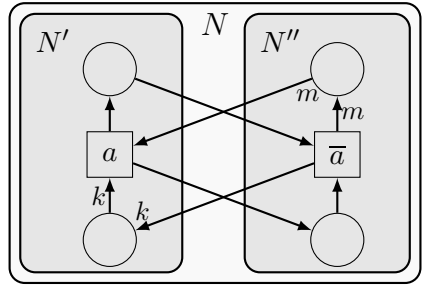
\begin{figure}[h]
\begin{center}
\begin{tikzpicture}[scale=0.75]
\node[] (n1) at(-1,4.5){$N'$};
\node[circle,draw,minimum size=0.7cm] (p1) at (0,4) {};
\node[circle,draw,minimum size=0.7cm] (p1p) at (0,1) {};
\node[] (x1) at(1,4.5){};
\node[draw,minimum size=0.6cm] (t1) at (0,2.5){$a$};
\node[] (n2) at(3,4.5){$N''$};
\node[circle,draw,minimum size=0.7cm] (p2) at (4,4) {};
\node[circle,draw,minimum size=0.7cm] (p2p) at (4,1) {};
\node[] (x2) at(5,4.5){};
\node[draw,minimum size=0.6cm] (t2) at (4,2.5){$\reverse{a}$};
\node[] (n3) at(1.85,4.8){\large $N$};
\draw[-latex,thick] (t1) to node [auto,inner sep=1pt] {} (p1);
\draw[-latex,thick] (p1p) to node [auto,inner sep=1pt] {\small $k$} (t1);
\draw[-latex,thick] (p2) to node [auto,inner sep=1pt] {} (t1);
\draw[-latex,thick] (t1) to node [auto,inner sep=1pt] {} (p2p);
\draw[-latex,thick] (t2) to node [auto,inner sep=1pt,xshift=-35,yshift=-0.3]
{\small $k$} (p1p);
\draw[-latex,thick] (p1) to node [auto,swap,inner sep=1pt,xshift=37,yshift=10]
{\small $m$} (t2);
\draw[-latex,thick] (p2p)to node [auto,inner sep=1pt] {} (t2);
\draw[-latex,thick] (t2) to node [auto,swap,inner sep=1pt] {\small $m$} (p2);

\begin{pgfonlayer}{background}
\node[fill=gray!5,thick,draw=black,rounded corners=2mm,inner sep=15pt]
(bg3) [fit = (n1) (n2) (p2p) (x2), inner sep=6pt] {};
\end{pgfonlayer}
\begin{pgfonlayer}{background}
\node[fill=gray!20,thick,draw=black,rounded corners=2mm]
(bg1) [fit = (n1) (x1) (p1p), inner sep=2pt] {};
\end{pgfonlayer}
\begin{pgfonlayer}{background}
\node[fill=gray!20,thick,draw=black,rounded corners=2mm]
(bg2) [fit = (n2) (x2) (p2p), inner sep=2pt] {};
\end{pgfonlayer}
\end{tikzpicture}
\end{center}\vspace*{-3mm}
\caption{
    An illustration of the second part of
    the proof of Theorem~\ref{th-symsyn}.
\label{fig-a2} }\vspace*{1mm}
\end{figure}

($\Longleftarrow$)
    Since $\STS$ is solvable, there is a \textsc{pt}-net
    $N'=(P',T,F',M'_0)$ such that $\STS\simeq_{\psi'}\CRG_{N'}$.
    (Note that $\psi'(s_0)=M'_0$.)
    Moreover, since $\rSTS$ is solvable,
    there is an umarked \textsc{pt}-net $N''=(P'',\rT,F'')$
    and a mapping $\psi'':S\to\mult(P'')$ such that
    $\rSTS_r\simeq_{\psi''|_{S_r}}\CRG_{N_r}$,
    where
    $N_r=(P'',\rT,F'',M_r)$ and $M_r=\psi''(r)$,
    for every $r\in R$.
    Clearly, we may assume that $P'\cap P''=\es$
    as the identities of places play no role in
    the solvability problems of $\STS$ and $\rSTS$.

\medskip
    Let
    $N =(P'\cup P'', T\cup\rT,F,M_0)$ be the \textsc{pt}-net
    with
    strict reverses (illustrated in Figure~\ref{fig-a2}) such that
    $M_0=M'_0\munion\psi''(s_0)=\psi'(s_0)\munion\psi''(s_0)$
    and, for every $a\in T$:
\begin{equation}
\label{eq-ppp1}
\begin{array}{l@{~}c@{~}l@{~}c@{~}l@{~~~~~~~~~~~~}l@{~}c@{~}l@{~}c@{~}l}
    \PRE _N(a)
    &=&
    \PRE _{N'}(a)
    &\munion&
    \POST_{N''}(\reverse{a})
    &
    \POST_N(a)
    &=&
    \POST_{N'}(a)
    &\munion&
    \PRE _{N''}(\reverse{a})
\\
    \PRE _N(\reverse{a})
    &=&
    \PRE _{N''}(\reverse{a})
    &\munion&
    \POST_{N'}(a)
    &
    \POST_N(\reverse{a})
    &=
    &\POST_{N''}(\reverse{a})
    &\munion&
    \PRE _{N'}(a) \;.
\end{array}
\end{equation}

    Let $\psi$ be a mapping with the domain $S$ which, for
    every $s\in S$, returns $\psi'(s)\munion\psi''(s)$.
    Note that $\psi$ is well-defined due to Lemma~\ref{prop-dhhd}(3)
    and   $\psi(s_0)=M_0$.

\begin{lemma}
\label{lemma-77d}
    Let $\STS'$ be $\CRG_N$ with all the
    transitions labelled by  steps of the form $\alpha+\rbeta$, for
    $\alpha,\beta\neq\es$, deleted.
\begin{enumerate}
\item
    $\STS^\rev\simeq_\psi \STS'$.
\item
    $\STS'$ satisfies \textit{REA}.
\item
    $\psi(s\oplus\alpha)=\psi(s)+\eff_N(\alpha)$, for all $s\xright{\alpha}_\STS$.
\end{enumerate}
\end{lemma}
\begin{proof}
[Lemma~\ref{lemma-77d}]
(1)
    We observe that the initial states of $\STS^\rev$ and $\STS'$
    are related by $\psi$.
    Suppose now that $s\in S$ and $\psi(s)\in\reach_N$.
    To show that the executions of steps are
    preserved by $\psi$ in both directions,
    we consider four cases, where $\alpha\in\mult(T)$.

\medskip
\emph{Case 2.1:}
    $s\xright{\alpha}_{\STS^\rev} s'$.
    Then, by $\STS\simeq_{\psi'}\CRG_{N'}$, we have
    $\psi'(s)\xright{\alpha}_{N'}\psi'(s')$ and
    $\psi'(s)\geq\PRE_{N'}(\alpha)$.
    Moreover, $s'\xright{\ralpha}_{\STS^\rev} s$. Hence,
    by Lemma~\ref{prop-dhhd}(3),
    there is $r\in R$ such that $s'\xright{\ralpha}_{\rSTS_r} s$.
    Thus, by $\rSTS_r\simeq_{\psi''|_{S_r}}\CRG_{N_r}$, we have
    $\psi''(s')\xright{\ralpha}_{N''}\psi''(s)$ and
    $\psi''(s)\geq \POST_{N''}(\ralpha)$. Hence, by Eq.(\ref{eq-ppp1}):
\[
    \psi(s)
    =
    (\psi'(s)\munion\psi''(s))
    \geq
    (\PRE_{N'}(\alpha)\munion\POST_{N''}(\ralpha))
    =
    \PRE _N(\alpha)\;.
\]
    As a result, $\psi(s)\xright{\alpha}_N\psi(s)+\eff_N(\alpha)$.
    Hence $\psi(s)\xright{\alpha}_N\psi(s')$ as we have, by Eq.(\ref{eq-ppp1}):
\[
\begin{array}{l@{~}c@{~}l}
    \psi(s)+\eff_N(\alpha)
    &=&
    (\psi'(s)\munion\psi''(s))+\POST_N(\alpha)-\PRE_N(\alpha)
    \\
    &=&
    (\psi'(s)\munion\psi''(s))+(\POST_{N'}(\alpha)\munion\PRE_{N''}(\ralpha))
    -(\PRE_{N'}(\alpha)\munion\POST_{N''}(\ralpha))
    \\
    &=&
    (\psi'(s)+\eff_{N'}(\alpha))
    \munion
    (\psi''(s)-\eff_{N''}(\ralpha))
    \\
    &=&
    \psi'(s')\munion\psi''(s')=\psi(s')\;.
\end{array}
\]

\emph{Case 2.2:}
    $s\xright{\ralpha}_{\STS^\rev} s'$.
    Then $s'\xright{\alpha}_{\STS^\rev} s$ and so, by Case~2.1,
    $\psi(s')\xright{\alpha}_N\psi(s)$.
    Hence, since $N$ is a \textsc{pt}-net with strict
    reverses, $\psi(s)\xright{\ralpha}_N\psi(s')$.

\medskip
\emph{Case 2.3:}
    $\psi(s)\xright{\alpha}_N M$.
    Then, by Eq.(\ref{eq-ppp1}), we have:
\[
\begin{array}{c}
    \psi'(s)\munion\psi''(s)=\psi(s)\geq \PRE_N(\alpha)
    = \PRE_{N'}(\alpha)\munion\POST_{N''}(\ralpha)
\\
    M=\psi(s)+\eff_N(\alpha)=(\psi'(s)\munion\psi''(s))
    +(\POST_{N'}(\alpha)\munion\PRE_{N''}(\ralpha))
    -(\PRE_{N'}(\alpha)\munion\POST_{N''}(\ralpha)).
\end{array}
\]
    Hence, by $P'\cap P''=\es$,
    $\psi'(s)\geq \PRE_{N'}(\alpha)$ and
    $\psi''(s)\geq \POST_{N''}(\ralpha)$
    as well as:
\[
    \proj{M}{P'}  = \psi'(s)  + \eff_{N'}(\alpha)
    ~~~~~~\textrm{and}~~~~~~
    \proj{M}{P''} = \psi''(s) - \eff_{N''}(\ralpha)\;.
\]
    Thus $\psi'(s)\xright{\alpha}_{N'}\proj{M}{P'}$.
    Hence, by $\STS\simeq_{\psi'}\CRG_{N'}$,
    we obtain $\proj{M}{P'}\in \psi'(S)$ and
    $s\xright{\alpha}_{\STS^\rev} s'$, where $\psi'(s')=\proj{M}{P'}$.
    We still need to show that $\psi(s')=M$.
    This follows from
    $\psi''(s')=\proj{M}{P''}$.
    Indeed, we have
    $s' \xright{\ralpha}_{\STS^\rev} s$ and so,
    by Lemma~\ref{prop-dhhd}(3), there is $r\in R$ such that
    $s'\in S_r$. Now, by $\rSTS_r\simeq_{\psi''|_{S_r}}\CRG_{N_r}$,
    $\psi''(s')\xright{\ralpha}_{N''}\psi''(s)$, which
    means that
    $\psi''(s')=\psi''(s)-\eff_{N''}(\ralpha)=\proj{M}{P''}$.

\medskip
\emph{Case 2.4:}
    $\psi(s)\xright{\ralpha}_N M$.
    Then, by Eq.(\ref{eq-ppp1}), we have:
\[
\begin{array}{c}
    \psi'(s)\munion\psi''(s)=\psi(s)\geq \PRE_N(\ralpha)
    = \PRE_{N''}(\ralpha)\munion\POST_{N'}(\alpha)
\\
    M=(\psi'(s)\munion\psi''(s))
    +(\POST_{N''}(\ralpha)\munion\PRE_{N'}(\alpha))
    -(\PRE_{N''}(\ralpha)\munion\POST_{N'}(\alpha))\;.
\end{array}
\]
    Hence, by $P'\cap P''=\es$,
    $\psi'(s)\geq \POST_{N'}(\alpha)$ and
    $\psi''(s)\geq \PRE_{N''}(\ralpha)$ as well as:
\[
    \proj{M}{P'}  = \psi'(s)  -\eff_{N'}(\alpha)
    ~~~~~~\textrm{and}~~~~~~
    \proj{M}{P''} = \psi''(s) +\eff_{N''}(\ralpha)\;.
\]
    Thus $\psi''(s)\xright{\ralpha}_{N''}\proj{M}{P''}$.
    Hence, by Lemma~\ref{prop-dhhd}(3), there is $r\in R$ such that
    $s\in S_r$. Thus, by $\rSTS_r\simeq_{\psi''|_{S_r}}\CRG_{N_r}$,
    $\proj{M}{P''}\in \psi''(S)$ and
    $s\xright{\ralpha}_{\STS^\rev} s'$, where $\psi''(s')=\proj{M}{P''}$.
    We still need to show that $\psi(s)=M$. This   follows from
    $\psi'(s')=\proj{M}{P'}$.
    Indeed, we have
    $s' \xright{\alpha}_{\STS^\rev} s$ and so,
    by $\STS\simeq_{\psi'}\CRG_{N'}$,
    we obtain $\psi'(s')\xright{\alpha}_{N'}\psi'(s)$, which means
    that
    $\psi'(s')=\psi'(s)-\eff_{N'}(\alpha)=\proj{M}{P'}$.

(2)
    The modification of $\CRG_N$ does not
    produce unreachable states since $\CRG_N$ satisfies \textit{SEQ}.

(3)
    Follows from part (1) and the forward determinism of $\STS$ and $\CRG_N$.
\end{proof}

    Returning to the proof of $\STS^\mixrev \simeq_\psi\CRG_{N}$,
    suppose that $s\in S$ is such that $\psi(s)\in\reach_N$
    and consider two cases, where $\alpha,\beta\in\mult(T)$.

\medskip
\emph{Case 3.1:}
    $s\xright{\alpha+\beta}_\STS$ and
    $s\oplus\alpha\xright{\ralpha+\beta}_{\STS^\mixrev} s\oplus\beta$.
    Then we have
    $s\xright{\alpha+\beta}_{\STS^\rev}$ as well as:
\[
    s\xright{\alpha}_\STS s\oplus\alpha
    ~~~~~~~~~~~
    s\xright{\beta}_\STS s\oplus\beta
    ~~~~~~~~~~~
    s\xright{\alpha}_{\STS^\rev} s\oplus\alpha
    ~~~~~~~~~~~
    s\xright{\beta}_{\STS^\rev} s\oplus\beta\;.
\]
    Hence, by Lemma~\ref{lemma-77d}(1,3),
    we have:
\[
    \psi(s)\xright{\alpha+\beta}_N
    ~~~~~
    \psi(s)\xright{\alpha}_N \psi(s\oplus\alpha)=\psi(s)+\eff_N(\alpha)
    ~~~~~
    \psi(s)\xright{\beta}_N \psi(s\oplus\beta)=\psi(s)+\eff_N(\beta)\;.
\]
    Thus
    $\psi(s)\geq \PRE_N(\alpha+\beta)$, and so
    $
    \psi(s)+\eff_N(\alpha)
    \geq
    \PRE_N(\alpha+\beta)+\eff_N(\alpha)
    =
    \PRE_N(\ralpha+\beta)
    $
    due to   Eq.(\ref{eq-ppp1}).
    Hence, again by Eq.(\ref{eq-ppp1}):
\[
    \psi(s\oplus \alpha)=\psi(s)+\eff_N(\alpha)
     \xright{\ralpha+\beta}_N
     \psi(s)+\eff_N(\alpha)+\eff_N(\ralpha+\beta)
     =
     \psi(s)+\eff_N(\beta)=\psi(s\oplus \beta)\;.
\]

\emph{Case 3.2:}
    $\psi(s)\xright{\ralpha+\beta}_N M$.
    Then  $\psi(s)\xright{\ralpha}_N \psi(s)+\eff_N(\ralpha)(=M')$.
    Hence, by Lemma~\ref{lemma-77d}(1),
    $s\xright{\ralpha}_{\STS^\rev}\psi^{-1}(M')(=s')$.
    Thus, by the definition of $\STS^\rev$,
    $s'\xright{\alpha}_\STS s=s'\oplus \alpha$.
    We then observe that, by  Eq.(\ref{eq-ppp1}):
\[
    M'=\psi(s)+\eff_N(\ralpha)\geq
    \PRE_N(\ralpha+\beta)+\eff_N(\ralpha)=
    \PRE_N(\alpha+\beta)\;.
\]
    Hence $M'\xright{\alpha+\beta}_N$ and so, by Lemma~\ref{lemma-77d}(1),
    $s'\xright{\alpha+\beta}_{\STS^\rev}$
    and, as a consequence, $s'\xright{\alpha+\beta}_\STS$ and $s'\xright{\beta}_\STS$.
    Hence, by the definition of $\STS^\mixrev$,
    $s'\oplus\alpha\xright{\ralpha+\beta}_{\STS^\mixrev}s'\oplus\beta$.
    Moreover,
\[
\begin{array}{l@{~}c@{~}l@{~}c@{~}l@{~}c@{~}l@{~}c@{~}l@{~}c@{~}l}
    \psi(s'\oplus\alpha)
    &=&
    \psi(s')+\eff_N(\alpha)
    &=&
    M'+\eff_N(\alpha)
    &=&
    \psi(s)+\eff_N(\ralpha)+\eff_N(\alpha)
    &=&
    \psi(s)
\\
    \psi(s'\oplus\beta)
    &=&
    \psi(s')+\eff_N(\beta)
    &=&
    M'+\eff_N(\beta)
    &=&
    \psi(s)+\eff_N(\ralpha)+\eff_N(\beta)
    &=&
    M\;,
\end{array}
\]
    by Lemma~\ref{lemma-77d}(3) and Eq.(\ref{eq-ppp1}).
\end{proof}

As the next example shows, reversing a solution of $\STS$ may not
lead to a solution of $\rSTS$.
Hence, in general, one needs to
consider finding solutions to both $\STS$ and $\rSTS$.

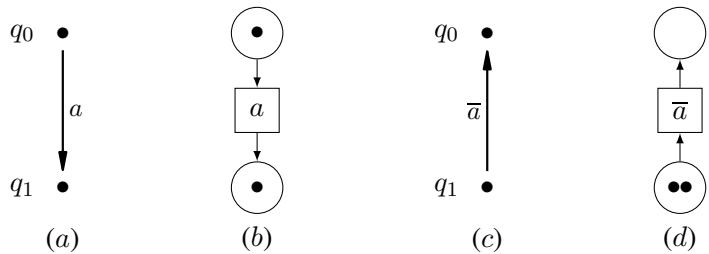
\begin{figure}[!b]
\vspace*{-1mm}
\begin{center}
\scalebox{0.97}{
\begin{tikzpicture}[node distance=1.3cm,>=arrow30,%
     line width=0.3mm,scale=0.7,bend angle=45]
     \tikzstyle{box}=[draw,regular polygon,thick,%
     regular polygon sides=4,minimum size=22mm, inner sep = -3pt]
\node (l1) at (0,-1) {($a$)};
\node (q0) [label=left:$q_0$] at (0,3) {$\bullet$};
\node (q1) [label=left:$q_1$] at (0,0) {$\bullet$};
\draw[-arrow30] (q0) to node [auto,inner sep=2pt] {\small $a$} (q1);
\end{tikzpicture}
~~~~~~~~~~~~~~~~~~
\begin{tikzpicture}[scale=0.7]
\node (l3) at (0,-1) {($b$)};
\node[circle,draw,minimum size=0.7cm] (p1) at (0,3) {};
\node (lp) at (0,3) {$\bullet$};
\node[circle,draw,minimum size=0.7cm] (p2) at (0,0) {};
\node (lx) at (0,0) {$\bullet$};

\node[draw,minimum size=0.6cm] (a) at (0,1.5){$a$};
\draw[-latex] (p1) to node [auto,inner sep=1pt] {} (a);
\draw[-latex] (a)  to node [auto,inner sep=1pt] {} (p2);
\end{tikzpicture}
~~~~~~~~~~~~~~~~~~
\begin{tikzpicture}[node distance=1.3cm,>=arrow30,%
     line width=0.3mm,scale=0.7,bend angle=45]
     \tikzstyle{box}=[draw,regular polygon,thick,%
     regular polygon sides=4,minimum size=22mm, inner sep = -3pt]
\node (l1) at (0,-1) {($c$)};
\node (q0) [label=left:$q_0$] at (0,3) {$\bullet$};
\node (q1) [label=left:$q_1$] at (0,0) {$\bullet$};
\draw[-arrow30] (q1) to node [auto,inner sep=2pt] {\small $\reverse{a}$} (q0);
\end{tikzpicture}
~~~~~~~~~~~~~~~~~~
\begin{tikzpicture}[scale=0.7]
\node (l3) at (0,-1) {($d$)};
\node[circle,draw,minimum size=0.7cm] (p1) at (0,3) { };
\node[circle,draw,minimum size=0.7cm] (p2) at (0,0) { };
\node (lx) at (0.12,0) {$\bullet$};
\node (ly) at (-0.12,0) {$\bullet$};

\node[draw,minimum size=0.6cm] (a) at (0,1.5){$\reverse{a}$};
\draw[-latex] (p2) to node [auto,inner sep=1pt] {} (a);
\draw[-latex] (a)  to node [auto,inner sep=1pt] {} (p1);
\end{tikzpicture} } \vspace*{-6mm}
\end{center}
\caption{
    Reversing a solution does not give a solution to reversing (Example~\ref{ex-ood}).
\label{fig:ttyu}}
\end{figure}

\begin{example}
\label{ex-ood}
    Let us consider $\STS$,
    a step transition system
    depicted in Figure~\ref{fig:ttyu}($a$), and its only home state
    $q_1$.
    The \textsc{pt}-net $N$ depicted Figure~\ref{fig:ttyu}($b$)
    solves $\STS$. However, the direct reverse of $N$ with the
    initial marking corresponding to $q_1$, depicted
    in Figure~\ref{fig:ttyu}($d$),  does not solve
    the step transition system $\rSTS_{q_1}$
    shown in Figure~\ref{fig:ttyu}($c$).
\hfill$\diamondsuit$
\end{example}

As the set of all the states of a step transition system is a home set,
Theorem~\ref{th-symsyn} is \emph{fundamental} as it provides
a way of solving mixed reversibility using (much) simpler synthesis problems.
In particular, if one is interested whether
the mixed reverse $\CRG_N^\mixrev$ of the
concurrent reachability graph of a \textsc{pt}-net $N$
is solvable when $\CRG_N$
has a home state.

\begin{theorem}
\label{cor-3a}
    If $r$ is a home state of $\STS$,
    then $\STS^\mixrev$ is solvable if and only if
    both $\STS$ and $\rSTS_r$ are solvable.
\end{theorem}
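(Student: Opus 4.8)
The statement is an immediate specialisation of Theorem~\ref{th-symsyn}, so the plan is to reduce to it by exhibiting the right home cover. First I would observe that, by definition, a home state $r$ satisfies $\pred_\STS(r)=S$, and hence the singleton $R=\{r\}$ meets the home-cover condition $S=\bigcup_{s\in R}\pred_\STS(s)$. Thus all the standing assumptions of Section~\ref{sect-2} become available with this particular choice of $R$, and in fact $\{r\}$ being a home cover is equivalent to $r$ being a home state.

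Next I would invoke Theorem~\ref{th-symsyn} verbatim with $R=\{r\}$: it yields that $\STS^\mixrev$ is solvable if and only if both $\STS$ and $\rSTS$ are solvable, where now $\rSTS=(S,\rT,\{(s',\reverse{\alpha},s)\mid s\xright{\alpha}_\STS s'\},\{r\})$ is a step transition system whose set of initial states is the singleton $\{r\}$.

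The only remaining point is to check that, for a singleton set of initial states, the notion of solvability of a step transition system with multiple initial states collapses to the ordinary solvability of its single component $\rSTS_r$. By definition, $\rSTS$ is solvable exactly when there is an unmarked net $(P,\rT,F)$ and a map $\psi$ with $\rSTS_{r}\simeq_{\psi|_{S_r}}\CRG_{(P,\rT,F,\psi(r))}$; since $r$ is the unique initial state, only the values of $\psi$ on $S_r$ play any role, so this is the same as asking for a marked net $N=(P,\rT,F,M_0)$ with $\rSTS_r\simeq\CRG_N$ (put $M_0=\psi(r)$, and conversely extend the isomorphism of $\rSTS_r$ to a map on all of $S$ arbitrarily). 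Substituting this equivalence into the conclusion of Theorem~\ref{th-symsyn} gives precisely the claim.

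I do not expect any genuine obstacle here, since no new net construction is required; the one point demanding care is the bookkeeping in the last step, namely confirming that passing from the multiple-initial-states formulation to a single marked net is harmless. In particular I would verify that the states $S_r$ underlying $\rSTS_r$ are exactly those relevant to the isomorphism, which is guaranteed by the definition of the restricted system $\rSTS_r$, so that no information is lost in replacing $\rSTS$ by $\rSTS_r$.
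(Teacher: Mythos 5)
Your proposal is correct and takes essentially the same route as the paper, whose proof of Theorem~\ref{cor-3a} simply states that it follows directly from Theorem~\ref{th-symsyn}. You instantiate the home cover as the singleton $R=\{r\}$ and verify that solvability of $\rSTS$ with a single initial state collapses to ordinary solvability of $\rSTS_r$ --- precisely the routine bookkeeping the paper leaves implicit.
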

\begin{proof}
    Follows directly from Theorems~\ref{th-symsyn}.
\end{proof}

The above result and the proof of Theorem~\ref{th-symsyn} provide a
method for \emph{constructing} a \textsc{pt}-net implementing mixed
step reversibility provided that one can synthesise
\textsc{pt}-nets for two step transition systems using, e.g.,
theory of
regions~\cite{DBLP:conf/ac/BadouelD96,DBLP:journals/fuin/DarondeauKPY09}.

\medskip
The method for checking the
solvability of mixed reversibility easily extends
to checking  direct reversibility of set transition systems.

\begin{theorem}
\label{th-3}
    Let $\STS$ be a set transition system and
    $r$ be a home state of $\STS$.
    Then $\STS^\rev$ is solvable if and only if
    both $\STS$ and $\rSTS_r$ are solvable.
\end{theorem}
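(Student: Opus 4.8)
The plan is to reduce the statement to the already-established Theorem~\ref{cor-3a} (equivalently Theorem~\ref{th-symsyn}) together with Theorem~\ref{cor-ddd}, exploiting the fact that a set transition system $\STS$ is a step-finite set \textsc{cest}-system for which $\STS^\rev=\STS^\setrev$. The overall shape is a two-way implication, where one direction is pure assembly of prior results and the other requires extracting a solution of $\rSTS_r$ from a solution of $\STS^\rev$.

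For the direction ($\Longleftarrow$), suppose both $\STS$ and $\rSTS_r$ are solvable. Since $r$ is a home state, $\{r\}$ is a home cover of $\STS$, so the solvability of $\rSTS_r$ coincides with the solvability of $\rSTS$ taken over this cover. Theorem~\ref{th-symsyn} then yields that $\STS^\mixrev$ is solvable, and since $\STS$ is a set \textsc{cest}-system, Theorem~\ref{cor-ddd} upgrades this to the solvability of $\STS^\rev$. (Equivalently, one may invoke Theorem~\ref{cor-3a} directly to pass from the solvability of $\STS$ and $\rSTS_r$ to that of $\STS^\mixrev$.) This direction therefore needs no new construction.

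For the direction ($\Longrightarrow$), suppose $\STS^\rev\simeq_\psi\CRG_N$ for some \textsc{pt}-net $N=(P,T\uplus\rT,F,M_0)$. First, $\STS$ is solvable immediately by Theorem~\ref{prop-new-iii}(3). To see that $\rSTS_r$ is solvable, I would restrict $N$ to its reverse actions, forming $N_r=N|_{\rT}$ with initial marking $\psi(r)$, and claim $\rSTS_r\simeq_{\psi|_{S_r}}\CRG_{N_r}$. The key point is that the enabling and firing of $\rT$-steps is identical in $N$ and $N_r$, while the $\rT$-labelled transitions of $\STS^\rev$ are, by the definitions of $\STS^\rev$ and $\rSTS$ (identifying $s\oplus\alpha$ with the target of $s\xright{\alpha}_\STS$ via forward determinism), exactly the transitions of $\rSTS$. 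By Lemma~\ref{lem-ood}(2) these reverse fragments of $\STS^\rev$ and $\STS^\mixrev$ coincide, so the extraction of a solution for $\rSTS_r$ performed in Case~1 of the proof of Theorem~\ref{th-symsyn} applies here as well, now with $N$ solving $\STS^\rev$ directly rather than through $\STS^\mixrev$.

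The only thing to get right is this last extraction: one must verify that the markings reachable in $\CRG_{N_r}$ using $\rT$-steps alone are precisely the $\psi$-images of $S_r$, with no spurious markings introduced by discarding the $T$-actions. This is handled exactly as in Theorem~\ref{th-symsyn}, since restricting the action set cannot create new reachable states and the bijection $\psi$ restricted to $S_r$ supplies the required isomorphism in both directions. I do not expect a genuinely new obstacle beyond recognising that the reverse fragment of $\STS^\rev$ is already the system $\rSTS$ we wish to solve, so the whole argument is an orchestration of Theorems~\ref{prop-new-iii}, \ref{cor-ddd}, \ref{th-symsyn}, and \ref{cor-3a} rather than a fresh synthesis.
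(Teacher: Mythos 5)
Your proposal is correct and follows essentially the same route as the paper: the ($\Longleftarrow$) direction is exactly the paper's appeal to Theorems~\ref{th-symsyn} and~\ref{cor-ddd} (with $\{r\}$ as home cover), and the ($\Longrightarrow$) direction is the paper's restriction argument, taking $N|_T$ for $\STS$ and $N|_{\rT}$ with initial marking $\psi(r)$ for $\rSTS_r$. Your extra detail --- citing Theorem~\ref{prop-new-iii}(3), Lemma~\ref{lem-ood}(2), and the Case~1.1/1.2 transfer from the proof of Theorem~\ref{th-symsyn} --- merely makes explicit what the paper's terse proof leaves implicit, so no new idea or gap is involved.
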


\begin{proof}
($\Longrightarrow$)
    Let $\STS^\rev \simeq_\psi \CRG_N$ .
    Then $\STS\simeq_\psi\CRG_{N|_T}$ and
    $\rSTS_r\simeq_\psi \CRG_{N'}$, where $N'$ is
    $N|_{\rT}$ with the initial marking set to $\psi(r)$.\medskip

\noindent ($\Longleftarrow$)
    Follows from Theorems~\ref{th-symsyn}
    and~\ref{cor-ddd}.
\end{proof}

\section{From sequential reversibility to step reversibility}
\label{sect-3}

Checking the feasibility of step reversibility
is, in general, a difficult task.
The next result shows that
in certain cases it is possible to proceed
more effectively,
if one is given a \textsc{pt}-net that solves the original
step transition system, over-approximates its
reverse containing only spikes,
and under-approximates its mixed reverse.

\begin{theorem}
\label{th-2ba}
    Let $\STS=(S,T,\to,s_0)$ be a \textsc{cest}-system
    and $N=(P,T\cup\rT,F,M_0)$ be a \textsc{pt}-net such that:
\begin{equation}
\label{eq-111}
    (\STS^\spike)^\rev\lhd\CRG_N\lhd\STS^\mixrev
    ~~~~~~~~\textrm{and}~~~~~~~~
    \STS\simeq\CRG_{N|_T} \;.
\end{equation}
    Then $\STS^\mixrev$ is solvable.
\end{theorem}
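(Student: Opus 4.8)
The plan is to prove that $N$ itself solves $\STS^\mixrev$, i.e.\ that $\CRG_N\simeq\STS^\mixrev$. Since $\STS$, $\STS^\mixrev$ and every concurrent reachability graph are \textsc{cest}-systems, the two inclusions and the isomorphism in Eq.\eqref{eq-111} are all witnessed by one and the same bijection $\phi\colon S\to\reach_N$ (the one given by $\STS\simeq_\phi\CRG_{N|_T}$), unique by \textit{REA} and \textit{FD}; in particular $\reach_N=\phi(S)=\reach_{N|_T}$. As $\CRG_N\lhd\STS^\mixrev$ is already assumed, it remains only to establish the reverse inclusion $\STS^\mixrev\lhd_\phi\CRG_N$, and the two together give the isomorphism. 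Firing the spikes $\phi(s\oplus a)\xright{\reverse a}_N\phi(s)$, which are available because $(\STS^\spike)^\rev\lhd\CRG_N$, yields $\eff_N(\reverse a)=\phi(s)-\phi(s\oplus a)=-\eff_N(a)$ for every action $a$; hence $N$ is a net with reverses and $\eff_N(\ralpha+\beta)=\eff_N(\beta)-\eff_N(\alpha)$. By the marking equation~\eqref{eq-marking}, once a mixed step is \emph{enabled} at the correct marking its target is forced, so the whole problem reduces to enabling.

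Concretely, I would fix a transition $s\oplus\alpha\xright{\ralpha+\beta}_{\STS^\mixrev}s\oplus\beta$ arising from $s\xright{\alpha+\beta}_\STS$ (the states $s\oplus\alpha,s\oplus\beta$ exist by Proposition~\ref{prop-ffffp}) and try to show $\phi(s\oplus\alpha)\STEP{\ralpha+\beta}{N}$, i.e.\ $\phi(s\oplus\alpha)\ge\PRE_N(\ralpha)+\PRE_N(\beta)$. The combined forward step gives $\phi(s)\ge\PRE_N(\alpha+\beta)=\PRE_N(\alpha)+\PRE_N(\beta)$, while $\phi(s\oplus\alpha)=\phi(s)+\eff_N(\alpha)=\phi(s)+\POST_N(\alpha)-\PRE_N(\alpha)$. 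Substituting, the desired inequality follows place-by-place from the single purely structural inequality $\POST_N(\alpha)\ge\PRE_N(\ralpha)$, which by linearity of $\PRE_N$ and $\POST_N$ in the step reduces to $F(a,p)\ge F(p,\reverse a)$ for every action $a$ and place $p$.

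The hard part is exactly this last inequality: it is what allows the reverses of \emph{different} actions to be combined into one enabled step, whereas the spike hypothesis $(\STS^\spike)^\rev\lhd\CRG_N$ only supplies reverses of powers of a \emph{single} action, so it delivers each $\phi(s\oplus\alpha)\ge k_i\PRE_N(\reverse{a_i})$ separately but never their sum. To extract $F(a,p)\ge F(p,\reverse a)$ I would evaluate reverse-spike enabling at a \emph{tight} forward marking: for $s'\xright{a^k}_\STS$ one has $\phi(s'\oplus a^k)\ge\PRE_N(\reverse a^{k})=k\PRE_N(\reverse a)$ together with $\phi(s'\oplus a^k)=\phi(s')+k\eff_N(a)$, and if $s'$ can be chosen so that $p$ is exactly saturated by the forward spike ($\phi(s')(p)=kF(p,a)$), the $k\eff_N(a)(p)$ term cancels and leaves $F(a,p)\ge F(p,\reverse a)$. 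Proving that such tight reachable markings always exist — or, where they do not, that the surplus the spike hypothesis forces on $p$ is always at least what the combined reverse step consumes — is the delicate combinatorial heart of the argument and the step I expect to be the main obstacle; the under-approximation $\CRG_N\lhd\STS^\mixrev$ is used only in the background, to guarantee $\reach_N=\phi(S)$ and that no spurious step of $N$ can spoil the isomorphism.

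With $F(a,p)\ge F(p,\reverse a)$ in hand, the enabling inequality of the second paragraph holds, every mixed transition of $\STS^\mixrev$ is matched in $\CRG_N$, and together with the assumed $\CRG_N\lhd\STS^\mixrev$ this gives $\CRG_N\simeq_\phi\STS^\mixrev$, whence $\STS^\mixrev$ is solvable. Should the weight inequality genuinely fail on some surplus place, the fallback is to leave $N|_T$ untouched but replace the reverse part of $N$ by \emph{strict} reverses: then $\STS^\mixrev\lhd\CRG_N$ becomes automatic, because the enabling computation collapses onto the forward one, and the only remaining task is to transfer the upper bound $\CRG_N\lhd\STS^\mixrev$ to the modified net, once more using the spike hypothesis to control its reachable markings.
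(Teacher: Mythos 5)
Your opening moves (the single bijection $\phi$ shared by all the relations in Eq.~\eqref{eq-111}, the identity $\reach_N=\reach_{N|_T}$, the fact $\eff_N(\reverse{a})=-\eff_N(a)$ extracted from the reverse spikes, and the reduction of target-correctness to enabling via the marking equation) all match the paper's proof. But the core of your plan --- showing that the \emph{given} net $N$ already solves $\STS^\mixrev$ --- cannot succeed, and the paper itself exhibits the obstruction right after the theorem: the family $N_{n,m}$ of Figure~\ref{fig-4d}($a$) satisfies Eq.~\eqref{eq-111} and yet $\CRG_{N_{n,m}}\not\simeq\STS^\mixrev$; in the paper's words, modifying the original \textsc{pt}-net is unavoidable. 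The failure point is exactly the step you flagged as the ``delicate combinatorial heart'': the structural inequality $F(a,p)\geq F(p,\reverse{a})$ does \emph{not} follow from Eq.~\eqref{eq-111}. In $N_{n,m}$ the place holding $k=\max(m,n)$ tokens has weight-$1$ self-loops with both $\reverse{a}$ and $\reverse{b}$ but receives no arc from $a$ or $b$, so $F(a,p)=0<1=F(p,\reverse{a})$; the spike hypothesis is still satisfied, because it only demands $i\leq k$ for $\reverse{a}^i$ and $j\leq k$ for $\reverse{b}^j$ \emph{separately}, whereas the step $\reverse{a}^m\reverse{b}^n$ required by $\STS^\mixrev$ (indeed already by $\STS^\rev$) needs $m+n\leq k$, which is false. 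So the ``surplus'' you hoped the spike hypothesis would force on $p$ is inherently per reverse action and cannot be summed across distinct reverse actions sharing a place.

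Your fallback (replace the reverse part of $N$ by strict reverses) fails as well, because lowering $\PRE_N(\reverse{a})$ down to $\POST_N(a)$ can create \emph{spurious} reverse transitions, so the upper bound $\CRG_N\lhd\STS^\mixrev$ is not transferable to the modified net. Concretely, let $\STS$ consist of the single transition $s_0\xright{a}s_1$, and let $N$ have places $p_1,p_2$ with $M_0=(p_1p_2)$, action $a$ with $\PRE_N(a)=p_1$, $\POST_N(a)=p_2$, and $\reverse{a}$ with $\PRE_N(\reverse{a})=(p_2p_2)$, $\POST_N(\reverse{a})=(p_1p_2)$. Then Eq.~\eqref{eq-111} holds ($\reverse{a}$ is enabled exactly at the marking $(p_2p_2)$), but after the strict-reverse replacement $\reverse{a}$ becomes enabled already at $M_0$ and fires into $(p_1p_1)$, a marking corresponding to no state of $\STS$, even though $s_0\xright{\reverse{a}}$ is not a transition of $\STS^\mixrev$. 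The paper's proof escapes this dilemma with an idea your proposal lacks: keep \emph{both} sets of weights, but on \emph{different} places. After normalising in the opposite direction to yours, namely $\PRE_N(\reverse{a})\geq\POST_N(a)$ (Lemma~\ref{eq-113}), every offending place $p$ with $F(p,\reverse{a})>F(a,p)$ is duplicated into a fresh copy $p_a$ which always carries the same number of tokens as $p$ but imposes the surplus requirement $F(p,\reverse{a})$ on $\reverse{a}$ alone, while the original $p$ is rewired with exactly the strict weights. The copies preserve the upper bound (no reverse is enabled too early), and since distinct reverse actions now draw their surplus from disjoint places, the combined mixed steps become enabled, giving the lower bound; this place-splitting construction (the net $\wN$ and Lemma~\ref{lem-999}) is the missing ingredient.
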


\begin{proof}
    The states as well as the initial states
    of $(\STS^\spike)^\rev$, $\STS^\mixrev$, and $\STS$ are
    the same.
    Moreover, $((\STS^\spike)^\rev|_T)^\seq=(\STS^\mixrev|_T)^\seq=\STS^\seq$.
    Similarly, the initial states
    of $\CRG_N$ and $\CRG_{N|_T}$ are the same and we have
    $(\CRG_N)|_T=\CRG_{N|_T}$.
    We also observe that
    all step transition systems in Eq.(\ref{eq-111})
    are \textsc{cest}-systems,
    and there is a unique bijection $\psi$ such that:
\begin{equation}
\label{eq-7777}
    (\STS^\spike)^\rev\lhd_\psi\CRG_N
    ~~~~~~~~~~~~~
    \CRG_N
    \lhd_{\psi^{-1}}
    \STS^\mixrev
    ~~~~~~~~~~~~~
    \STS\simeq_\psi\CRG_{N|_T} \;.
\end{equation}
    By the first part of Eq.(\ref{eq-111}), \textit{SEQ}, and the fact that
    we may assume that
    each action in $ T$ appears in the labels of the transitions of $\STS$,
    we have:
\begin{equation}
\label{eq-7777a}
    \reach_N=\reach_{N|_T}
    ~~~~~~~~\mbox{and}~~~~~~~
    \eff_N(a)=-\eff_N(\reverse{a})~~~\mbox{for every $a\in T$}\;.
\end{equation}

\begin{lemma}
\label{eq-113}
    It can be assumed that
    $\PRE_N(\reverse{a})\geq \POST_N(a)$
    and
    $\POST_N(\reverse{a})\geq \PRE_N(a)$,
    for every $a\in T$.
\end{lemma}
\begin{proof}[Lemma~\ref{eq-113}]
    Suppose that
    $F(p,\reverse{a})< F(a,p)$, and so also $F(\reverse{a},p) > F(p,a)$.
    We then modify $F$ to become $F'$ which is the same as $F$
    except that $F'(p,\reverse{a})= F(a,p)$ and
    $F'(\reverse{a},p)= F(p,a)$.
    Let $N'$ be the resulting \textsc{pt}-net. Clearly,
    $\eff_N=\eff_{N'}$.

\medskip
    After this modification, which does not affect
    actions in $T$, the second part of
    Eq.(\ref{eq-111}) is still satisfied after taking $N'$
    to play the role of $N$.
    However, the first part of
    Eq.(\ref{eq-111}) needs to be demonstrated.

    We observe that the modification
    can only restrict
    the enabling of steps involving $\reverse{a}$.
    Hence, if the first part of Eq.(\ref{eq-111}) does not hold with $N'$
    playing the role of $N$, then
    there is $M\in\reach_{N'}\subseteq\reach_N$ and $k\geq 1$
    such that
    $M\xright{\reverse{a}^k}_{N}M'$ \textit{(*)}
    and $\neg M\xright{\reverse{a}^k}_{N'}$ \textit{(**)}.
    By Eq.(\ref{eq-7777a})
    and \textit{(*)}, we have
    $M'\xright{a^k}_{N}M$, and so
    $M(p)\geq \POST_N(a^k)(p)$ \textit{(***)}.

    By construction,
    \textit{(**)} implies $\PRE_{N'}(\reverse{a}^k)(p)>M(p)$.
    Thus, by
    $\PRE_{N'}(\reverse{a}^k)(p)=\POST_N(a^k)(p)$,
    we obtain $\POST_N(a^k)(p)>M(p)$,
    yielding a contradiction with \textit{(***)}.

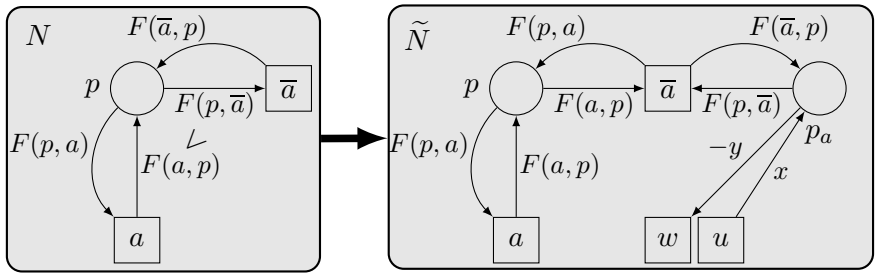
\begin{figure}[!b]
\vspace{1mm}
\begin{center}
\begin{tikzpicture}[scale=1.0]
\node[] (n1) at(-1.3,2.7){$N$};
\node[circle,draw,minimum size=0.7cm] (p) at (0,2) [label=left:$p$] {};
\node[rotate=45] (x1) at(.75,1.3){\large $<$};
\node[draw,minimum size=0.6cm] (t) at (0,0){$a$};
\node[draw,minimum size=0.6cm] (to) at (2,2){$\reverse{a}$};
\node[] (n2) at(3.7,2.7){$\widetilde{N}$};
\node[circle,draw,minimum size=0.7cm] (p2) at (5,2) [label=left:$p$] {};
\node[circle,draw,minimum size=0.7cm] (pt) at (9,2) [label=below:$p_a$] {};
\node (x2) at(9.5,2){};
\node[draw,minimum size=0.6cm] (t2) at (5,0){$a$};
\node[draw,minimum size=0.6cm] (to2) at (7,2){$\reverse{a}$};
\node[draw,minimum size=0.6cm] (tt) at (7.7,0){$u$};
\node[draw,minimum size=0.6cm] (tw) at (7,0){$w$};
%
\draw[-latex] (t) to node [auto,swap,inner sep=1pt]
{\small $F(a,p)$} (p);
\draw[-latex] (p) to [out=-135,in=135] node [auto,swap,inner sep=1pt]
{\small $F(p,a)$} (t);
\draw[-latex] (p) to node [auto,swap,inner sep=1pt]
{\small $F(p,\reverse{a})$} (to);
\draw[-latex] (to) to [out=135,in=45] node [auto,swap,inner sep=1pt]
{\small $F(\reverse{a},p)$} (p);
%
\draw[-latex] (t2) to node [auto,swap,inner sep=1pt] {\small $F(a,p)$} (p2);
\draw[-latex] (p2) to [out=-135,in=135] node [auto,swap,inner sep=1pt]
{\small $F(p,a)$} (t2);
\draw[-latex] (p2) to node [auto,swap,inner sep=1pt] {\small $F(a,p)$} (to2);
\draw[-latex] (to2) to [out=135,in=45] node [auto,swap,inner sep=1pt]
{\small $F(p,a)$} (p2);
\draw[-latex] (pt) to node [auto,inner sep=1pt] {\small $F(p,\reverse{a})$} (to2);
\draw[-latex] (to2) to [out=45,in=135] node [auto,inner sep=1pt]
{\small $F(\reverse{a},p)$} (pt);

\draw[-latex] (tt) to node [auto,swap,inner sep=1pt] {\small $x$} (pt);
\draw[-latex] (pt) to node [auto,swap,inner sep=1pt] {\small $-y$} (tw);

\begin{pgfonlayer}{background}
\node[fill=gray!20,thick,draw=black,rounded corners=2mm,inner sep=3]
(bg1) [fit = (n1) (t) (to)] {};
\end{pgfonlayer}

\begin{pgfonlayer}{background}
\node[fill=gray!20,thick,draw=black,rounded corners=2mm,inner sep=2]
(bg2) [fit = (n2) (tt) (x2)] {};
\end{pgfonlayer}

\draw[-latex,line width=1mm] (bg1) -- (bg2);
\end{tikzpicture}\vspace*{-3mm}
\end{center}
\caption{Introducing place $p_a$
 in the proof of Theorem~\ref{th-2ba}, where $u$ represents
 any place in $T\cup\rT\setminus\{\reverse{a}\}$ for which $x=\eff_N(u)(p)>0$,
 and $w$ any place for which $y=\eff_N(w)(p)\leq 0$.
\label{fig-1}}
\end{figure}

    We can apply the above modification as many times
    as needed,
    finally concluding that the result holds,
    as any modification does not invalidate the conditions
    captured in the formulation of this lemma
    that were obtained by the previous modifications.
\end{proof}

    We will show that $\STS^\mixrev$ is solvable by a \textsc{pt}-net
    $\wN=(\wP,T\cup\rT,\wF,\wM_0)$
    constructed thus:
\begin{itemize}
\item
    $\wP=\bigcup_{p\in P}P_p$, where, for every $p\in P$,\footnote
    {
    Intuitively, each $p_a\in P_p$ is
    a (suitably adjusted) copy of $p$.
    }
    $P_p=\{p\}\cup\{p_a\mid a\in T\wedge F(p,\reverse{a})>F(a,p)\}$
    and
    $\wM_0(q)=M_0(p)$ for $q\in P_p$.

\item
    The connections in $\wN$ are set
    as follows, where $p\in P$ and $u\in T\cup\rT\setminus\{\reverse{a}\}$:
\begin{itemize}
\item
    $\wF(p,\reverse{a})=F(a,p)$
    and
    $\wF(\reverse{a},p)=F(p,a)$.
\item
    $\wF(p_a,\reverse{a})=F(p,\reverse{a})$
    and
    $\wF(\reverse{a},p_a)=F(\reverse{a},p)$.
\item
    $\eff_N(u)(p)>0$ implies
    $\wF(p_a,u)=0$
    and
    $\wF(u,p_a)=\eff_N(u)(p)$.
\item
    $\eff_N(u)(p)\leq 0$ implies
    $\wF(u,p_a)=0$
    and
    $\wF(p_a,u)=-\eff_N(u)(p)$.
\item
    $\wF$ on $(P\times T)\cup(T\times P)$
    is as $F$ unless it has been set explicitly above.
\end{itemize}
\end{itemize}
    In what follows, for every marking $M$ of $N$,
    we use $\phi(M)$ to denote the marking of $\wN$
    such that
    $\phi(M)(q)=M(p)$, for every $q\in P_p$ and $p\in P$.
    Hence $\phi(M_0)=\wM_0$.

\medskip
    We now present a number of straightforward
    properties of $\wN$.
    We first observe that,
    by Lemma~\ref{eq-113},
    for all $a\in T$, $u\in T\cup\rT$, and $p\in P$,
\begin{equation}
\label{eq-113a}
\begin{array}{l@{~}c@{~}l@{~~~~~~~~~~}r@{~}c@{~}l}
    \PRE_\wN(\reverse{a})
    &\geq&
    \POST_\wN(a)
    &
    \eff_\wN(a)
    &=&
    -\eff_\wN(\reverse{a})
\\
    \POST_\wN(\reverse{a})
    &\geq&
    \PRE_\wN(a)
    &
    \eff_\wN(u)(P_p)
    &=&
    \{\eff_N(u)(p)\}\;.
\end{array}
\end{equation}
    Therefore, for every marking $M$ of $N$
    and every $\kappa\in\mult(T\cup\rT)$ such that
    $M+\eff_N(\kappa)\geq\es$,
\begin{equation}
\label{eq-4000}
    \phi(M)+\eff_\wN(\kappa)=\phi(M+\eff_N(\kappa))\;.
\end{equation}
    The
    construction does not affect the enabling of
    steps involving just one action as well as
    steps $\alpha$
    over $T$ since $p_a\in P_p$ cannot disable
    $\alpha$ if it is not also disabled by $p$.
    Hence, for all markings
    $M$ of $N$, $u\in T\cup\rT$, $k\geq 1$,
    and $\alpha\in\mult(T)$:
\begin{equation}
\label{eq-4001}
    M\xright{u^k}_N
    \iff
    \phi(M)\xright{u^k}_\wN
    ~~~~~~~~\mbox{and}~~~~~~~~
    M\xright{\alpha}_N
    \iff
    \phi(M)\xright{\alpha}_\wN\;.
\end{equation}
    Thus, by Eqs.(\ref{eq-7777},\ref{eq-4000},\ref{eq-4001})
    and $\wM_0=\phi(M_0)$,
\begin{equation}
\label{eq-119}
    (\STS^\spike)^\rev \lhd_{\phi\circ \psi} \CRG_\wN
    ~~~~~~\mbox{and}~~~~~~
    \STS
    \simeq_{\phi\circ \psi}
    \CRG_{\wN|_T }
    \simeq_{\phi^{-1}}
    \CRG_{N|_T }\;.
\end{equation}

\begin{lemma}
\label{lem-999}
    Let $\alpha,\beta\in\mult(T)$
    and $\wM=\phi(M)$, for some $M\in\mult(P)$.
\begin{enumerate}
\item
    $\wM\xright{\ralpha+\beta}_\wN$
    implies
    $
    \wM-\eff_\wN(\alpha)
    \xright{\alpha+\beta}_\wN
    \wM+\eff_\wN(\beta)$.

\item
    $\wM\xright{\alpha+\beta}_\wN$
    implies
    $
    \wM+\eff_\wN(\alpha)
    \xright{\ralpha+\beta}_\wN
    \wM+\eff_\wN(\beta)$.
\end{enumerate}
\end{lemma}

\begin{proof}[Lemma~\ref{lem-999}]
(1)
    We first observe that, by \textit{SEQ},
    $\wM-\eff_\wN(\alpha)=\wM+\eff_\wN(\ralpha) \in\reach_\wN$.
    We then observe that, by $\wM\geq \PRE_\wN(\ralpha+\beta)$,
    the step
    $\alpha+\beta$ is enabled at $\wM-\eff_\wN(\alpha)$, and so, by Eq.(\ref{eq-113a}):
\[
    \wM-\eff_\wN(\alpha)
    \geq
    \PRE_\wN(\ralpha+\beta) -\eff_\wN(\alpha)
    =
    \PRE_\wN(\ralpha)+\PRE_\wN(\beta)-\POST_\wN(\alpha)+\PRE_\wN(\alpha)
    \geq
    \PRE_\wN(\alpha+\beta).
\]
    Hence, the result holds, as
    $\wM-\eff_\wN(\alpha)+\eff_\wN(\alpha+\beta)=\wM+\eff_\wN(\beta)$.

(2)
    By \textit{SEQ},
    $\wM\xright{\alpha}_\wN \wM+\eff_\wN(\alpha)(=M')$.
    Suppose that
    $M'\xright{\ralpha+\beta}_\wN$ does not hold.
    Then there is $q\in \wP$ such that
    $\PRE_\wN(\ralpha+\beta)(q)>M'(q)$ \textit{(*)}.
    Moreover,
    $\wM\geq \PRE_\wN(\alpha+\beta)$.
    Hence, we have:
\[
    \PRE_\wN(\ralpha+\beta)(q)
    >
    \wM(q)+\eff_\wN(\alpha)(q)
    \geq
    \PRE_\wN(\alpha+\beta)(q)
    +\eff_\wN(\alpha)(q)
    =
    \PRE_\wN(\beta)(q)
    +\POST_\wN(\alpha)(q),
\]
    and so
    $\PRE_\wN(\ralpha)(q)>\POST_\wN(\alpha)(q)$.
    Thus there is $a\in\alpha$ and such that
    $\wF(q,\reverse{a})>\wF(a,q)$ and so,
    by the definition of $\wN$,
    $q=p_a$, for some $p\in P$.
    Now, it follows from the construction of $\wN$,
    that there are $\alpha_0,\alpha_1,\beta_0,\beta_1$
    and $k\geq 1$
    such that $\alpha=a^k+\alpha_0+\alpha_1$ and
    $\beta=\beta_0+\beta_1$ and $a\not\in \alpha_0+\alpha_1$
    and, for $x=\alpha,\beta$,
    we have:
\[
\begin{array}{l@{~}l@{~}l@{~}l@{~}l@{~}l@{~}l@{~}l@{~}l}
    \POST_\wN(x_1)(p_a)
    &=&
    \PRE_\wN(x_0)(p_a)
    &=&
    0
    &=&
    \PRE_\wN(\reverse{x_1})(p_a)
    &=&
    \POST_\wN(\reverse{x_0})(p_a)
\\
    \PRE_\wN(\reverse{x_0})(p_a)
    &=&
    \POST_\wN(x_0)(p_a)
    & & &&
    \PRE_\wN(x_1)(p_a)
    &=&
    \POST_\wN(\reverse{x_1})(p_a) \;.
\end{array}
\]
    By \textit{SEQ},
    $
    \wM
    \xright{\alpha_1+\beta_1}_\wN \wM+\eff_\wN(\alpha_1+\beta_1)
    \xright{a^k}_\wN \wM+\eff_\wN(\alpha_1+\beta_1+a^k)$.
    Thus, by Eq.(\ref{eq-119}),
    $
    \wM+\eff_\wN(\alpha_1+\beta_1+a^k)
    \xright{\reverse{a}^k}_\wN \wM+\eff_\wN(\alpha_1+\beta_1)
    $,
    and so we have:
\[
\begin{array}{lcl}
    \wM(p_a)+\eff_\wN(\alpha_1+\beta_1+a^k)(p_a)
    &=&
    \wM(p_a)+\eff_\wN(a^k)(p_a)+
    \eff_\wN(\alpha_1+\beta_1)(p_a)
\\
    &=&
    \wM(p_a)+\eff_\wN(a^k)(p_a)-
    \PRE_\wN(\alpha_1+\beta_1)(p_a)
\\
    &\geq&
    \PRE_\wN(\reverse{a}^k)(p_a)  \;.
\end{array}
\]
    We therefore have:
\[
\begin{array}{lcl}
    M'(p_a)
    &=&
    M(p_a)+\eff_\wN(a^k)(p_a)-\PRE_\wN(\alpha_1)(p_a)+\POST_\wN(\alpha_0)(p_a)
\\
    &\geq&
    \PRE_\wN(\reverse{a}^k)(p_a)+\PRE_\wN(\beta_1)(p_a)+\POST_\wN(\alpha_0)(p_a)
\\
    &=&
    \PRE_\wN(\reverse{a}^k)(p_a)+\PRE_\wN(\beta_1)(p_a)+
    \PRE_\wN(\ralpha_0)(p_a)
\\
    &=&
    \PRE_\wN(\ralpha)(p_a)+\PRE_\wN(\beta)(p_a)
\\
    &=&
    \PRE_\wN(\ralpha+\beta)(p_a)\;,
\end{array}
\]
    yielding
    a contradiction with \textit{(*)}.
    Thus $M'\xright{\ralpha+\beta}_\wN$ holds.
    Hence we obtain the result as
    we have
    $M'+\eff_\wN(\ralpha+\beta)=\wM+\eff_\wN(\alpha)
    +\eff_\wN(\ralpha+\beta)=
    \wM+\eff_\wN(\beta)$.
\end{proof}

    We now conclude that
    $\STS^\mixrev\simeq_{\phi\circ\psi}\CRG_\wN$ holds
    thanks to Eq.(\ref{eq-119}) and Lemma~\ref{lem-999}.
\end{proof}

The last result leads to a simple sufficient
condition for
the solvability of direct reversibility in the case
that proper multisets are not involved.

\begin{theorem}
\label{corollary-2ba}
    Let $\STS$ be a solvable set \textsc{cest}-system
    such that $(\STS^\seq)^\rev$ is solvable.
    Then $\STS^\rev$ is solvable.
\end{theorem}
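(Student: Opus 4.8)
The plan is to reduce the claim to the solvability of $\STS^\mixrev$ and then to invoke Theorem~\ref{th-2ba}. Since $\STS$ is a set \textsc{cest}-system, Theorem~\ref{cor-ddd} guarantees that $\STS^\rev$ is solvable as soon as $\STS^\mixrev$ is, so it suffices to establish the latter. For this I would apply Theorem~\ref{th-2ba}, whose role is exactly to turn an appropriate ``sandwich'' net into a solution of $\STS^\mixrev$; the real work is then to build a \textsc{pt}-net $N$ over $T\cup\rT$ satisfying $(\STS^\spike)^\rev\lhd\CRG_N\lhd\STS^\mixrev$ together with $\STS\simeq\CRG_{N|_T}$.

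The first simplification I would exploit is that for a set transition system the spiking and sequential restrictions coincide: a step with support of size at most one that is also a set is either $\es$ or a singleton, so $\STS^\spike=\STS^\seq$ and hence $(\STS^\spike)^\rev=(\STS^\seq)^\rev$. By hypothesis this system is solvable, so alongside a solution $N_1=(P_1,T,F_1,M_1)$ of $\STS$ I also have a solution $N_2=(P_2,T\cup\rT,F_2,M_2)$ of $(\STS^\seq)^\rev$, which I may take over disjoint place sets and with initial markings aligned to $s_0$.

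The candidate net $N$ would be assembled on $P_1\uplus P_2$ so that the original actions behave on $P_1$ exactly as in $N_1$ and the reverse actions behave on $P_1$ as strict reverses of their forward counterparts, while on $P_2$ all actions behave as in $N_2$. The role of the two groups of places is complementary: $P_1$ is meant to be the sole gatekeeper of the forward (and strict-reverse) behaviour and thereby secure $\STS\simeq\CRG_{N|_T}$, whereas the places $P_2$ are there to gate the reverse actions so that a reverse fires only in a genuinely reachable state. This is precisely what solvability of $(\STS^\seq)^\rev$ buys: $N_2$ realises the sequential reverses without ever reaching a new marking, so the reachable markings of $N$ should biject with $S$ and no inconsistent reverse becomes enabled, which is what keeps $\CRG_N$ inside $\STS^\mixrev$ and makes the spike reverses of $(\STS^\spike)^\rev$ available in $\CRG_N$.

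I expect the main obstacle to be reconciling full forward concurrency with the reverse-gating places. Because $N_2$ solves only the sequential reverse, $\CRG_{N_2}\simeq(\STS^\seq)^\rev$ contains no concurrent (non-singleton) set step over $T$, so the places $P_2$ on their own actively block the very set steps that $\STS$ possesses; a naive disjoint combination would therefore yield $\CRG_{N|_T}\simeq\STS^\seq$ rather than $\STS$. The delicate part of the construction is thus to let the forward actions keep the $P_2$-component synchronised with the current state (so that reverse gating stays correct) without letting $P_2$ veto the concurrent set steps enforced by $P_1$, and at the same time to verify that this neither creates reachable markings outside the image of $S$ nor enables any step lying outside $\STS^\mixrev$. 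Once such an $N$ is shown to meet the hypotheses of Theorem~\ref{th-2ba}, that theorem delivers the solvability of $\STS^\mixrev$, and Theorem~\ref{cor-ddd} then yields the solvability of $\STS^\rev$, completing the argument.
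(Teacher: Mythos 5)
Your skeleton matches the paper's: note that $\STS^\spike=\STS^\seq$ for set transition systems, feed a ``sandwich'' net into Theorem~\ref{th-2ba} to solve $\STS^\mixrev$, then strip the mixed steps. Your endgame via Theorem~\ref{cor-ddd} is a legitimate variant of the paper's, which instead adds unit-weight mutex places $p_{ab}$ ($a,b\in T$, each joined by self-loops to $a$ and to $\reverse{b}$) directly to the net $\wN$ from the proof of Theorem~\ref{th-2ba} --- essentially the same device that underlies Theorem~\ref{pr-mix2set}, on which \ref{cor-ddd} rests, so nothing is lost there. The genuine gap is the one you flag yourself and then leave open: you never actually produce a net $N$ with $(\STS^\seq)^\rev\lhd\CRG_N\lhd\STS^\mixrev$ and $\STS\simeq\CRG_{N|_T}$. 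Your candidate, in which the forward actions act on $P_2$ ``as in $N_2$'', fails for exactly the reason you state: since $\CRG_{N_2}\simeq(\STS^\seq)^\rev$ enables no step of size two, the $P_2$-component vetoes every concurrent set step of $\STS$ and forces $\CRG_{N|_T}\simeq\STS^\seq$. A proof that ends with ``once such an $N$ is shown to meet the hypotheses'' is a conditional plan, not a proof; and to be fair, the paper's own two-line proof is equally silent on this step, proceeding directly ``referring to the notation and proof of Theorem~\ref{th-2ba}'', so you have put your finger precisely on where the real work is hidden.

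The hole is closable along the lines you gesture at. Writing $\psi'$ and $\psi''$ for the state-to-marking maps of $N_1$ and $N_2$, connect each forward $a$ to $P_2$ by \emph{purified} arcs realising only the effect, $\wF(p,a)=\max(0,-\eff_{N_2}(a)(p))$ and $\wF(a,p)=\max(0,\eff_{N_2}(a)(p))$, keep the full $N_2$-arcs on the reverses (which act on $P_1$ as strict reverses of $N_1$), and add the mutex places $p_{ab}$ already at this stage, since otherwise the purified arcs could let spurious steps such as $(a\reverse{a})$ through, violating $\CRG_N\lhd\STS^\mixrev$ (in a set system $(a\reverse{a})$ would require $s''\xright{(aa)}_\STS$). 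Purification keeps $P_2$ synchronised --- its marking tracks $\psi''(s)$ because only effects matter --- and blocks no set step of $\STS$: given $s\xright{\alpha}_\STS$ and $p\in P_2$, use \textit{SEQ} to fire in $N_2$ a linearisation of $\alpha$ with all actions of negative effect on $p$ first; nonnegativity of the intermediate markings gives $\sum_{a\in\alpha}\max(0,-\eff_{N_2}(a)(p))\leq\psi''(s)(p)$. Conversely a reverse $\reverse{a}$ fires only where $N_2$ enables it, so no illegal reverse fires and the reachable markings stay inside $\psi(S)$; non-singleton all-reverse steps are excluded automatically because $\CRG_{N_2}$ contains none; and the mutex places kill mixed steps without disturbing set steps over $T$. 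The resulting $\CRG_N$ is exactly $\STS$ together with the legal singleton reverses, which satisfies both conditions of Eq.(\ref{eq-111}); Theorem~\ref{th-2ba} and your appeal to Theorem~\ref{cor-ddd} then complete the argument as you planned.
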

\begin{proof}
    Referring to the notation and proof of Theorem~\ref{th-2ba},
    we construct a new net $\wN'$, by adding to $\wN$ a fresh set of
    (mutex) places
    $P'=\{p_{ab}\mid a,b\in T\}$, where each $p_{ab}$
    is such that $\wM_0(p_{ab})=1$ and has four non-zero connections:
    $
    \wF(a,p_{ab})=\wF(p_{ab},a)=
    \wF(\reverse{b},p_{ab})=\wF(p_{ab},\reverse{b})=1$.

    Since all the steps in $\STS$ are sets
    $P'$ ensure that each step
    enabled at a reachable marking of $\wN'$
    is a subset of $T$ or a subset of $\rT$.
    Moreover, the enabling of such steps is not affected
    by adding $P'$, so we obtain
    $\STS^\rev\simeq\CRG_{\wN'}$
    as $\STS^\mixrev\simeq\CRG_{\wN}$ holds by Theorem~\ref{th-2ba}.
\end{proof}

As the next example shows, modifying the original \textsc{pt}-net
in Theorem~\ref{th-2ba} is unavoidable.

\begin{example}
    Figure~\ref{fig-4d}($a$) depicts a family $N_{n,m}$ of
    \textsc{pt}-nets which satisfy the
    assumptions of Theorem~\ref{th-2ba}.
    We  have $\CRG_{N_{n,m}}\not\simeq\STS^\mixrev$, where
    $\STS$ is the step reachability graph of the
    \textsc{pt}-net obtained from $N_{n,m}$
    after deleting actions $\reverse{a}$ and $\reverse{b}$.
    However, the construction from the
    proof of Theorem~\ref{th-2ba} yields
    the \textsc{pt}-net $\CRG_{\widetilde{N}_{n,m}}$,
    shown in Figure~\ref{fig-4d}($b$), satisfying
    $\CRG_{\widetilde{N}_{n,m}} \simeq\STS^\mixrev$.
\hfill$\diamondsuit$
\end{example}

\begin{figure}[h]
\vspace*{-2mm}
\begin{center}
\begin{tikzpicture}[scale=0.64]

\node[circle,draw,minimum size=0.7cm] (p1) at (1,3) {$m$};
\node[circle,draw,minimum size=0.7cm] (p2) at (7,3) {$n$};
\node[circle,draw,minimum size=0.7cm] (p3) at (1,0) {$ $};
\node[circle,draw,minimum size=0.7cm] (p6) at (7,0) {$ $};
\node[circle,draw,minimum size=0.7cm] (p7) at (4,1.5) {$k$};

\node[draw,minimum size=0.6cm] (a1) at (0,1.5){$a$};
\node[draw,minimum size=0.6cm] (a2) at (2,1.5){$\reverse{a}$};
\node[draw,minimum size=0.6cm] (b2) at (6,1.5){$\reverse{b}$};
\node[draw,minimum size=0.6cm] (b1) at (8,1.5){$b$};

\draw[-latex] (p1) to node [auto,inner sep=1pt] {} (a1);
\draw[-latex] (a1) to node [auto,inner sep=1pt] {} (p3);
\draw[-latex] (p3) to node [auto,inner sep=1pt] {} (a2);
\draw[-latex] (a2) to node [auto,inner sep=1pt] {} (p1);

\draw[-latex] (p2) to node [auto,inner sep=1pt] {} (b1);
\draw[-latex] (b1) to node [auto,inner sep=1pt] {} (p6);
\draw[-latex] (p6) to node [auto,inner sep=1pt] {} (b2);
\draw[-latex] (b2) to node [auto,inner sep=1pt] {} (p2);

\draw[-latex] (p7) to node [auto,inner sep=1pt] {} (a2);
\draw[-latex] (a2) to node [auto,inner sep=1pt] {} (p7);
\draw[-latex] (p7) to node [auto,inner sep=1pt] {} (b2);
\draw[-latex] (b2) to node [auto,inner sep=1pt] {} (p7);
\end{tikzpicture}
($a$)
~~~~~~~~~~~~
($b$)
\begin{tikzpicture}[scale=0.64]

\node[circle,draw,minimum size=0.7cm] (p1) at (1,3) {$m$};
\node[circle,draw,minimum size=0.7cm] (p2) at (7,3) {$n$};
\node[circle,draw,minimum size=0.7cm] (p3) at (1,0) {$ $};
\node[circle,draw,minimum size=0.7cm] (p6) at (7,0) {$ $};
\node[circle,draw,minimum size=0.7cm] (p7) at (4,1.5) {$k$};
\node[circle,draw,minimum size=0.7cm] (p7a)at (4,0) {$k$};
\node[circle,draw,minimum size=0.7cm] (p7b)at (4,3) {$k$};

\node[draw,minimum size=0.6cm] (a1) at (0,1.5){$a$};
\node[draw,minimum size=0.6cm] (a2) at (2,1.5){$\reverse{a}$};
\node[draw,minimum size=0.6cm] (b2) at (6,1.5){$\reverse{b}$};
\node[draw,minimum size=0.6cm] (b1) at (8,1.5){$b$};

\draw[-latex] (p1) to node [auto,inner sep=1pt] {} (a1);
\draw[-latex] (a1) to node [auto,inner sep=1pt] {} (p3);
\draw[-latex] (p3) to node [auto,inner sep=1pt] {} (a2);
\draw[-latex] (a2) to node [auto,inner sep=1pt] {} (p1);

\draw[-latex] (p2) to node [auto,inner sep=1pt] {} (b1);
\draw[-latex] (b1) to node [auto,inner sep=1pt] {} (p6);
\draw[-latex] (p6) to node [auto,inner sep=1pt] {} (b2);
\draw[-latex] (b2) to node [auto,inner sep=1pt] {} (p2);

\draw[-latex] (p7a) to node [auto,inner sep=1pt]{} (a2);
\draw[-latex] (a2) to node [auto,inner sep=1pt] {} (p7a);
\draw[-latex] (p7b) to node [auto,inner sep=1pt]{} (b2);
\draw[-latex] (b2) to node [auto,inner sep=1pt] {} (p7b);
\end{tikzpicture}\vspace*{-4mm}
\end{center}
\caption{
    \textsc{pt}-net $N_{n,m}$ with
    $k=\max(m,n)$ and $m,n\geq1$ ($a$);
    and the same net after applying the construction from
    Theorem~\ref{th-2ba} ($b$).
\label{fig-4d}}
\end{figure}
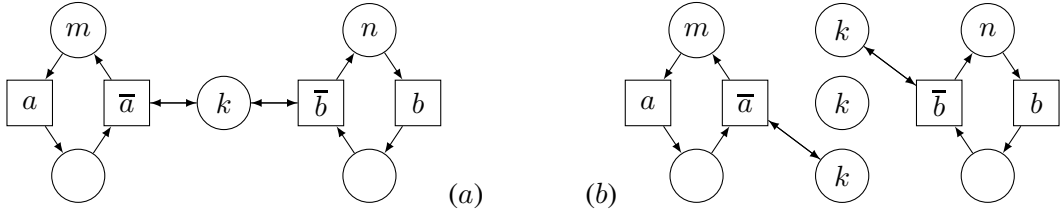

It is not possible to
drop Eq.\eqref{eq-111} from the formulation of
Theorem~\ref{th-2ba}.
The next example shows a \textsc{cest}-system which has
only one non-singleton step and is reversible in the sequential semantics,
but cannot be reversed in step
sequence semantics, even with mixed reverses.

\begin{example}
\label{ex:spikecex}
    Let us consider a step transition system $\STS$  together with a
    \textsc{pt}-net solving
    it, shown in Figure~\ref{fig:aabb}($a,b$).
    If we erase the spike between the states $v_0$ and $v_2$, and
    add all the reverses (see Figure~\ref{fig:aabb}($c$)),
    then the resulting step transition system is solvable
    (see Figure~\ref{fig:aabb}($d$)).
    However, $\STS$ cannot be reversed, as shown below.

\begin{figure}[!h]
\begin{center}
($a$)
\begin{tikzpicture}[node distance=1.3cm,>=arrow30,%
     line width=0.3mm,scale=1.0,bend angle=45]
     \tikzstyle{box}=[draw,regular polygon,thick,%
     regular polygon sides=4,minimum size=22mm, inner sep = -3pt]
\node (v0) [label=left:$v_0$] at (0,3) {$\bullet$};
\node (v1) [label=left:$v_1$] at (0,1.5) {$\bullet$};
\node (v2) [label=below left:$v_2$] at (0,0) {$\bullet$};
\node (v3) [label=below:$v_3$] at (1.5,0) {$\bullet$};
\node (v4) [label=below:$v_4$] at (3,0) {$\bullet$};

\draw[-arrow30] (v0) to node [auto,swap,inner sep=2pt] {\small $a$} (v1);
\draw[-arrow30] (v1) to node [auto,swap,inner sep=2pt] {\small $a$} (v2);
\draw[-arrow30] (v2) to node [auto,swap,inner sep=2pt] {\small $b$} (v3);
\draw[-arrow30] (v3) to node [auto,swap,inner sep=2pt] {\small $b$} (v4);
\draw[-arrow30] (v0) to [out=-60,in=60] node [auto,inner sep=1pt] {\small $aa$} (v2);
\end{tikzpicture}
~~~~~~~~~~~~~~~
\begin{tikzpicture}[scale=0.92]

\node[circle,draw,minimum size=0.7cm] (p1) at (0,3) {$\bullet\bullet$};
\node[circle,draw,minimum size=0.7cm] (p2) at (0,0) { };
\node[circle,draw,minimum size=0.7cm] (p3) at (3,0) {$\bullet\bullet$};

\node[draw,minimum size=0.6cm] (a1) at (0,1.5){$a$};
\node[draw,minimum size=0.6cm] (b1) at (1.5,0){$b$};

\draw[-latex] (p1) to node [auto,inner sep=1pt] {} (a1);
\draw[-latex] (a1) to node [auto,inner sep=1pt] {} (p2);
\draw[-latex] (p2) to [out=30, in=150] node [auto,inner sep=2pt] {\textit{2}} (b1);
\draw[-latex] (b1) to [out=-150, in=-30] node [auto,inner sep=2pt] {\textit{2}} (p2);
\draw[-latex] (p3) to node [auto,inner sep=1pt] {} (b1);
\end{tikzpicture}
($b$)\\[0.8cm]

($c$)
\begin{tikzpicture}[node distance=1.3cm,>=arrow30,%
     line width=0.3mm,scale=1.0,bend angle=45]
     \tikzstyle{box}=[draw,regular polygon,thick,%
     regular polygon sides=4,minimum size=22mm, inner sep = -3pt]
\node (v0) [label=left:$v_0$] at (0,3) {$\bullet$};
\node (v1) [label=left:$v_1$] at (0,1.5) {$\bullet$};
\node (v2) [label=below left:$v_2$] at (0,0) {$\bullet$};
\node (v3) [label=below:$v_3$] at (1.5,0) {$\bullet$};
\node (v4) [label=below:$v_4$] at (3,0) {$\bullet$};

\draw[-arrow30] (v0) to node [auto,swap,inner sep=2pt] {\small $a$} (v1);
\draw[-arrow30] (v1) to node [auto,swap,inner sep=2pt] {\small $a$} (v2);
\draw[-arrow30] (v2) to node [auto,swap,inner sep=2pt] {\small $b$} (v3);
\draw[-arrow30] (v3) to node [auto,swap,inner sep=2pt] {\small $b$} (v4);
\draw[-arrow30] (v2) to [out=60,in=-60] node [auto,swap,inner sep=1pt]
{\small $\reverse{a}$} (v1);
\draw[-arrow30] (v1) to [out=60,in=-60] node [auto,swap,inner sep=1pt]
{\small $\reverse{a}$} (v0);
\draw[-arrow30] (v3) to [out=150,in=30] node [auto,swap,inner sep=1pt]
{\small $\reverse{b}$} (v2);
\draw[-arrow30] (v4) to [out=150,in=30] node [auto,swap,inner sep=1pt]
{\small $\reverse{b}$} (v3);
\end{tikzpicture}
~~~~~~~~~~~~~~~
\begin{tikzpicture}[scale=0.92]

\node[circle,draw,minimum size=0.7cm] (p1) at (0,3) { };
\node (m1) at (0,2.9) {$\bullet\bullet$};
\node (m2) at (0,3.1) {$\bullet$};
\node[circle,draw,minimum size=0.7cm] (p2) at (0,0) { };
\node[circle,draw,minimum size=0.7cm] (p3) at (3,0) { };
\node (m3) at (3,0) {$\bullet\bullet$};
\node[circle,draw,minimum size=0.7cm] (p4) at (1.5,-1.5) { };

\node[draw,minimum size=0.6cm] (a1) at (0,1.5){$a$};
\node[draw,minimum size=0.6cm] (a2) at (1.5,1.5){$\reverse{a}$};
\node[draw,minimum size=0.6cm] (b2) at (0,-1.5){$\reverse{b}$};
\node[draw,minimum size=0.6cm] (b1) at (1.5,0){$b$};

\draw[-latex] (p1) to [out=-120, in=120] node [auto,swap,inner sep=1pt] {\textit{2}} (a1);
\draw[-latex] (a1) to [out=60, in=-60] node [auto,inner sep=1pt] {} (p1);
\draw[-latex] (a1) to node [auto,inner sep=1pt] {} (p2);
\draw[-latex] (p2) to node [auto,inner sep=1pt] {} (a2);
\draw[-latex] (a2) to node [auto,inner sep=1pt] {\textit{2}} (p1);
\draw[-latex] (p1) to [out=-10, in=100] node [auto,inner sep=1pt] {} (a2);
\draw[-latex] (a2) to [out=-20, in=110] node [auto,inner sep=1pt] {\textit{2}} (p3);
\draw[-latex] (p3) to [out=160, in=-70] node [auto,inner sep=1pt] {\textit{2}} (a2);
\draw[-latex] (p2) to [out=30, in=150] node [auto,inner sep=1pt] {\textit{2}} (b1);
\draw[-latex] (b1) to [out=-150, in=-30] node [auto,inner sep=1pt] {\textit{2}} (p2);
\draw[-latex] (p3) to node [auto,inner sep=1pt] {} (b1);
\draw[-latex] (b1) to node [auto,inner sep=1pt] {} (p4);
\draw[-latex] (b2) to [out=60, in=-60] node [auto,swap,inner sep=1pt] {\textit{2}} (p2);
\draw[-latex] (p2) to [out=-120, in=120] node [auto,swap,inner sep=1pt] {\textit{2}} (b2);
\draw[-latex] (p4) to node [auto,inner sep=1pt] {} (b2);
\draw[-latex] (b2) to [out=-40, in=-90] node [auto,inner sep=1pt] {} (p3);c
\end{tikzpicture}
($d$)
\end{center}\vspace*{-2mm}
\caption{A step transition system   $\STS$ with one spike ($a$), and a    \textsc{pt}-net
    solving it ($b$).     $\STS$ without the spike between $v_0$ and $v_2$ can    be reversed ($c,d$), but $\STS$ cannot.
\label{fig:aabb}}\vspace*{-2mm}
\end{figure}
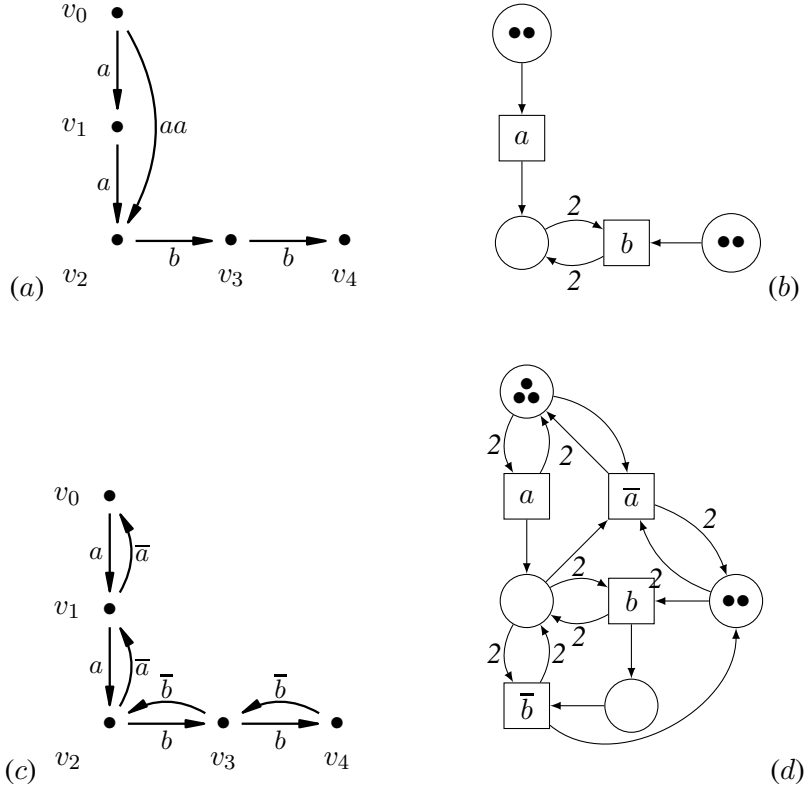

    Suppose that there is a \textsc{pt}-net $N$
    solving $\STS^\mixrev$.
    Let $M_i$ be the marking of $N$ corresponding to the state $v_i$,
    for $i=0,\dots,4$.
    Then the step $(\reverse{aa})$ is enabled at $M_2$, and
    $\reverse{a}$ is not be enabled at $M_3$ \textit{(*)}.

\medskip
    Let $p$ be any place of $N$.
    We first observe that $M_4$ is a marking, and so
    $0\leq M_4(p)=M_2(p)+2k$, where $k=\eff_N(b)(p)$.
    Hence $\frac{1}{2}\cdot M_2(p)+k \geq 0$.
    We then recall that $(\reverse{aa})$ is enabled at $M_2$, and so
    $M_2(p)\geq 2\cdot F(p,\reverse{a})$.
    Hence $\frac{1}{2}\cdot M_2(p)\geq F(p,\reverse{a})$.
    We therefore have:
\[
    M_3(p)=M_2(p)+k
    =
    \frac{1}{2}\cdot M_2(p)+k+\frac{1}{2}\cdot M_2(p)
    \geq
    0+F(p,\reverse{a})
    =F(p,\reverse{a})\;.
\]
    This means that $\reverse{a}$ is a step enabled at $M_3$,
    yielding a contradiction with \textit{(*)}.
\hfill$\diamondsuit$
\end{example}

One might expect that, as it was shown to
be the case for bounded \textsc{pt}-nets executed under
the sequential semantics~\cite{csp2016}, it is sufficient to
use \textsc{pt}-nets with split reverses also
for the reversing under the step semantics.
This, however, is not the case
as demonstrated in the following example.

\begin{example}
\label{ex:splitneg}
    Let us consider a step transition system $\STS$  together
    with a \textsc{pt}-net solving it, shown in Figure~\ref{fig-ex2}($a,b$).
    Suppose that there is a \textsc{pt}-net $N$ with split reverses
    such that  $\CRG_N$ is a split reverse
    of $\STS$.
    Moreover,
    let $M_i$ be the marking of $N$ corresponding to
    $v_i$, for $i=1,\dots,6$.

\begin{figure}[!h]
\vspace*{-2mm}
\begin{center}
($a$)~~
\begin{tikzpicture}[node distance=1.3cm,>=arrow30,%
     line width=0.3mm,scale=1.0,bend angle=45]
     \tikzstyle{box}=[draw,regular polygon,thick,%
     regular polygon sides=4,minimum size=22mm, inner sep = -3pt]

\node (v1) [label=$v_1$]     at (4,4) {$\bullet$};
\node (v2) [label=left:$v_2$] at (2,3) {$\bullet$};
\node (v3) [label=right:$v_3$] at (6,3) {$\bullet$};
\node (v4) [label=below:$v_4$] at (4,2) {$\bullet$};
\node (v5) [label=left:$v_5$] at (2,0.5) {$\bullet$};
\node (v6) [label=right:$v_6$] at (6,0.5) {$\bullet$};

\draw[-latex] (v1) to node [auto,swap,inner sep=1pt] {$a$} (v2);
\draw[-latex] (v1) to node [auto,inner sep=1pt] {$b$} (v3);
\draw[-latex] (v1) to node [auto,inner sep=1pt] {$(ab)$} (v4);
\draw[-latex] (v2) to node [auto,swap,inner sep=1pt] {$b$} (v4);
\draw[-latex] (v3) to node [auto,inner sep=1pt] {$a$} (v4);
\draw[-latex] (v2) to node [auto,inner sep=1pt] {$a$} (v5);
\draw[-latex] (v5) to node [auto,inner sep=1pt] {$a$} (v6);
\draw[-latex] (v3) to node [auto,inner sep=1pt] {$b$} (v6);
\end{tikzpicture}
~~~~~~~~
\begin{tikzpicture}[scale=.8]
\node[circle,draw,minimum size=0.7cm] (p1) at (4,4) {\textit{6}};
\node[circle,draw,minimum size=0.7cm] (p2) at (4,0) { };
\node[circle,draw,minimum size=0.7cm] (p3) at (2,3.5) {$\bullet$};
\node[circle,draw,minimum size=0.7cm] (p4) at (6,3.5) {$\bullet$};
\node[draw,minimum size=0.6cm] (a) at (3,2){$a$};
\node[draw,minimum size=0.6cm] (b) at (5,2){$b$};
\draw[-latex] (p1) to node [auto,swap,inner sep=1pt] {\textit{2}} (a);
\draw[-latex] (p1) to node [auto,inner sep=1pt] {\textit{3}} (b);
\draw[-latex] (a) to node [auto,swap,inner sep=1pt] {\textit{2}} (p2);
\draw[-latex] (b) to node [auto,inner sep=1pt] {\textit{3}} (p2);
\draw[-latex] (p3) to node [auto,inner sep=1pt] { } (a);
\draw[-latex] (p4) to node [auto,inner sep=1pt] { } (b);
\draw[-latex] (a) to node [auto,inner sep=1pt] { } (p3);
\draw[-latex] (b) to node [auto,inner sep=1pt] { } (p4);
\end{tikzpicture}
~~
($b$)
\end{center}\vspace*{-2mm}
\caption{
    Splitting is not enough to guarantee
    reversing (Example~\ref{ex:splitneg}).
    Note that $v_1$ is the initial state.
\label{fig-ex2}}
\end{figure}
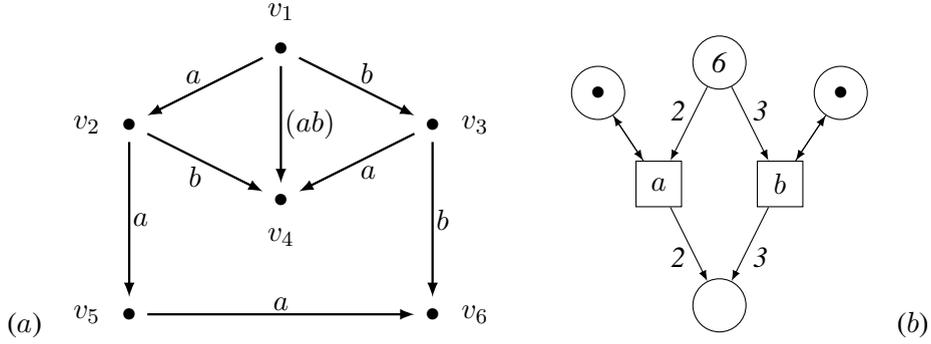

\medskip
    Let $p$ be any place of $N$.
    We first observe that the effect of
    executing the sequences of actions
    $aaa$ and $bb$ on $p$ is the same, when going from $M_1$ to $M_6$.
    Hence,
    $3\cdot\eff_N(a)(p)=2\cdot\eff_N(b)(p)$,
    and so there is an integer $k$ such that
    $\eff_N(a)(p)=2k$ and $\eff_N(b)(p)=3k$.
    With this observation, and by considering different arrows in $\STS$,
    we obtain:
\[
\begin{array}{l@{~}l@{~}l@{~~~~~~~~~~}l@{~}l@{~}l@{~~~~~~~~~~}l@{~}l@{~}l}
     M_2(p)
    & =
    & M_1(p)+2k
    & M_3(p)
    & =
    & M_1(p)+3k
    & M_4(p)
    & =
    & M_1(p)+5k
\\
     M_5(p)
    & =
    & M_1(p)+4k
    & M_6(p)
    & =
    & M_1(p)+6k\;.
\end{array}
\]
    Hence, in particular, we have:
\begin{equation}
\label{eq-ex2}
     M_3(p)\leq M_5(p)\leq M_4(p)
     ~~~~~~~~ \mbox{or} ~~~~~~~~
     M_3(p)\geq M_5(p)\geq M_4(p)\;.
\end{equation}
    Suppose now that $(\reverse{a}_{\IDX{i}}\reverse{b}_{\IDX{j}})$
    is a step reversing
    $(ab)$ at $M_4$.
    Then, by \textit{SEQ} and \textit{CE} holding
    for the concurrent reachability graphs of \textsc{pt}-nets,
    $\reverse{b}_{\IDX{j}}$ is also enabled at $M_3$.
    On the other hand, $\reverse{b}_{\IDX{j}}$ is not enabled at $M_5$.
    Then there must be a place $p$ of $N$ such that
    $M_5(p)<\PRE_N(\reverse{b}_{\IDX{j}})(p)$.
    But we also have $M_3(p)\geq \PRE_N(\reverse{b}_{\IDX{j}})(p)$ and
    $M_4(p)\geq \PRE_N(\reverse{b}_{\IDX{j}})(p)$,
    as $\reverse{b}_{\IDX{j}}$ is enabled at $M_3$ and $M_4$.
    This, however, produces a contradiction with Eq.(\ref{eq-ex2}).
\hfill$\diamondsuit$
\end{example}

Example~\ref{ex:splitneg} can be used further to show
that even
allowing inhibitor arcs in $N$ would not help.\footnote
{
    An inhibitor arc between a place $p$ and action $t$
    means that if $t$ is enabled at a marking $M$, then
    $M(p)=0$.
}
The reason is that due to the formulas
Eq.(\ref{eq-ex2}) for the markings $M_3$,
$M_4$, and $M_5$,
no inhibitor place $p$ could be empty at $M_3$ and $M_4$, and contain a
token at $M_5$.
It would therefore be useless to block $\reverse{b}_{\IDX{j}}$ at $M_5$ and still
allow the execution of $\reverse{b}_{\IDX{j}}$ at $M_3$ and $M_4$.
Thus, reversing using \textsc{pt}-nets with inhibitor arcs is also not
going to work in the general case, when considering the step semantics.
This justifies the need to use test arcs `stronger' than inhibitor arcs
in addition to the splitting of reverse actions.
Indeed, a general solution can then be obtained using
an extended model of \textsc{pt}-nets, as shown in the next section.

\section{A solution combining splitting and weighted read arcs}
\label{sect-split}

A \emph{\textsc{pt}-net with weighted read arcs}  (or \textsc{ptr}-net)
is a tuple $N=(P,T,F,R,M_0)$ such that $N'=(P,T,F,M_0)$
is a \textsc{pt}-net, and $R:P\times T\to\mathbb{N}$ is a partial function
defining \emph{read arcs}.
All the notations and concepts introduced for $N'$ are applicable to $N$ except that
a step $\alpha$ of $N$ is enabled at a marking $M$ if it is enabled at marking $M$
in $N'$ and, in addition, $R(p,t)=M(p)$,
whenever $a\in\alpha$ and $p\in P$ are such that $R(p,a)$ is defined.
Read arcs are depicted as arrows with square arrowheads and
labelled by their weights.

As the read arcs do not affect markings which result from firing
steps of actions, the concurrent reachability graphs of
\textsc{ptr}-nets satisfy \textit{CE}. Although \textit{SEQ}
may fail to hold, it is the case that if $\alpha$ is an enabled step,
then each step $\beta\leq\alpha$ is also enabled.

We first show that there is a~\textsc{pt}-net with weighted read arcs
reversing the reachability graph from Example~\ref{ex:splitneg}.

\begin{example}
\label{ex:wread}
    Recall the step transition system and the
    \textsc{pt}-net from Example~\ref{ex:splitneg}.
    The construction of a solution comes in two phases.
    In the first phase, splitting is used to reverse
    all singleton steps.
    The result, which uses two reverses for $a$ and two reverses for $b$,
    is shown in Figure~\ref{fig-exfw1}($b$).
    Note that although all the singleton steps are indeed reversed, the only
    non-singleton step $(ab$) is not.
    The second phase of
    the construction adds reverses for $a$ and $b$ which are
    simultaneously executable at $M_4$, as shown in Figure~\ref{fig-exfw1}($d$).
    A solution is then obtained by joining together
    Figures~\ref{fig-ex2}(b), \ref{fig-exfw1}($b$) and~\ref{fig-exfw1}($d$),
    by identifying the places with 6 tokens and the places with 0 tokens.
\hfill$\diamondsuit$
\end{example}

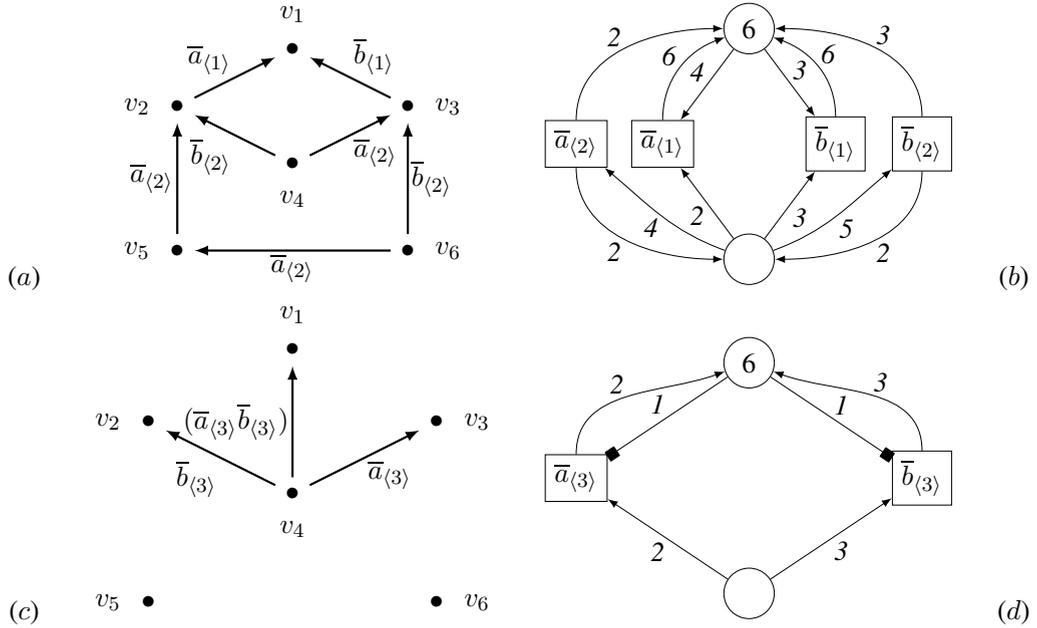
\begin{figure}[!h]
\vspace*{-1mm}
\begin{center}
\scalebox{0.95}{
\begin{tabular}{cccc}
($a$)
&
\begin{tikzpicture}[node distance=1.3cm,>=arrow30,%
     line width=0.3mm,scale=.8,bend angle=45]
     \tikzstyle{box}=[draw,regular polygon,thick,%
     regular polygon sides=4,minimum size=22mm, inner sep = -3pt]

\node (v1) [label=$v_1$]       at (4,4) {$\bullet$};
\node (v2) [label=left:$v_2$]  at (2,3) {$\bullet$};
\node (v3) [label=right:$v_3$] at (6,3) {$\bullet$};
\node (v4) [label=below:$v_4$] at (4,2) {$\bullet$};
\node (v5) [label=left:$v_5$]  at (2,0.5) {$\bullet$};
\node (v6) [label=right:$v_6$] at (6,0.5) {$\bullet$};

\draw[-latex] (v2) to node [auto,inner sep=1pt] {$\reverse{a}_{\IDX{1}}$} (v1);
\draw[-latex] (v3) to node [auto,swap,inner sep=1pt] {$\reverse{b}_{\IDX{1}}$} (v1);
\draw[-latex] (v4) to node [auto,inner sep=1pt] {$\reverse{b}_{\IDX{2}}$} (v2);
\draw[-latex] (v4) to node [auto,swap,inner sep=1pt] {$\reverse{a}_{\IDX{2}}$} (v3);
\draw[-latex] (v5) to node [auto,inner sep=1pt] {$\reverse{a}_{\IDX{2}}$} (v2);
\draw[-latex] (v6) to node [auto,inner sep=1pt] {$\reverse{a}_{\IDX{2}}$} (v5);
\draw[-latex] (v6) to node [auto,swap,inner sep=1pt] {$\reverse{b}_{\IDX{2}}$} (v3);
\end{tikzpicture}
&
\begin{tikzpicture}[scale=.8]
\node[circle,draw,minimum size=0.7cm] (p1) at (3,4) {6};
\node[circle,draw,minimum size=0.7cm] (p2) at (3,0) { };
\node[draw,minimum size=0.6cm] (a1) at (1.5,2){$\reverse{a}_{\IDX{1}}$};
\node[draw,minimum size=0.6cm] (a2) at (0,2){$\reverse{a}_{\IDX{2}}$};
\node[draw,minimum size=0.6cm] (b1) at (4.5,2){$\reverse{b}_{\IDX{1}}$};
\node[draw,minimum size=0.6cm] (b2) at (6,2){$\reverse{b}_{\IDX{2}}$};
\draw[-latex] (p1) to node [auto,swap,inner sep=1pt] {\textit{4}} (a1);
\draw[-latex] (a1) to [out=90, in=-160] node [auto,inner sep=1pt] {\textit{6}} (p1);
\draw[-latex] (a2) to [out=90, in=180] node [auto,inner sep=1pt] {\textit{2}} (p1);
\draw[-latex] (p2) to node [auto,inner sep=1pt] {\textit{2}} (a1);
\draw[-latex] (p2) to [out=160, in=-40]node [auto,inner sep=1pt] {\textit{4}} (a2);
\draw[-latex] (a2) to [out=-90, in=180]node [auto,swap,inner sep=1pt] {\textit{2}} (p2);
\draw[-latex] (p1) to node [auto,inner sep=1pt] {\textit{3}} (b1);
\draw[-latex] (b1) to [out=90, in=-20] node [auto,swap,inner sep=1pt] {\textit{6}} (p1);
\draw[-latex] (b2) to [out=90, in=0] node [auto,swap,inner sep=1pt] {\textit{3}} (p1);
\draw[-latex] (p2) to node [auto,swap,inner sep=1pt] {\textit{3}} (b1);
\draw[-latex] (p2) to [out=20, in=-140]node [auto,swap,inner sep=1pt] {\textit{5}} (b2);
\draw[-latex] (b2) to [out=-90, in=0]node [auto,inner sep=1pt] {\textit{2}} (p2);
\end{tikzpicture}
&
($b$)
\\
($c$)
&
\begin{tikzpicture}[node distance=1.3cm,>=arrow30,%
     line width=0.3mm,scale=1,bend angle=45]
     \tikzstyle{box}=[draw,regular polygon,thick,%
     regular polygon sides=4,minimum size=22mm, inner sep = -3pt]

\node (v1) [label=$v_1$]     at (4,4) {$\bullet$};
\node (v2) [label=left:$v_2$] at (2,3) {$\bullet$};
\node (v3) [label=right:$v_3$] at (6,3) {$\bullet$};
\node (v4) [label=below:$v_4$] at (4,2) {$\bullet$};
\node (v5) [label=left:$v_5$]  at (2,0.5) {$\bullet$};
\node (v6) [label=right:$v_6$] at (6,0.5) {$\bullet$};
\draw[-latex] (v4) to node [auto,inner sep=1pt] {$(\reverse{a}_{\IDX{3}}\reverse{b}_{\IDX{3}})$} (v1);
\draw[-latex] (v4) to node [auto,inner sep=1pt] {$\reverse{b}_{\IDX{3}}$} (v2);
\draw[-latex] (v4) to node [auto,swap,inner sep=1pt] {$\reverse{a}_{\IDX{3}}$} (v3);
\end{tikzpicture}
&
\begin{tikzpicture}[scale=.8]
\node[circle,draw,minimum size=0.7cm] (p1) at (4,4) {6};
\node[circle,draw,minimum size=0.7cm] (p2) at (4,0) { };
\node[draw,minimum size=0.6cm] (a1) at (1,2){$\reverse{a}_{\IDX{3}}$};
\node[draw,minimum size=0.6cm] (b1) at (7,2){$\reverse{b}_{\IDX{3}}$};
\draw[-square] (p1) to node [auto,swap,inner sep=1pt] {\textit{1}} (a1);
\draw[-latex] (a1) to [out=90, in=-160] node [auto,inner sep=1pt] {\textit{2}} (p1);
\draw[-latex] (p2) to node [auto,inner sep=1pt] {\textit{2}} (a1);
\draw[-square] (p1) to node [auto,inner sep=1pt] {\textit{1}} (b1);
\draw[-latex] (b1) to [out=90, in=-20] node [auto,swap,inner sep=1pt] {\textit{3}} (p1);
\draw[-latex] (p2) to node [auto,swap,inner sep=1pt] {\textit{3}} (b1);
\end{tikzpicture}
&
($d$)
\end{tabular} }
\end{center}\vspace*{-5mm}
\caption{
    Reversing with splitting: phase one ($a,b$), and  phase two ($c,d$).
\label{fig-exfw1}}
\end{figure}

The solution presented in Example~\ref{ex:wread} inspired
the development of a
general construction which works
for an arbitrary bounded \textsc{pt}-net.

Let $N=(P,T,F,M_0)$ be a bounded \textsc{pt}-net,
and let $n$ be an upper limit on
the sizes of steps enabled at its reachable
markings (such an $n$ always exists as the concurrent reachability
graph of $N$ is finite).
Moreover, for every marking $M\in\reach_N$,
the steps
annotating actions incoming to $M$ in the concurrent reachability
graph are
$\income_N(M)=\{\alpha\mid \exists M'\in\reach_N:
M'\xright{\alpha}_NM\}$.
Since $\CRG_N$ is a \textsc{cest}-system,
$\alpha\leq\beta\in\income_N(M)$ implies
$\alpha\in\income_N(M)$.

We then construct a \textsc{ptr}-net
$N'=(P\uplus P',T\uplus T',F\munion F', R,M_0\munion M'_0)$.
A key aspect of the construction is that for each
reachable marking $M$ of $N$, and for each maximal step\footnote
{
    That is, $\alpha\leq\beta\in\income_N(M)$
    implies $\alpha=\beta$.
}
$\alpha\in\income_N(M)$,
we add a set of fresh
actions
$T_{\alpha,M}
 =
 \{\reverse{a}_{\IDX{\alpha,M,i}}
 \mid
 a\in\alpha\wedge 1\leq i\leq\alpha(a)\}$.
We then proceed thus:
\begin{itemize}
\item
    For every new action $\reverse{a}_{\IDX{\alpha,M,i}}\in T'$:
\begin{itemize}
\item
    $\PRE_{N'}(\reverse{a}_{\IDX{\alpha,M,i}})|_P=\POST_N(a)$ and
    $\POST_{N'}(\reverse{a}_{\IDX{\alpha,M,i}})|_P=\PRE_N(a)$.

\item
    For every $b\in T$, we add a fresh (mutex) place,
    as in Figure~\ref{fig-ex3}($a$).

\item
    For every $\reverse{b}_{\IDX{\beta,M,j}}\in T'$
    with $\alpha\neq\beta$,
    we add a fresh (mutex) place, as in Figure~\ref{fig-ex3}($b$).
\end{itemize}

\item
    $P\times T'$ is the domain of $R$ and
    $R(p,\reverse{a}_{\IDX{\alpha,M,i}})=M(p)$, for all
    $p\in P$ and $\reverse{a}_{\IDX{\alpha,M,i}}\in T'$.

\item
    $M'_0\in\mult(P')$ is the marking of
    the places in $P'$ as indicated in Figure~\ref{fig-ex3}.
\end{itemize}

\begin{figure}[!ht]
\begin{center}
($a$)
~~
\begin{tikzpicture}[scale=1]
\node[circle,draw,minimum size=0.7cm] (p) at (2,0) {$n$};
\node[draw,minimum size=0.7cm] (a) at (0,0){$b$};
\node[draw,minimum size=0.7cm] (b) at (4,0){$\reverse{a}_{\IDX{\alpha,M,i}}$};

\draw[-latex] (p) to [out=30, in=150] node [auto,inner sep=2pt] {$n$} (b);
\draw[-latex] (b) to [out=-150, in=-30] node [auto,inner sep=2pt] {$n$} (p);

\draw[-latex] (a) to node [auto,inner sep=1pt] { } (p);
\draw[-latex] (p) to node [auto,inner sep=1pt] { } (a);

\node[circle,draw,minimum size=0.7cm] (p1) at (9,0) {$\bullet$};
\node[draw,minimum size=0.7cm] (a1) at (7,0){$\reverse{a}_{\IDX{\alpha,M,i}}$};
\node[draw,minimum size=0.7cm] (b1) at (11,0){$\reverse{b}_{\IDX{\beta,M,j}}$};

\draw[-latex] (p1) to node [auto,inner sep=1pt] { } (a1);
\draw[-latex] (p1) to node [auto,inner sep=1pt] { } (b1);
\draw[-latex] (a1) to node [auto,inner sep=1pt] { } (p1);
\draw[-latex] (b1) to node [auto,inner sep=1pt] { } (p1);
\end{tikzpicture}
~~
($b$)
\end{center}\vspace*{-2mm}
\caption{Places $P'$ added in the construction of $N'$.
\label{fig-ex3}}
\end{figure}
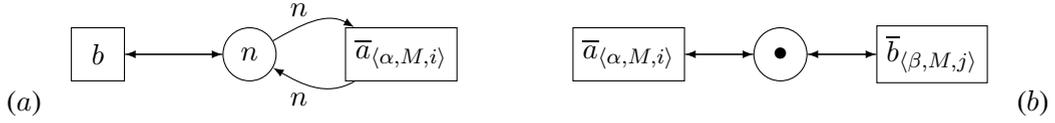

We then obtain the desired result.

\begin{theorem}
    $\CRG_{N'}$ is a split reverse of $\CRG_N$.
\end{theorem}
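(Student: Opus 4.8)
The plan is to fix $\STS=\CRG_N$ with state set $\reach_N$, and to show that under the identification $\psi\colon M\mapsto M\munion M'_0$ the graph $\CRG_{N'}$ realises a split reverse of $\STS$. First I would record that every action of $N'$ has zero effect on the places of $P'$: every arc incident to a place of $P'$ forms a self-loop (the weight-$1$ loops of the forward actions and of the mutices of Figure~\ref{fig-ex3}($b$), and the weight-$n$ loops of the reverse actions in Figure~\ref{fig-ex3}($a$)). Hence the $P'$-part of every reachable marking stays equal to $M'_0$. Since forward firings over $T$ coincide with those of $N$, while a reverse firing sends a marking with $P$-projection $M$ to one with $P$-projection $M-\eff_N(\gamma)$ (again in $\reach_N$, because $\gamma\in\income_N(M)$ gives $M-\eff_N(\gamma)\in\reach_N$), I get $\reach_{N'}=\{M\munion M'_0\mid M\in\reach_N\}$, so $\psi$ is a bijection with $\psi(M_0)=M_0\munion M'_0$.

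Next I would analyse exactly which steps are enabled at a state $M\munion M'_0$. The places of Figure~\ref{fig-ex3}($a$) (weight-$n$ self-loops against weight-$1$ self-loops on a place holding $n$ tokens) force two things simultaneously: each indexed reverse action occurs at most once in any enabled step, and no reverse action can share a step with any forward action. The read arcs $R(p,\reverse{a}_{\IDX{\alpha,M,i}})=M(p)$ force every enabled reverse step to fire at a marking whose $P$-projection is precisely the marking $M$ recorded in its index, and the places of Figure~\ref{fig-ex3}($b$) then force all reverse actions of one enabled step to carry the same index $(\alpha,M)$. Consequently every enabled step is either a pure forward step over $T$ --- and these are exactly the steps of $\STS$, since boundedness gives $|\beta|\le n$ so the Figure~\ref{fig-ex3}($a$) places never block a forward step at a reachable marking --- or a subset of a single $T_{\alpha,M}$, whose index-free image is $\reverse{\gamma}$ for some $\gamma\le\alpha$ with $\gamma\in\income_N(M)$ (using downward closure of $\income_N(M)$).

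With this dichotomy the index-free correspondence is routine, and I would verify both inclusions. Any enabled reverse step maps under $\noidx$ to $(M,\reverse{\gamma},M-\eff_N(\gamma))$ with $\gamma\in\income_N(M)$, which is exactly a transition of $\to_\rev$. Conversely, given $\gamma\in\income_N(M)$ I pick a maximal $\alpha\ge\gamma$ in the finite set $\income_N(M)$, note that $\gamma\in\income_N(M)$ yields $M\ge\POST_N(\gamma)$ and that the read arcs are satisfied at $M$, so the corresponding subset of $T_{\alpha,M}$ is enabled and realises $\reverse{\gamma}$ with target $M-\eff_N(\gamma)$. Together with $\noidx(T')\subseteq\rT$ (giving $T\cap\noidx(T')=\es$) and the fact that the forward transitions of $\CRG_{N'}$ reproduce $\STS$, this establishes $\noidx(\CRG_{N'})=\STS^\rev$.

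The main obstacle is the \textit{SEQ} clause in the definition of split reverse. Because read arcs are not monotone under change of marking, firing a proper sub-step $\reverse{\gamma_1}$ of a reverse step indexed by $(\alpha,M)$ moves the $P$-projection to $M-\eff_N(\gamma_1)\neq M$, so the remaining $(\alpha,M)$-indexed actions are no longer enabled; thus $\CRG_{N'}$ fails strict \textit{SEQ} with the same indexed labels (the data of Figures~\ref{fig-ex2} and~\ref{fig-exfw1} already exhibit this). The way I would reconcile this with the definition is through the surrogate that \textsc{ptr}-net reachability graphs \emph{do} satisfy, namely that every sub-step of an enabled step is again enabled: here $\noidx(\CRG_{N'})=\STS^\rev$ is a \textsc{cest}-system and hence satisfies \textit{SEQ}, and at the intermediate marking $M_1=M-\eff_N(\gamma_1)$ one has $\gamma_2\in\income_N(M_1)$ (apply \textit{SEQ} for $N$ to the forward step $\gamma=\gamma_1+\gamma_2$ witnessing $\gamma\in\income_N(M)$), so the construction does provide, under a different index $(\alpha',M_1)$, a reverse realising the continuation $\reverse{\gamma_2}$ at $M_1$. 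I expect the bulk of the writing to go into making this continuation argument precise and into confirming that it is exactly the form of sequentialisability operative for $\CRG_{N'}$ to count as a split reverse in the \textsc{ptr} setting.
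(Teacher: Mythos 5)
Your proposal is correct and follows essentially the same route as the paper's proof: the invariance of the $P'$-component under firing (the paper's Lemma~\ref{lem-rmc}(2--4)), the characterisation $\reach_{N'}=\{M\munion M'_0\mid M\in\reach_N\}$, the dichotomy of enabled steps into pure forward steps over $T$ and sets contained in a single $T_{\alpha,M}$ (Lemma~\ref{lem-rmc}(5)), and the same four-case verification of $\noidx(\CRG_{N'})=\CRG_N^\rev$, including the trick of choosing a maximal $\alpha\in\income_N(M)$ above a given incoming step and using downward closure of $\income_N(M)$. The one place where you genuinely go beyond the paper is the \textit{SEQ} clause in the definition of split reverse: the paper's proof establishes only the index-free identity and the reachability correspondence, and never verifies \textit{SEQ} for $\CRG_{N'}$ itself, whereas you correctly observe that literal \textit{SEQ} fails there --- the read arcs pin the $P$-marking of every $(\alpha,M)$-indexed action to $M$, so after firing a proper sub-step with non-zero effect the remaining actions of the same index are disabled; this is consistent with the paper's own caveat in Section~\ref{sect-split} that reachability graphs of \textsc{ptr}-nets may fail \textit{SEQ}, retaining only closure under sub-steps. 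Your re-indexed continuation argument (deriving $\gamma_2\in\income_N(M_1)$ at $M_1=M-\eff_N(\gamma_1)$ from \textit{SEQ} for $N$, and then invoking a fresh index $(\alpha',M_1)$) is sound, and is in fact subsumed by the identity $\noidx(\CRG_{N'})=\CRG_N^\rev$ together with the fact that $\CRG_N^\rev$ satisfies \textit{SEQ}; but note that this establishes sequentialisability only up to re-indexing, so strictly speaking the theorem holds only under that weakened reading of the \textit{SEQ} requirement --- a point the paper's proof passes over silently and which you were right to flag as the main definitional obstacle.
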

\begin{proof}
    Let $\STS=\CRG_N$ and $\STS'=\CRG_{N'}$.
    We first gather together some immediate facts
    about  $N'$.

\begin{lemma}
\label{lem-rmc}
    {\ }
\begin{enumerate}
\item
    $\reverse{a}_{\IDX{\alpha,M,i}}$ is an
    indexed   reverse of  $a$, for all
    $\reverse{a}_{\IDX{\alpha,M,i}}\in T'$ and $a\in T$.

\item
    $\eff_{N'}(\alpha) =\eff_N(\alpha)\munion\es_{P'}$,
    for every $\alpha\in\mult(T)$.

\item
    $\eff_{N'}(\gamma)= -\eff_N(\alpha)\munion\es_{P'}$,
    for all $\gamma\in\mult(T')$ and $\alpha\in\mult(T)$ such that
    $\ralpha=\noidx(\gamma)$.

\item
    $M|_{P'}=M'_0$, for every $M\in\reach_{N'}$.

\item
    If $\gamma$ is a step enabled at
    $M\in\reach_{N'}$, then $\gamma\in\mult(T)$,
    or there is $\alpha\in\income_N(M)$
    such that $\gamma$ is a set included in $T_{\alpha,M}\subseteq T'$.
\end{enumerate}
\end{lemma}
\begin{proof}[Lemma~\ref{lem-rmc}]
(1,2)
    Follow directly from the definition of
    $N'$.

(3) Follows from part (1).

(4) Follows from parts (2) and (3).

(5) By part (4), $M|_{P'}=M'_0$.
    Hence the result follows from the presence of the
    weighted read arcs $R$ and
    the mutex places shown in Figure~\ref{fig-ex3}.
\end{proof}

    We will show that $\reach_{N'}=\{M\munion M_0'\mid M\in\reach_N\}$
    and
    $\STS^\rev\simeq_\psi\noidx(\STS')$,
    where $\psi(M)=M\munion M_0'$, for every $M\in\reach_N$.

    We first observe that
    $\psi(M_0)= M_0\munion M_0'$ is the initial marking of $N'$.
    Suppose  that
    $M\in\reach_N$ is such that $\psi(M)=M\munion M_0'\in\reach_{N'}$.
    To show that the executions of steps are
    preserved by $\psi$ in both directions,
    we consider four cases, after taking into account
    Lemma~\ref{lem-rmc}(5).

\medskip
\textit{Case 1:}
    $M\xright{\alpha}_\STS M'$.
    Then,
    since $n$ in Figure~\ref{fig-ex3}($a$)
    is such that $|\alpha|\leq n$,
    the addition of the new places $P'$ does not block $\alpha$.
    Hence $\alpha$ is
    enabled at $M\munion M_0'$.
    Moreover, by Lemma~\ref{lem-rmc}(2),
    $M\munion M_0'\xright{\alpha}_{\STS'} M'\munion M_0'$.

\textit{Case 2:}
    $M\xright{\ralpha}_{\STS^\rev} M'$.
    Then $M'\xright{\alpha}_\STS M$ and $\alpha\in\income_N(M)$.
    Let $\beta$  be any maximal step in $\income_N(M)$ such that
    $\alpha\leq\beta$ (such a step exists since $\CRG_N$ is finite).
    Then there is a subset $\gamma$ of $T_{\beta,M}$ such that
    $\noidx(\gamma)=\ralpha$.
    By construction, $\gamma$ is enabled at $M\munion M_0'$.
    Hence, by Lemma~\ref{lem-rmc}(3),
    $M\munion M_0'\xright{\gamma}_{\STS'} M'\munion M_0'$.

\textit{Case 3:} $M\munion M_0'\xright{\alpha}_{\STS'} M'$ and
    $\alpha\in\mult(T)$.
    Then, by construction and Lemma~\ref{lem-rmc}(2), $\alpha$ is
    enabled at $M$
    and $M'=(M+\eff_N(\alpha))\munion M_0'$.
    Moreover,
    $M \xright{\alpha}_{\STS^\rev} M+\eff_N(\alpha)$.

\textit{Case 4:} $M\munion M_0'\xright{\gamma}_{\STS'} M'$,
    where $\gamma$ is a subset of $T_{\alpha,M}$ for some
    $\alpha\in\income_N(M)$.
    Let $\beta=\noidx(\gamma)\leq\ralpha$. Then, by construction
    and Lemma~\ref{lem-rmc}(3),
    $M'=(M-\eff_N(\beta))\munion M_0'$, $\beta$ is enabled
    at $M-\eff_N(\beta)$, and $M-\eff_N(\beta)\xright{\beta}_\STS M$.
    Hence
    $M \xright{\reverse{\beta}}_{\STS^\rev} M-\eff_N(\beta)$.
\end{proof}

We have developed a general construction which brings us to the same
level of reversibility as in the sequential case.
However, we had to pay the (costly)
price of using of a non-standard class of
read arcs.
The construction presented above
is far from being optimal. Taking as an example the solution
from Example~\ref{ex:wread}, we observe that it
would introduce 5 reverses of $a$, 4 reverses of $b$,
and a total of 31 additional places.
One can easily see that a large number of
them could be avoided, by considering
the conditions that force the introduction
of each split reversal and those requiring the
addition of the new control places.
We expect that the proposed construction could be optimised by reducing the number of split reverses and, at the same time, allowing some them to exhibit autoconcurrency (which is admitted in \textsc{ptr}-nets).

\section{Concluding remarks}
\label{sect-tdt}

In this paper, we continued a
study of reversibility in
\textsc{pt}-nets, when the step semantics
based on executing steps (multisets) of
actions rather than single actions is considered,
thus capturing \emph{real parallelism}.
In a more abstract setting, the (partial)
reversal of steps, thus generating \emph{mixed
steps} possibly containing both original and
reverse action, has been studied
in~\cite{PhiUl15}. Here we discussed
how such reversing can be done in a concrete
operational framework of \textsc{pt}-nets.

In the future work, we plan to develop an effective
solution to the synthesis problem for the step transition
systems with multiple initial states, and
address the optimisation of the general solution based on
\textsc{ptr}-nets presented in the last section.

\paragraph{Acknowledgement}

We would like to thank the anonymous reviewers for their careful reading many insightful comments and suggestions for improvement.

This research was supported by the EU Cost Action IC1405 and
the Polish National Agency for Academic Exchange
under Grant PPI/APM/2018/1/00036/U/001.
The first author was partially supported by projects TIN2015-67522-C3-3-R,
PID2019-108528RB-C22, and by Comunidad de Madrid as part of
the program S2018/TCS-4339 (BLOQUES-CM) co-funded by EIE Funds of the European Union.
The third author was partially supported by the
Polish NCN Grant 2017/27/B/ST6/02093.


\begin{thebibliography}{}
\providecommand{\url}[1]{\texttt{#1}}
\providecommand{\urlprefix}{URL }
\expandafter\ifx\csname urlstyle\endcsname\relax
  \providecommand{\doi}[1]{doi:\discretionary{}{}{}#1}\else
  \providecommand{\doi}{doi:\discretionary{}{}{}\begingroup
  \urlstyle{rm}\Url}\fi
\providecommand{\eprint}[2][]{\url{#2}}

\end{thebibliography}


\begin{thebibliography}{10}
\providecommand{\url}[1]{\texttt{#1}}
\providecommand{\urlprefix}{URL }
\expandafter\ifx\csname urlstyle\endcsname\relax
  \providecommand{\doi}[1]{doi:\discretionary{}{}{}#1}\else
  \providecommand{\doi}{doi:\discretionary{}{}{}\begingroup
  \urlstyle{rm}\Url}\fi
\providecommand{\eprint}[2][]{\url{#2}}

\bibitem{DanKri04}
Danos V, Krivine J.
\newblock Reversible Communicating Systems.
\newblock In: Proc. of CONCUR'04, volume 3170 of \emph{LNCS}, pp. 292--307.
  2004. doi:10.1007/978-3-540-28644-8\_19.

\bibitem{DanKri05}
Danos V, Krivine J.
\newblock Transactions in {RCCS}.
\newblock In: Proc. of CONCUR'05, volume 3653 of \emph{LNCS}, pp. 398--412.
  2005.  doi:10.1007/11539452\_31.

\bibitem{cohen2013systems}
Cohen M.
\newblock Systems for financial and electronic commerce, 2013.
\newblock US Patent 8,527,406.

\bibitem{PhiUli07}
Phillips I, Ulidowski I.
\newblock Reversing algebraic process calculi.
\newblock \emph{J. of Log. and Alg. Prog.}, 2007.
\newblock \textbf{73}(1-2):70--96.  doi:10.1016/j.jlap.2006.11.002.

\bibitem{LanMezSte10}
Lanese I, Mezzina C, Stefani JB.
\newblock Reversing Higher-Order {P}i.
\newblock In: Proc. of {CONCUR}'10, volume 6269 of \emph{LNCS}. 2010 pp.
  478--493. doi:10.1007/978-3-642-15375-4\_33.

\bibitem{PhiUl15}
Phillips I, Ulidowski I.
\newblock Reversibility and asymmetric conflict in event structures.
\newblock \emph{J. of Log. and Alg. Meth. in Prog.}, 2015.
\newblock \textbf{84}(6):781--805. doi:10.1016/j.jlamp.2015.07.004.


\bibitem{CarLan11}
Cardelli L, Laneve C.
\newblock Reversible structures.
\newblock In: Proc. of {CMSB}'11. 2011 pp. 131--140.
doi:10.1145/2037509.2037529.

\bibitem{DanKriSob07}
Danos V, Krivine J, Sobocinski P.
\newblock General Reversibility.
\newblock \emph{Electr. Notes Theor. Comp. Sci.}, 2007.
\newblock \textbf{175}(3):75--86. doi:10.1016/j.entcs.2006.07.036.

\bibitem{DBLP:books/daglib/0025734}
Vos AD.
\newblock Reversible Computing - Fundamentals, Quantum Computing, and
  Applications.
\newblock 2010.  doi:10.1002/9783527633999.

\bibitem{EspNie94}
Esparza J, Nielsen M.
\newblock Decidability issues for {P}etri nets.
\newblock \emph{BRICS Report Series}, 1994.
\newblock \textbf{1}(8).

\bibitem{Kor11}
Colange M, Baarir S, Kordon F, Thierry-Mieg Y.
\newblock Crocodile: a Symbolic/Symbolic tool for the analysis of Symmetric
  Nets with Bag.
\newblock In: Proc. of ATPN'11. Springer, 2011 pp. 338--347.
doi:10.1007/978-3-642-21834-7\_20.

\bibitem{HujDelKor15}
Hujsa T, Delosme JM, Kordon AM.
\newblock On the Reversibility of Live Equal-Conflict {P}etri Nets.
\newblock In: Proc. of {ATPN}'15, volume 9115 of \emph{LNCS}. 2015 pp.
  234--253.  doi:10.1007/978-3-319-19488-2\_12.

\bibitem{ApaAyb08}
\"Ozkan H, Aybar A.
\newblock A Reversibility Enforcement Approach for {P}etri Nets Using
  Invariants.
\newblock \emph{{WSEAS} Transactions on Systems}, 2008.
\newblock \textbf{7}:672--681.  ISSN:1109-2777.

\bibitem{DBLP:conf/coordination/MelgrattiMU19}
Melgratti HC, Mezzina CA, Ulidowski I.
\newblock Reversing {P/T} Nets.
\newblock In: {COORDINATION} 2019, volume 11533 of \emph{Lecture Notes in
  Computer Science}. Springer, 2019 pp. 19--36.  doi:10.1007/978-3-030-22397-7\_2.

\bibitem{DBLP:conf/rc/PhilippouP18}
Philippou A, Psara K.
\newblock Reversible Computation in Petri Nets.
\newblock In: RC 2018, volume 11106 of \emph{Lecture Notes in Computer
  Science}. Springer, 2018 pp. 84--101.  	arXiv:1804.04607 [cs.LO].

\bibitem{Bologna}
Barylska K, Koutny M, Mikulski {\L}, Pi\k{a}tkowski M.
\newblock Reversible computation vs. reversibility in {P}etri nets.
\newblock \emph{Sci. Comput. Program.}, 2018.
\newblock \textbf{151}:48--60.  doi:10.1016/j.scico.2017.10.008.

\bibitem{BouFin97}
Bouziane Z, Finkel A.
\newblock Cyclic Petri net reachability sets are semi-linear effectively
  constructible.
\newblock In: Proc. of Infinity'97. 1997 pp. 15--24.  doi:10.1016/S1571-0661(05)80423-2.

\bibitem{LanMik}
Mikulski {\L}, Lanese I.
\newblock Reversing Unbounded Petri Nets.
\newblock In: Proc. of ATPN'19, volume 11522 of \emph{LNCS}. Springer, 2019 pp.
  213--233.  doi:10.1007/978-3-030-21571-2\_13.

\bibitem{csp2016}
Barylska K, Erofeev E, Koutny M, Mikulski {\L}, Pi\k{a}tkowski M.
\newblock Reversing Transitions in Bounded {P}etri Nets.
\newblock \emph{Fundam. Inform.}, 2018.
\newblock \textbf{157}(4):341--357.
doi:10.3233/FI-2018-1631.

\bibitem{ATAED}
Barylska K, Best E, Erofeev E, Mikulski {\L}, Pi\k{a}tkowski M.
\newblock Conditions for {P}etri Net Solvable Binary Words.
\newblock \emph{ToPNoC}, 2016.
\newblock \textbf{11}:137--159.  doi:10.1007/978-3-662-53401-4\_7.

\bibitem{PSC}
Erofeev E, Barylska K, Mikulski {\L}, Pi\k{a}tkowski M.
\newblock Generating All Minimal {P}etri Net Unsolvable Binary Words.
\newblock In: Proc. of {PSC}'16. 2016 pp. 33--46.

\bibitem{PN18}
de~Frutos{-}Escrig D, Koutny M, Mikulski {\L}.
\newblock An Efficient Characterization of {P}etri Net Solvable Binary Words.
\newblock In: Proc. of {ATPN}'18. 2018 pp. 207--226.
doi:10.1007/978-3-319-91268-4\_11.

\bibitem{pn2019}
de~Frutos{-}Escrig D, Koutny M, Mikulski {\L}.
\newblock Reversing Steps in Petri Nets.
\newblock In: Proc. of ATPN'19, volume 11522 of \emph{LNCS}. Springer, 2019 pp.
  171--191. doi:10.1007/978-3-030-21571-2\_11.

\bibitem{KleKou02}
Kleijn H, Koutny M.
\newblock Causality Semantics of Petri Nets with Weighted Inhibitor Arcs.
\newblock In: Proc. of CONCUR'02, volume 2421 of \emph{LNCS}. Springer, 2002
  pp. 531--546. 	 doi:10.1007/3-540-45694-5\_35.

\bibitem{DBLP:conf/ac/BadouelD96}
Badouel E, Darondeau P.
\newblock Theory of Regions.
\newblock In: LNCS 1491. 1996 pp. 529--586.

\bibitem{DBLP:journals/fuin/DarondeauKPY09}
Darondeau P, Koutny M, Pietkiewicz{-}Koutny M, Yakovlev A.
\newblock Synthesis of Nets with Step Firing Policies.
\newblock \emph{Fundam. Inform.}, 2009.
\newblock \textbf{94}(3-4):275--303.   doi:10.3233/FI-2009-132.

\bibitem{DBLP:books/daglib/0032298}
Reisig W.
\newblock Understanding {P}etri Nets - Modeling Techniques, Analysis Methods,
  Case Studies.
\newblock 2013. doi:10.1007/978-3-642-33278-4.
\end{thebibliography}
\end{document}